\theoremstyle{plain}
\newtheorem{theorem}{Theorem}
\newtheorem{proposition}[theorem]{Proposition}
\newtheorem{lemma}[theorem]{Lemma}
\newtheorem{corollary}[theorem]{Corollary}
\theoremstyle{definition}
\newtheorem{definition}[theorem]{Definition}
\newtheorem{example}[theorem]{Example}
\newcommand{\nondig}{\Theta^{\text{snd}}}
\newcommand{\nonsi}{\Theta^{\text{dns}}}
\newcommand{\pp}{\mathsf{p}}
\newcommand{\qq}{\mathsf{q}}
\newcommand{\aaa}{\mathsf{a}}
\newcommand{\bb}{\mathsf{b}}
\newcommand{\xx}{\mathsf{x}}
\newcommand{\yy}{\mathsf{y}}
\newcommand{\zz}{\mathsf{z}}
\newcommand{\Scal}{\numset{S}}
\newcommand{\II}{\mathds{I}}
\newcommand{\alphabet}{\mathcal{P}}
\newcommand{\Ical}{\mathcal{I}}
\newcommand{\Dcal}{\mathcal{D}}
\newcommand{\bst}{\mathcal{B}\!\alphabet\!\timedomain}
\newcommand{\bsr}{\mathcal{B}\!\alphabet\!\reals}
\newcommand{\bsz}{\mathcal{B}\!\alphabet\!\integers}
\newcommand{\bsrd}{\mathcal{B}\!\alphabet\!\reals_{\delta}}
\newcommand{\bstgeneral}{\overline{\bst}}
\newcommand{\logictrue}{\top}
\newcommand{\logicfalse}{\bot}
\newcommand{\genop}[2]{\MTLoperator{\mathsf{O}}{#1}{}{#2}}
\newcommand{\PL}{\ensuremath{\mathrm{PL}}}
\newcommand{\MTL}{\ensuremath{\mathrm{MTL}}}
\newcommand{\LTL}{\ensuremath{\mathrm{LTL}}}
\newcommand{\flatMTL}{\ensuremath{\flat\MTL}}
\newcommand{\flatMTLplus}[1]{\ensuremath{{#1}\text{-}\flat\MTL}}
\newcommand{\expspace}{\languageclass{EXPSPACE}{}{}{}{}}
\DeclareRobustCommand{\SqBrackoperator}[4] 
{\ensuremath{%
	 \ifthenelse{\not \equal{#2}{}} {{#1}_{{#2}}^{#3}} {{#1}}
	 \ifthenelse{\not \equal{#4}{}} {\!\left[{#4}\right]} {}
  }%
}%
\newcommand{\modelstime}[1]{\models_{#1}}
\newcommand{\mdr}{\modelstime{\reals}}
\newcommand{\mdz}{\modelstime{\integers}}
\newcommand{\mdt}{\modelstime{\timedomain}}
\newcommand{\world}[2]{\ldoubsq#1\rdoubsq_{#2}}
\newcommand{\worldnb}[3]{\ldoubsq#1\rdoubsq_{#2}^{#3}}
\newcommand{\worldw}[2]{\langle\!\langle#1\rangle\!\rangle_{#2}}
\newcommand{\samp}[1]{\SqBrackoperator{\sigma}{\delta, z}{}{#1}}
\newcommand{\unsamp}[1]{\SqBrackoperator{\sigma}{\delta, z}{-1}{#1}}
\newcommand{\adapt}[1]{\SqBrackoperator{\eta}{\delta}{\reals}{#1}}
\newcommand{\unadapt}[1]{\SqBrackoperator{\eta}{\delta}{\integers}{#1}}
\newcommand{\unadaptinv}[1]{\SqBrackoperator{\eta}{\delta}{\integers^{-1}}{#1}}
\newcommand{\uap}{un\-der-ap\-prox\-i\-ma\-tion}
\newcommand{\oap}{over-ap\-prox\-i\-ma\-tion}
\newcommand{\becfMTL}[1]{\MTLoperator{\bigtriangleup}{}{}{#1}}
\newcommand{\becpMTL}[1]{\MTLoperator{\overleftarrow{\bigtriangleup}}{}{}{#1}}
\newcommand{\prevsamp}[1]{\mathrm{\Omega}(#1)}
\newcommand{\nextsamp}[1]{\mathrm{O}(#1)}
\newcommand{\prevsampdist}[1]{\mathrm{\omega}(#1)}
\newcommand{\nextsampdist}[1]{\mathrm{o}(#1)}
\DeclareMathOperator*{\lcm}{lcm}
\DeclareMathOperator*{\sinc}{sinc}
\newcommand{\Sys}{\Phi_{\mathsf{sys}}}
\newcommand{\system}{\phi_{\mathsf{sys}}}
\newcommand{\systembis}{\psi_{\mathsf{sys}}}
\newcommand{\prop}{\phi_{\mathsf{prop}}}
\newcommand{\verif}{\phi_{\mathsf{verif}}}
\newcommand{\overap}[1]{\SqBrackoperator{\mathrm{O}}{\delta}{}{#1}}
\newcommand{\underap}[1]{\SqBrackoperator{\mathrm{\Omega}}{\delta}{}{#1}}
\newcommand{\overmodel}[1]{#1^\mathrm{O}}
\newcommand{\undermodel}[1]{#1^\mathrm{\Omega}}
\newcommand{\tff}[2]{{#1} \stackrel{+}{\rightharpoondown} {#2}}
\newcommand{\tfp}[2]{{#1} \stackrel{-}{\rightharpoondown} {#2}}
\title{A Theory of Sampling \\ for Continuous-time Metric Temporal Logic\footnote{A preliminary version of this paper appeared in \cite{FR06,FPR08-FM08}.}}
\author{Carlo A. Furia and Matteo Rossi}
\date{\today}
\begin{document}

\maketitle

\begin{abstract}
This paper revisits the classical notion of sampling in the setting of real-time temporal logics for the modeling and analysis of systems.
The relationship between the satisfiability of Metric Temporal Logic (MTL) formulas over continuous-time models and over discrete-time models is studied.
It is shown to what extent discrete-time sequences obtained by sampling continuous-time signals capture the semantics of MTL formulas over the two time domains.
The main results apply to ``flat'' formulas that do not nest temporal operators and can be applied to the problem of reducing the verification problem for MTL over continuous-time models to the same problem over discrete-time, resulting in an automated partial practically-efficient discretization technique.
\end{abstract}

\newpage

\tableofcontents

\newpage

\section{Introduction}
\label{sec:introduction}

Computer programs are inherently \emph{discrete} items, and they are typically modeled through techniques from the discrete mathematics domain.
If, however, one shifts from a computer-centric to a \emph{system-centric} view \cite{FMMR08}, physical elements, which are best described through \emph{continuous} signals, enter the picture and must be taken into account throughout the system development process.
This is the challenge that is at the core of the research on real-time and hybrid systems \cite{HS06}.
The challenge has two facets: {\em modeling} systems that integrate continuous and discrete components and {\em analyzing} properties of the integrated systems.

In this article we develop some techniques for the modeling and analysis of real-time systems with mixed continuous- and discrete-time components.
Our approach targets the well-known Metric Temporal Logic (MTL \cite{Koy90,AH93}) as formal notation, and it is is based on the classical notion of {\em sampling}.

Sampling is a widely-used technique in the engineering domain, in particular in signal processing and automatic control, whereby continuous-time signals are transformed in discrete-time counterparts that are more amenable to digital processing \cite{sampling-book}.
In systems where continuous- and discrete-time components interact, a \emph{sampler} constitutes the interface between these two classes of components, as it retains some \emph{partial} information of the continuous-time processes and passes it to the discrete-time parts (see Figure \ref{fig:sampler}).
\begin{figure}[htbp]
  \centering
  \includegraphics[scale=.75]{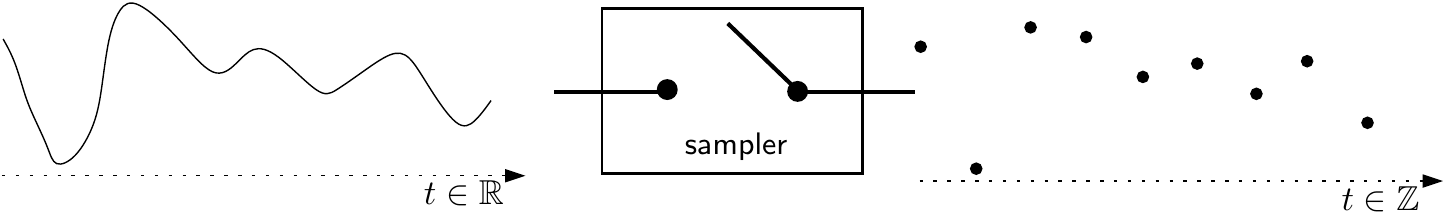}
  \caption{A system with a sampler.}
  \label{fig:sampler}
\end{figure}
The classical sampling theory determines qualitatively how much information is preserved in this discretization process, and when the continuous-time signal can be perfectly reconstructed solely from its discrete-time samplings.

The sampling approach described in this article borrows from these well-known ideas, but revisits them in the very different setting of formal modeling and analysis of systems with real-time temporal logics.\footnote{Section \ref{sec:sampling-theorem} discusses in some detail how the classical notion of sampling and the one presented here are related, though different.}

In our approach, the behavior of components is modeled by means of MTL formulas.
MTL formulas can be given a continuous-time or discrete-time semantics by interpreting them over sets of continuous- or discrete-time \emph{behaviors}\footnote{Also called (Boolean) \emph{signals} \cite{MNP06,HR04}.} respectively (see Section \ref{sec:MTL} for formal definitions).
Accordingly, MTL formulas can model either continuous- or discrete-time components.
The problem of providing a unified semantics is then solved by introducing simple \emph{syntactic transformations} to be applied to MTL formulas when moving their interpretation from continuous to discrete time, or \emph{vice versa}.
These transformations take into account the information that is preserved under sampling.
That is, given a continuous-time formula $\phi_\reals$, its (transformed) discrete-time counterpart $\phi_\integers$ is satisfied precisely by all discrete-time behaviors that are obtained by sampling the continuous-time behaviors satisfying $\phi_\reals$.
Information preservation requires however an additional requirement --- called non-Berkeleyness --- on the sampled continuous-time behaviors to ensure that they are sufficiently ``slow'' with respect to the speed of the sampling process.

In summary, the contribution of this article is twofold.
First, it introduces conditions that allow us to precisely relate the satisfiability of continuous-time MTL formulas to that of some suitable, ``sampled'', discrete-time counterparts.
Second, it exploits this relation to define an effective, albeit partial, automated analysis technique that can be used to prove (or disprove) properties of systems with continuous-time components by reduction to the (usually simpler) discrete-time case.
In this paper we do not deal with aspects regarding its implementation and performance in practice, which have been dealt with in related work \cite{FPR08-FM08,FPR08-ICFEM08,BFPR09-SEFM09}.
Rather, we focus on the mathematical concepts underlying the relation between continuous- and discrete-time semantics of MTL.

This article is structured as follows.
Section \ref{sec:MTL} introduces the MTL notation and its formal semantics, and discusses the expressiveness of some of its significant subsets.
Section \ref{sec:sampling} presents the notions of sampling and sampling invariance for MTL, and proves some fundamental results about significant subsets of the MTL language that are amenable to the sampling technique introduced beforehand, and hence are suitable to define a unified semantics.
Section \ref{sec:verification} shows how the results of Section \ref{sec:sampling} can be applied to the problem of automated verification of continuous-time systems described with MTL.
Finally, Section \ref{sec:related} provides an overview of related work, focusing on a few well-known approaches that are similar to ours; Section \ref{sec:conclusion} briefly concludes.

Let us remark that the mathematical distinction between continuous and merely dense time models does not impact the results of this paper.
Accordingly, we will essentially use the two terms as synonyms.

\section{Metric Temporal Logic(s)} \label{sec:MTL}
The symbols $\integers$, $\rationals$, and $\reals$ denote the sets of integer, rational, and real numbers, respectively.
For a set $\Scal$, $\Scal_{\sim c}$ with $\sim$ one of $<, \leq, >, \geq$ and $c \in \Scal$ denotes the subset $\{ s \in \Scal \mid s \sim c \} \subseteq \Scal$; for instance $\integers_{\geq 0} = \naturals$ denotes the set of nonnegative integers (i.e., naturals).

An \emph{interval} $I$ of a set $\Scal$ is a convex subset $\langle l, u \rangle$ of $\Scal$ with $l, u \in \Scal$, $\langle$ one of $(, [$, and $\rangle$ one of $), ]$.
An interval is \emph{empty} iff it contains no points; an interval is \emph{punctual} (or singular) iff $l = u$ and the interval is closed (i.e., it contains exactly one point).
The \emph{length} of an interval is given by $|I| = \max(u-l, 0)$.
$-I$ denotes the interval $\langle -u, -l \rangle$, and $I \oplus t = t \oplus I$ denotes the interval $\langle t+l, t+u \rangle$, for any $t \in \Scal$.
For any numbers $x,y$ with $y > 0$, $x \pm \infty/y$ is defined to be $\pm \infty$.
We occasionally represent intervals by pseudo-arithmetic expressions such as $> x$, $\geq x$, $< x$, $\leq x$, and $= x$ for $(x, \infty)$, $[x, \infty)$, $[0, x)$, $[0, x]$ and $[x,x]$, respectively.
For simplicity, we sometimes relax the notation for unbounded intervals and represent them with square --- rather than round --- closing brackets.

\subsection{Behaviors} \label{sec:behaviors}
In this paper, $\timedomain$ denotes any of the two time domains $\reals$ and $\integers$.
It is not difficult to adapt most notions and results to their mono-infinite counterparts $\reals_{\geq 0}$ and $\naturals$, and possibly to other dense and discrete sets suitable to represent time domains \cite{Koy92}.
Also, let $\alphabet$ be a set of propositional letters.

\begin{definition}[Behaviors]
A \emph{(timed) behavior} over time domain $\timedomain$ and alphabet $\alphabet$ is a function $b: \timedomain \rightarrow 2^{\alphabet}$ which maps every time instant $t \in \timedomain$ to the set of propositions $b(t) \in 2^\alphabet$ that hold at $t$.
The set of all behaviors over time domain $\timedomain$ and alphabet $\alphabet$ is denoted by $\bstgeneral$.
\end{definition}

$b|_P$ is a behavior over alphabet $P \subseteq \alphabet$, denoting the projection of $b$ over $P$.
For a behavior $b$ over some \emph{dense} time domain $\timedomain$, let $\tau(b)$ denote the ordered (multi)set of its discontinuity points, that is $\tau(b) = \{ x \in \timedomain \mid b(x) \neq \lim_{t \rightarrow x^-}b(t) \text{, or } b(x) \neq \lim_{t \rightarrow x^+}b(t) \text{, or any of the two limits does not exist}\}$, \linebreak where each point that is both a right- and a left-discontinuity appears twice in $\tau(b)$.
When $\timedomain$ is a discrete set, $\tau(b)$ is defined to be the time domain $\timedomain$ itself.
If $\tau(b)$ is discrete, we can represent it as an ordered sequence (possibly unbounded to $\pm \infty$) of elements $\tau_i$ for $i \in \II$; it will be clear from the context whether we are treating $\tau(b)$ as a sequence or as a set.
Elements in $\tau(b)$ are called the \emph{change} (or \emph{transition}) instants of $b$.
$\tau(b)$ can be unbounded to $\pm \infty$ only if $\timedomain$ has the same property.

\paragraph{Non-Zenoness.}
Since one is typically interested only in behaviors that represent physically meaningful behaviors, it is common to assume some regularity requirements.
In particular, it is customary to assume \emph{non-Zenoness}, also called \emph{finite variability} \cite{HR04}.

\begin{definition}[Non-Zenoness]
A behavior $b \in \bstgeneral$ is non-Zeno iff $\tau(b)$ has no accumulation points.
The set of all non-Zeno behaviors is denoted by $\bst$.
\end{definition}

Notice that discrete-time behaviors are trivially non-Zeno.
Also, it should be clear that every non-Zeno behavior can be represented through a canonical countable sequence of adjacent intervals of $\timedomain$ such that $b$ is constant on every such interval.
Namely, for $b \in \bst$, $\iota(b)$ is an ordered sequence of intervals $\iota(b) = \{ I_i = \langle^i l_i, u_i \rangle^i \}$ for $i \in \II$ such that:
\begin{enumerate}
\item \emph{(cardinality of $\iota(b)$)} $\II$ is an interval of $\integers$ with cardinality $|\tau(b)| + 1$ (in particular, $\II$ is finite iff $\tau(b)$ is finite, otherwise $\II$ is denumerable);
\item \emph{(partitioning of $\timedomain$)} the intervals in $\iota(b)$ form a partition of $\timedomain$;
\item \emph{(intervals change at transition points)} for all $i \in \II$ we have $\tau_i = u_i = l_{i+1}$;
\item \emph{($b$ constant over intervals)} for all $i \in \II$, for all $t_1, t_2 \in I_i$ we have $b(t_1) = b(t_2)$.
\end{enumerate}
Note that $\iota(b)$ is unique for any fixed set $\tau(b)$ or, in other words, is unique up to translations of interval indices.
Transitions at instants $\tau_i$ corresponding to singular intervals $I_i$ are called \emph{pointwise} (or \emph{punctual}) transitions.

\paragraph{Non-Berkeleyness.}
Some of the results of this paper will require a stronger regularity requirement than non-Zenoness, named ``non-Berkeleyness'' \cite{FPR08-FM08}.

\begin{definition}[Non-Berkeleyness] \label{def:non-berkeley}
A \emph{behavior} $b \in \bst$ is \emph{non-Berkeley} for $\delta \in \reals_{> 0}$ iff every maximal constancy interval contains a closed interval of size $\delta$.
The set of all behaviors in $\bst$ that are non-Berkeley for $\delta$ is denoted by $\bst_\delta$; with the notation introduced above, it is $\bst_\delta = \{b \in \bst \mid \forall I \in \iota(b): \exists t \in I: [t,t+\delta] \subseteq I\}$.
A behavior that is \emph{not} non-Berkeley for any positive $\delta$ is called \emph{Berkeley}.
\end{definition}

Any behavior where some proposition holds at an isolated point $t$ is Berkeley: any $\delta > 0$ is such that $[t, t+\delta] \not\subseteq [t,t]$.

\subsection{MTL: Syntax and Semantics}
This section defines formally the syntax and semantics of MTL.

\subsubsection{MTL Syntax} \label{sec:mtl-syntax}
In this paper only \emph{propositional} temporal logics are considered; correspondingly, the elementary building block of temporal logic formulas is defined.
\begin{definition}[Propositional formulas]
Propositional formulas $\pi \in \PL$ are defined by the grammar $\pi ::= \pp \mid \neg \pp \mid \pi_1 \wedge \pi_2 \mid \pi_1 \vee \pi_2$ --- for $\pp \in \alphabet$ --- as Boolean combinations of propositional letters.
\end{definition}

MTL formulas are obtained by combining propositional formulas with the \emph{bounded until} $\untilMTL{I}{}$ metric modality, as well as its past counterpart \emph{bounded since} $\sinceMTL{I}{}$.
We assume a \emph{negation normal form} (NNF) syntax, where negations are pushed down to atomic propositions, as this will simplify the presentation of the results.
Correspondingly, \emph{bounded release} $\relMTL{I}{}$ and \emph{bounded trigger} $\redMTL{I}{}$ operators --- duals to the \emph{until} and \emph{since} operators, respectively --- are introduced as primitive modalities.
\begin{definition}[MTL formulas]
MTL formulas for a time domain $\timedomain$ are defined by the grammar:
\begin{equation*}
\phi ::=  \pi \mid \phi_1 \wedge \phi_2 \mid \phi_1 \vee \phi_2 \mid \untilMTL{I}{\phi_1, \phi_2} \mid \sinceMTL{I}{\phi_1, \phi_2} \mid \relMTL{I}{\phi_1, \phi_2} \mid \redMTL{I}{\phi_1, \phi_2}
\end{equation*}
where $\pi \in \PL$ ranges over propositional formulas and $I$ ranges over (possibly unbounded) intervals of the time domain $\timedomain$ with endpoints in $\timedomain \cap \rationals \cup \{\pm \infty\}$ (notice that negative endpoints are allowed).
\end{definition}
Henceforth, we will drop interval $I$ in modalities when it is $[0, +\infty)$.

The results of this paper are focused on the \emph{flat} subset \flatMTL{} of MTL, whose formulas do not nest temporal operators.\footnote{Different notions of flatness for (metric) temporal logic have been introduced in the literature \cite{Dam99,CC00,BMOW07}.}
\begin{definition}[Flat MTL formulas]
\flatMTL{} formulas for a time domain $\timedomain$ are defined by the grammar:
\begin{equation*}
\phi ::=  \pi \mid \phi_1 \wedge \phi_2 \mid \phi_1 \vee \phi_2 \mid \untilMTL{I}{\pi_1, \pi_2} \mid \sinceMTL{I}{\pi_1, \pi_2} \mid \relMTL{I}{\pi_1, \pi_2} \mid \redMTL{I}{\pi_1, \pi_2}
\end{equation*}
where $\pi,\pi_1,\pi_2 \in \PL$ range over propositional formulas and $I$ ranges over (possibly unbounded) intervals of the time domain $\timedomain$ with endpoints in $\timedomain \cap \rationals \cup \{\pm \infty\}$.
\end{definition}

In the remainder of the paper, the following other MTL subsets will be needed.
\begin{itemize}
\item \emph{\LTL{}} is the MTL subset where all intervals $I$ are $[0,+\infty)$ (i.e., all operators are \emph{qualitative}).
\item \emph{\flatMTLplus{\Upsilon}}, with $\Upsilon$ any given set of MTL formulas, is the MTL subset defined by the same grammar as \flatMTL{}, except that $\pi$ is allowed to range over $\PL \cup \Upsilon$.
\item An MTL formula is \emph{discrete-endpoint} if all its intervals have endpoints in $\integers \cup \{\pm \infty\}$.
\item An MTL formula is \emph{dense-endpoint} if all its intervals have endpoints in $\reals \cup \{\pm \infty\}$.
It is clear that any MTL formula is dense-endpoint; we will use this redundant terminology whenever useful to characterize formulas to be interpreted over a dense time domain, as opposed to a discrete one.
\end{itemize}

\subsubsection{MTL Semantics} \label{sec:mtl-semantics}
We define MTL semantics parametrically with respect to the time domain $\timedomain$.
\begin{definition}[MTL semantics] \label{def:mtl-semantics}
Let $b \in \bst$ be a behavior over $\alphabet$ and time domain $\timedomain$.
For $t \in \timedomain$, MTL semantics is defined recursively as follows.\footnote{In this paper, the notation $b(t) \mdt \phi$ replaces the more common $b,t \mdt \phi$.} \\
\begin{tabular}{l c l}
  $b(t) \mdt \pp$  & \ \ \ iff\ \ \ & $\pp \in b(t)$ \\
  $b(t) \mdt \neg \pp$  & \ \ \ iff\ \ \ & $\pp \not\in b(t)$ \\
  $b(t) \mdt \phi_1 \wedge \phi_2$  & \ \ \ iff\ \ \ & $b(t) \mdt \phi_1$ and $b(t) \mdt \phi_2$ \\
  $b(t) \mdt \phi_1 \vee \phi_2$  & \ \ \ iff\ \ \ & $b(t) \mdt \phi_1$ or $b(t) \mdt \phi_2$ \\
  $b(t) \mdt \untilMTL{I}{\phi_1, \phi_2}$  & \ \ \ iff\ \ \ & $\exists d \in I$ s.t.: $t+d \in \timedomain$, $b(t+d)\mdt\phi_2$, and \\
                                   & & $\forall t' \in [0,d) \oplus t \cap \timedomain$ it is $b(t') \mdt\phi_1$ \\
  $b(t) \mdt \sinceMTL{I}{\phi_1, \phi_2}$  & \ \ \ iff\ \ \ & $\exists d \in I$ s.t.: $t-d \in \timedomain$, $b(t-d)\mdt\phi_2$, and \\
                                   & & $\forall t' \in -[0,d) \oplus t \cap \timedomain$ it is $b(t') \mdt\phi_1$ \\
  $b(t) \mdt \relMTL{I}{\phi_1, \phi_2}$  & \ \ \ iff\ \ \ & $\forall d \in I$ s.t. $t+d \in \timedomain$ it is: $b(t+d)\mdt\phi_2$ or \\
                                   & & $\exists t' \in [0,d) \oplus t \cap \timedomain$ s.t.~$b(t') \mdt\phi_1$ \\
  $b(t) \mdt \redMTL{I}{\phi_1, \phi_2}$  & \ \ \ iff\ \ \ & $\forall d \in I$ s.t. $t-d \in \timedomain$ it is: $b(t-d)\mdt\phi_2$ or \\
                                   & & $\exists t' \in -[0,d) \oplus t \cap \timedomain$ s.t.~$b(t') \mdt\phi_1$ \\
\end{tabular} \\
If $b(t) \mdt \phi$ holds for all $t \in \timedomain$ we write $b \mdt \phi$.
\end{definition}

We denote by $\world{\phi}{\timedomain}$ (respectively $\worldnb{\phi}{\timedomain}{\delta}$) the set of all non-Zeno (respectively non-Berkeley for $\delta$) models of formula $\phi$ over $\timedomain$, i.e., $\world{\phi}{\timedomain} \triangleq \{b \in \bst \mid b \mdt \phi \}$ (respectively $\worldnb{\phi}{\timedomain}{\delta} \triangleq \{b \in \bst_\delta \mid b \mdt \phi \}$).
If $\world{\phi}{\timedomain}$ is empty, $\phi$ is called $\timedomain$-unsatisfiable, and $\timedomain$-satisfiable otherwise.
If $\world{\phi}{\timedomain}$ coincides with $\bst$, $\phi$ is called $\timedomain$-valid.
Similar definitions are assumed for $\timedomain^\delta$-satisfiability and $\timedomain^\delta$-validity, with respect to $\worldnb{\phi}{\timedomain}{\delta}$.
For $b \in \bst$, we define the derived behavior $b_\phi$ that represents the truth value of $\phi$ over $b$ as:
\begin{equation*}
  b_{\phi}(t) = 
  \begin{cases}
	 b(t) \cup \{ \phi \}  &  \text{if } b(t) \mdt \phi \\
    b(t)  &  \text{otherwise.}
  \end{cases}
\end{equation*}
For propositional letters $\aaa, \bb$, $b^{\aaa\setminus\bb}$ denotes the behavior obtained from $b$ by renaming $\aaa$ into $\bb$.

Notice that \MTL{} is closed under complement, even if this is not apparent in the definition of its syntax.
More precisely, one can check that $b(t) \not\mdt \untilMTL{I}{\phi_1, \phi_2}$ holds if and only if $b(t) \mdt \relMTL{I}{\neg \phi_1, \neg \phi_2}$ does, thus providing an indirect definition of negation.
A similar relation holds for \emph{since} with respect to \emph{trigger}.

Definition \ref{def:mtl-semantics} considers the basic modalities in their \emph{non-strict} versions as, for instance, $\untilMTL{}{\phi_1, \phi_2}$ requires $\phi_1$ to hold at the current instant; i.e., it constrains the present as well as the strict future.
Also, a \emph{global satisfiability semantics} is assumed, where $b \mdt \phi$ entails that $\phi$ holds at all time instants $t \in \timedomain$.
This is different than the more common \emph{initial satisfiability semantics} $\mdt^{\mathrm{init}}$ where $b \mdt^{\mathrm{init}} \phi$ is defined as simply $b(0) \mdt \phi$.
Section \ref{sec:relations} discusses the impact of these choices on expressiveness.

\subsubsection{Derived Operators and Variants}
Standard abbreviations are assumed, such as for $\logictrue$, $\logicfalse$, $\Rightarrow$, and $\Leftrightarrow$.

It is also customary to introduce a number of derived temporal operators; those used in this paper are listed in Table \ref{tab:mtl-derived}.
Let us remark that the definitions of Table \ref{tab:mtl-derived} do not nest temporal operators, hence they define $\flatMTL$ formulas if their arguments are propositional formulas.

The first set of derived operators are the quantitative versions of the well-known \emph{eventually} $\diamondMTL{}{}$ and \emph{globally} $\boxMTL{}{}$ modalities of classic (qualitative) linear temporal logic.
On the other hand, $\Alw{\phi}$ declares $\phi$ to hold \emph{always}, i.e., at all time instants in the future and in the past, whereas $\Som{\phi}$ declares $\phi$ to hold \emph{sometimes}.

The second set of derived operators are the \emph{nowon} $\nowonMTL{}$ modality and its variant $\becfMTL{}$, with their past counterparts \emph{uptonow} $\uptonowMTL{}$ and $\becpMTL{}$.
Over dense-time non-Zeno behaviors, $\nowonMTL{\phi}$ holds at $t$ whenever there is a non-empty open interval $E = (0, \epsilon)$ (with $\epsilon > 0$) such that $\phi$ holds continuously over $t \oplus E$.
On the other hand, $\becfMTL{\phi}$ holds at $t$ whenever $\phi$ holds \emph{nowon} or $\phi$ holds precisely at $t$.
These operators are useful only over dense time, as they can be seen to be trivially equivalent to their arguments over discrete time.

Finally, the last set of derived operators introduce so-called \emph{matching variants} \cite{FR07-FORMATS07} of the basic \emph{until} and \emph{release} modalities.
For instance \emph{matching until} $\untilMMTL{ }{\phi_1, \phi_2}$ requires both arguments $\phi_1$ and $\phi_2$ to hold together at some future instant, whereas $\untilMTL{ }{\phi_1, \phi_2}$ demands only $\phi_2$ to hold at some future instant.
The next section discusses the impact of these variants on expressiveness.

\begin{table}[htb]
\begin{center}
  \begin{tabular}{|c @{$\quad \triangleq \quad$} c|}
	 \hline
    \textsc{Operator}        & \textsc{Definition} \\
    \hline
	 $\diamondMTL{I}{\phi}$    &   $\untilMTL{I}{\logictrue, \phi}$ \\
	 $\diamondPMTL{I}{\phi}$    &   $\sinceMTL{I}{\logictrue, \phi}$ \\
	 $\boxMTL{I}{\phi}$    &   $\relMTL{I}{\logicfalse, \phi}$ \\
	 $\boxPMTL{I}{\phi}$    &   $\redMTL{I}{\logicfalse, \phi}$ \\
    $\Alw{\phi}$  &  $\boxPMTL{}{\phi} \wedge \boxMTL{}{\phi}$ \\
    $\Som{\phi}$  &  $\diamondPMTL{}{\phi} \vee \diamondMTL{}{\phi}$ \\
    \hline
    $\nowonMTL{\phi}$     &    $\untilMTL{> 0}{\phi, \logictrue} \vee (\neg \phi \wedge \relMTL{> 0}{\phi, \logicfalse})$ \\
    $\uptonowMTL{\phi}$     &    $\sinceMTL{> 0}{\phi, \logictrue} \vee (\neg \phi \wedge \redMTL{> 0}{\phi, \logicfalse})$ \\
    $\becfMTL{\phi}$     &    $\phi \vee \nowonMTL{\phi}$ \\
    $\becpMTL{\phi}$     &    $\phi \vee \uptonowMTL{\phi}$ \\
   \hline
    $\untilMMTL{I}{\phi_1, \phi_2}$  &  $\untilMTL{I}{\phi_1, \phi_2 \wedge \phi_1}$ \\
    $\sinceMMTL{I}{\phi_1, \phi_2}$  &  $\sinceMTL{I}{\phi_1, \phi_2 \wedge \phi_1}$ \\
    $\relMMTL{I}{\phi_1, \phi_2}$  &  $\relMTL{I}{\phi_1, \phi_2 \vee \phi_1}$ \\
    $\redMMTL{I}{\phi_1, \phi_2}$  &  $\redMTL{I}{\phi_1, \phi_2 \vee \phi_1}$ \\
    \hline
  \end{tabular}
\caption{MTL derived temporal operators}
\label{tab:mtl-derived}
\end{center}
\end{table}

The value of propositional formulas change at most every $\delta$ time units over non-Berkeley behaviors $\bsrd$; more precisely the following holds.
\begin{lemma} \label{lemma:changepoints}
Let $b \in \bsrd$, $t \in \reals$, and $\pi \in \PL$, such that $b(t) \mdr \pi$.
There exist $c_n, c_p \in \reals$ with $c_n - c_p \geq \delta$ and $c_p \leq t \leq c_n$ such that: (1) $b(t') \mdr \pi$ for all $t' \in (c_p, c_n)$; (2) $b(c_n) \mdr \boxMTL{(0,\delta)}{\neg \pi} \vee \boxMTL{}{\pi}$; and (3) $b(c_p) \mdr \boxPMTL{(0,\delta)}{\neg \pi} \vee \boxPMTL{}{\pi}$.
If in particular $c_n - c_p = \delta$ then also $b(c_n) \mdr \pi$ and $b(c_p) \mdr \pi$.
\end{lemma}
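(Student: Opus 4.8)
The plan is to treat the truth of $\pi$ as a Boolean signal and to read off $c_p,c_n$ as the endpoints of the maximal $\pi$-true region around $t$, using non-Berkeleyness to control both its length and its endpoints. First I would observe that, since $\pi\in\PL$ is propositional, the truth of $b(s)\mdr\pi$ depends only on $b(s)$; hence it is constant on every constancy interval of $b$, its change points form a subset of $\tau(b)$, and it is therefore itself non-Zeno. The two facts I will extract from $b\in\bsrd$ are: (i) every maximal constancy interval of $b$ has length at least $\delta$, since a punctual or otherwise length-$<\delta$ interval cannot contain a closed interval of size $\delta$; and, more sharply, (ii) a maximal constancy interval of length \emph{exactly} $\delta$ must be the closed interval $[x,x+\delta]$, because an open or half-open interval of length $\delta$ contains no closed subinterval of size $\delta$. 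A first corollary of (i) is that, going right (resp.\ left) from a $\pi$-true point, the next $\pi$-false constancy interval, if any, has length $\ge\delta$.

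Next I would let $G$ be the maximal interval containing $t$ on which $\pi$ holds everywhere --- the union of the consecutive constancy intervals around $t$ carrying $\pi$ --- and, when its endpoints are finite, set $c_p=\inf G$ and $c_n=\sup G$. Condition (1) is then immediate because $(c_p,c_n)\subseteq G$. Since $G$ contains the whole constancy interval $I_0\ni t$, fact (i) gives $c_n-c_p\ge|I_0|\ge\delta$. For (2), if $c_n=\sup G$ is finite then the $\pi$-false constancy interval abutting $G$ on the right has length $\ge\delta$, so $\pi$ is false throughout $(c_n,c_n+\delta)$ and $\boxMTL{(0,\delta)}{\neg\pi}$ holds at $c_n$; condition (3) is the mirror image at $c_p$. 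For the equality clause, $c_n-c_p=\delta$ forces $G=I_0$ to be a single constancy interval of length exactly $\delta$, which by fact (ii) is closed, so $\pi$ holds at both $c_p$ and $c_n$.

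The remaining, and most delicate, point is the handling of unbounded $G$ and of open endpoints, which is where the bookkeeping must be done carefully. If $\sup G=+\infty$ then $\pi$ holds on $[t,+\infty)$, so I would replace the infinite endpoint by a finite $c_n\ge t$ chosen past the left endpoint of $G$: for any such $c_n$ the disjunct $\boxMTL{}{\pi}$ holds at $c_n$. I must still guarantee $c_n-c_p\ge\delta$ together with the equality clause, and this is exactly where fact (ii) is needed again: if the leftmost $\pi$-true constancy interval is open at $c_p$ (so $\pi$ fails at $c_p$), then by (ii) that interval already has length strictly greater than $\delta$, which lets me pick $c_n\ge t$ making the window strictly longer than $\delta$ and thereby rendering the equality clause vacuous; otherwise $\pi$ holds at $c_p$ and any window of length $\ge\delta$ is admissible. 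The symmetric argument handles $\inf G=-\infty$ (using $\boxPMTL{}{\pi}$), and combining the two sides yields admissible finite $c_p\le t\le c_n$ in every case. I expect this endpoint-and-unboundedness case analysis --- rather than any single computation --- to be the real work of the proof.
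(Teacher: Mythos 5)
Your proof is correct and follows essentially the same route as the paper, whose own proof is only a one-line appeal to Definition \ref{def:non-berkeley} (non-Berkeley behaviors are piecewise constant with constancy intervals of length at least $\delta$) --- precisely your facts (i) and (ii) applied to the maximal $\pi$-true region around $t$. Your elaboration of the endpoint cases is sound; the only redundancy is the appeal to fact (ii) in the $\sup G=+\infty$ case, where one can simply take $c_n\geq\max(t,c_p+2\delta)$ to make the equality clause vacuous.
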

\begin{proof}
The proof follows easily from Definition \ref{def:non-berkeley}, which entails that non-Berkeley behaviors $b \in \bsrd$ are piecewise-constant functions of time whose discontinuities are at least $\delta$ time units apart.
\end{proof}

\subsection{Relations with Other Metric Temporal Logics} \label{sec:relations}
This section discusses expressiveness, decidability, and complexity results about MTL as has been introduced above.

\subsubsection{Expressiveness}
When defining the semantics of MTL formulas, several different choices are possible.

\paragraph{Global vs.~initial satisfiability.}
First of all, notice that initial satisfiability is unambiguous only for mono-infinite time domains \cite{PP04}.
For such domains, it is clear that the global satisfiability semantics can be reduced to local satisfiability, as $b \mdt \phi$ holds if and only if $b(0) \mdt \Alw{\phi}$ does.
Conversely, local satisfiability is also reducible to global satisfiability, as for instance $b \models_{\reals_{\geq 0}} \boxPMTL{> 0}{\logicfalse} \Rightarrow \phi$ is equivalent to $b(0) \models_{\reals_{\geq 0}} \phi$ for the mono-infinite time domain $\reals_{\geq 0}$, where $\boxPMTL{> 0}{\logicfalse}$ holds only at time $0$.
Therefore the two definitions of satisfiability are essentially equivalent for generic MTL formulas.

However, global satisfiability is significantly more expressive than initial satisfiability for \emph{flat} $\flatMTL$ formulas \cite{FR07-FORMATS07}.
In particular, the expressiveness of the flat fragment is non-trivial under such global semantics as it corresponds to an implicit nesting of a qualitative temporal operator over the simpler initial satisfiability semantics.
This entails that most common (real-time) properties --- such as (bounded) response and (bounded) invariance \cite{Koy90} --- can be easily expressed with flat formulas under the global satisfiability semantics.
This is the main reason for adopting such a semantics in this paper whose results are focused on the flat fragment of MTL.

\paragraph{Flat vs.~nesting.}
The syntactic restriction of flatness is also a semantic restriction, i.e., $\flatMTL$ is strictly less expressive than full MTL.
This is the case not only for dense time (which has been proved in \cite{FR07-FORMATS07}) but also for discrete time (which has been proved in \cite{EW96,TW04,KS05,DS02} already for qualitative temporal logic), and regardless of whether a global or initial satisfiability semantics is assumed.

On the other hand, if we consider the weaker requirement of inter-reducibility of the satisfiability problems over global satisfiability, $\flatMTL$ is as powerful as full MTL.
In other words, given any MTL formula $\phi$, it is possible to build a flat formula $\phi' \in \flatMTL$ which is globally satisfiable if and only if $\phi$ is.
In general, $\phi'$ ``flattens'' $\phi$ by introducing additional propositional letters that are equivalent to matching nested sub-formulas in $\phi$, as shown in the following.
\begin{example}
Let $\phi = \pp \Rightarrow \diamondMTL{< 3}{\nowonMTL{\boxMTL{= 2}{\qq}}}$.
Let us introduce the auxiliary propositions $\aaa_1$ and $\aaa_2$ defined as equivalent to $\boxMTL{=2}{\qq}$ and $\nowonMTL{\aaa_1}$ respectively.
Hence, the derived \emph{flat} formula
\begin{equation*}
\phi' \quad = \quad \left(\pp \Rightarrow \diamondMTL{< 3}{\aaa_2}\right) \ \wedge\  \left(\aaa_1 \Leftrightarrow \boxMTL{=2}{\qq}\right) \ \wedge\  \left(\aaa_2 \Leftrightarrow \nowonMTL{\aaa_1}\right)
\end{equation*}
is equi-satisfiable to $\phi$ under the global satisfiability semantics.
\end{example}

Details of this straightforward idea are shown in \cite{Fur07,MP07} for dense time models, but it should be clear that a similar result can be proved for discrete time as well.

Let us finally consider dense-time behaviors that are non-Berkeley.
In this case, the expressiveness gap between flat and nesting formulas still exists \cite{FR07-FORMATS07}.
On the other hand, ``flattening'' is more intricate and cannot be done as with generic behaviors without breaking non-Berkeleyness as shown in the following example and discussed at greater length in Section \ref{sec:generalizations}.
\begin{example} \label{ex:nonflattable}
MTL formula $\psi = \Som{}\left(\uptonowMTL{\neg \pp} \wedge \nowonMTL{\pp}\right)$ describes behaviors where there exists a transition of proposition $\pp$ from false to true.
$\psi$ is satisfiable over non-Berkeley behaviors $\bsr_\delta$ for any positive $\delta$.
However, consider the flattening $\overline{\psi}$ of $\psi$ built according to the procedure described above.
\begin{equation*}
  \overline{\psi} \quad = \quad \Som{\aaa} \ \wedge \ \left(\aaa \Leftrightarrow \uptonowMTL{\neg \pp} \wedge \nowonMTL{\pp} \right)
\end{equation*}
Any behavior $b \in \bsr$ such that $b \mdr \overline{\psi}$ requires $\aaa$ to hold at some instant $t$.
However, sub-formula $\aaa \Leftrightarrow \uptonowMTL{\neg \pp} \wedge \nowonMTL{\pp}$ forces $\aaa$ to hold only exactly at the transition points of $\pp$, pointwisely: any such $b$ is Berkeley because $\uptonowMTL{\neg \pp} \wedge \nowonMTL{\pp}$ holds only at isolated points.
Hence, $\psi$ and $\overline{\psi}$ are not equi-satisfiable over non-Berkeley behaviors for any $\delta > 0$.
\end{example}

\paragraph{Strictness and matchingness.}
The semantics of an \emph{until} formula with arguments $\phi_1, \phi_2$ requires the first argument $\phi_1$ to hold over an interval $J = \langle 0, d \rangle \oplus t$ from current instant $t$.
$J$ can be taken to be open, half-open (with the left or right end-point included), or closed.
Correspondingly four variants of \emph{until} are possible.
Each of them is labeled \emph{strict} if $J$ is open to the left and \emph{non-strict} otherwise; and \emph{matching} if $J$ is closed to the right and \emph{non-matching} otherwise \cite{FR07-FORMATS07}.
The most common variant of the \emph{until} operator is strict and non-matching, as it is simple to see that the three other variants are reducible to it.
On the contrary, this paper adopts a non-strict non-matching \emph{until} as basic operator, as the presentation of the results is more natural with non-strict operators.

In related work \cite{FR07-FORMATS07}, we proved that all variants carry the same expressive power for MTL over dense- and discrete-time behaviors.
On the contrary, strict \emph{until} is more expressive than non-strict \emph{until} for \emph{flat} \flatMTL{} formulas.\footnote{\cite{FR07-FORMATS07} proves this for dense-time behaviors, but the same can be seen to hold over discrete-time behaviors as well.}

\subsubsection{Decidability and Complexity}
It is well-known that full MTL is undecidable over (non-Zeno) dense-time behaviors \cite{AH93}.
The same holds for flat \flatMTL{} as its satisfiability problem is inter-reducible to the same problem for full MTL.

On the contrary, MTL becomes fully decidable over discrete time, with \expspace{}-complete complexity \cite{AH93}.
MTL is also fully decidable over non-Berkeley dense-time behaviors $\bst_\delta$ for any fixed $\delta$, with the same complexity as over discrete time \cite{FR08-FORMATS08}.


\section{Sampling Invariance} \label{sec:sampling}
Throughout this section we assume $\reals$ as dense (and continuous) time domain, and $\integers$ as discrete time domain.
It should be noted, however, that nearly all definitions and results can be adapted with little effort to work with different pairs of dense and discrete time domains as well, most notably the nonnegative reals and the naturals.

\subsection{Definitions}

\subsubsection{Sampling Functions}
A \emph{sampling function} is a mapping between dense-time behaviors and discrete-time behaviors such that the latter are obtained by ``sampling'' --- in some sense --- the values of the former.
We use $\varsigma_{z,\delta}$ to denote a generic sampling function that is parametric with respect to a sampling period $\delta$ and an origin $z$.

The \emph{canonical sampling} is a particular sampling function that models an idealized sampling process where a discrete-time behavior is obtained from a dense-time behavior by observing it at all instants corresponding to integer multiples of a chosen period $\delta$. 

\begin{definition}[Canonical sampling of a behavior]
Let $b \in \bsr$ be a dense-time behavior, $\delta \in \reals_{> 0}$ a positive real, and $z \in \reals$ a basic offset.
The \emph{canonical sampling} $\samp{b}$ of $b$ is the discrete-time behavior in $\bsz$ defined by:
\begin{equation*}
  \forall k \in \integers:\quad \samp{b}(k) \ =\  b(z + k\delta)
\end{equation*}
We call $\delta$ the \emph{sampling period} and $z$ the \emph{origin} of the sampling.
Note that $\samp{}$ is onto and total,\footnote{That is, it is defined for every $b \in \bsr$.} for any $\delta, z$.

Conversely, given a discrete-time behavior $d \in \bsz$, $\unsamp{z}$ is the set of all dense-time behaviors such that their sampling is $d$.
\begin{equation*}
  \unsamp{d} \ = \ \{ b \in \bsr \mid d = \samp{b} \}
\end{equation*}
\end{definition}

\subsubsection{On Dense- vs.~Discrete-Time Semantics}
Consider some MTL formula $\phi$ that can be interpreted over both dense- and discrete-time behaviors.
Its semantics is characterized by its dense-time models $\world{\phi}{\reals}$ on the one hand, and by its discrete-time models $\world{\phi}{\integers}$ on the other hand.
These two sets correspond to two different semantics for the \emph{same} formula.
The fact that the discrete time domain is a subset of the dense time domain prompts us to investigate the existence of a general relation linking the two sets $\world{\phi}{\reals}$ and $\world{\phi}{\integers}$.
More precisely, we seek simple conditions under which elements in $\world{\phi}{\integers}$ are precisely those obtained from elements in $\world{\phi}{\reals}$ by applying the sampling function $\sigma_{\delta, z}$ for some $\delta, z$.

This ideal requirement must be relaxed to some extent to be achievable in practice, for a number of reasons that are outlined informally in the following example.
\begin{example} \label{ex:densevsdiscrete}
There are three fundamental discrepancies between discrete- and dense-time semantics that must be accommodated to reconcile them according to the notion of sampling.

The first has to do with differences in terms of \emph{time units}.
Consider for instance formula $\boxMTL{\leq 2}{\pp}$; when interpreted over dense time, it states that $\pp$ holds for 2 time units.
If we switch to a discrete-time interpretation and consider a sampling period of, say, $\delta = 3/10$, we would like the formula to refer to the same ``sampled'' interval.
Hence, it should be changed to $\boxMTL{\leq 20/3}{\pp}$ because the dense-time interval of length 2 becomes a discrete-time interval containing $2 / (3/10)$ sampling instants.

However, $\boxMTL{\leq 20/3}{\pp}$ cannot yet be interpreted over discrete time, as $20/3$ is not an integer; this shows a discrepancy in terms of \emph{granularity} between dense and discrete sets.
Of course, this problem can be solved by rounding the rational value to the nearest integer value, by taking its floor $6$ or its ceiling $7$.
More precisely, whether to round up or down is decided in order to have a \emph{conservative} approximation of the semantics.
Intuitively, this means that intervals in ``universal'' formulas such as $\boxMTL{I}{\pp}$ are rounded down --- thus shrinking the interval into a smaller one ---, whereas intervals in ``existential'' formulas such as $\diamondMTL{I}{\pp}$ are rounded up --- thus expanding the interval into a larger one.

A similar granularity problem arises when interpreting a discrete-endpoint formula over dense time.
Consider the example of formula $\diamondMTL{[1,2]}{\pp}$ that requires $\pp$ to hold one or two (discrete) time units in the future.
In terms of dense-time units, $\pp$ must occur over the interval $[\delta, 2\delta] = [3/10, 3/5]$.
However, the formula must hold also \emph{in between} sampling instants when interpreted over dense-time behaviors.
We will show that this feature of the dense-time semantics can be accommodated by expanding symmetrically the scaled interval into $[(1-1)\delta, (2+1)\delta] = [0, 9/10]$.

The last subtlety has to do with the \emph{change speed} of dense-time behaviors with respect to the sampling period.
Consider behavior $b$ over proposition $\pp$ such that $\pp$ holds for less than $\delta$ time units, say over $[\delta/4,\delta/2]$.
Formula $\Som{\pp}$ is clearly satisfied by $b$ over discrete time.
However, any sampling $\samp{b}$, for any $z \in (-\delta/2, \delta/4) \cup (\delta/2, \infty)$, does not have any sampling instant within $[\delta/4, \delta/2]$, and formula $\Som{\pp}$ is not satisfied by any such $\samp{b}$.
This shows that only dense-time behaviors where state changes are sufficiently sparse can guarantee that formula satisfaction is preserved while moving to a sampled discrete-time semantics.
\end{example}

The discrepancies outlined above are bridged by introducing suitable notions.
The concept of slowly-changing behavior is captured by the non-Berkeleyness constraint, introduced in Section \ref{sec:behaviors}.
The following notion of \emph{adaptation function} formalizes instead changes to intervals in MTL formulas, which take discrepancies between time units and granularities into account.

\begin{definition}[Adaptation] \label{def:adaptation}
A \emph{$\reals$-to-$\integers$ adaptation} is a mapping from dense-endpoint to discrete-endpoint MTL formulas; a \emph{$\integers$-to-$\reals$ adaptation} is a mapping from discrete-endpoint to dense-endpoint MTL formulas.
\end{definition}

\subsubsection{Sampling Invariance} \label{sec:sampling-invariance}
We can finally introduce the definition of \emph{sampling invariance} over non-Berkeley behaviors, which captures appropriately a notion of equivalence under sampling of models of MTL formulas.

\begin{definition}[Sampling invariance] \label{def:samplinginvariance}
Let $\phi$ be an MTL formula over alphabet $\alphabet$; $\upsilon^\reals, \upsilon^\integers$ a $\reals$-to-$\integers$ and $\integers$-to-$\reals$ adaptation, respectively; $\delta$ a sampling period; and $\varsigma_{\delta, z}$ a sampling function.
  \begin{itemize}
	 \item $\phi$ is \emph{closed under sampling} (c.u.s.) iff for any non-Berkeley behavior $b \in \bsrd$ and any origin $z$:
		\begin{equation*}
		  b \in \worldnb{\phi}{\reals}{\delta} \qquad \text{ implies } \quad
                 \varsigma_{\delta,z}[b] \in \world{\upsilon^{\reals}[\phi]}{\integers}
		\end{equation*}

    \item $\phi$ is \emph{closed under inverse sampling} (c.u.i.s.) iff for any discrete-time behavior $b \in \bsz$ and any origin $z$:
		\begin{equation*}
		  b \in \world{\phi}{\integers} \qquad \text{ implies } \quad
        \forall b' \in \varsigma_{\delta, z}^{-1}[b] \cap \bsrd{} \text{ it is } b' \in \worldnb{\upsilon^{\integers}[\phi]}{\reals}{\delta}
		\end{equation*}

		\item $\phi$ is \emph{sampling invariant} (s.i.) iff it is c.u.s.\ if it is a dense-endpoint formula \emph{and} it is c.u.i.s.\ if it is a discrete-endpoint formula.
  \end{itemize}
\end{definition}

Definition \ref{def:samplinginvariance} depends on several parameters: $\reals$-to-$\integers$ and $\integers$-to$\reals$ adaptations, a sampling period $\delta$, and a sampling function $\varsigma_{\delta, z}$.
In the following we will use the expression ``sampling invariance (c.u.s.\ or c.u.i.s) with respect to'' to highlight a particular choice for the parameters (when they are not obvious from the context).

\subsection{Illustrative Examples}
\begin{figure}[!tp]
\centering
\subfigure[Behaviors $c_1, d_1$]{
\includegraphics[width=.9\textwidth]{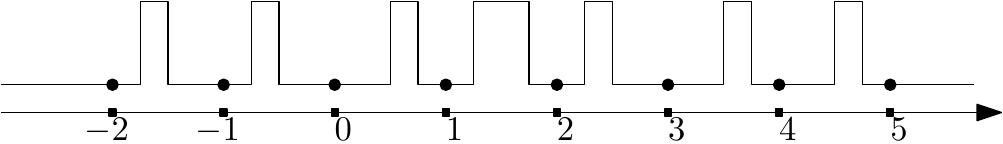}
\label{fig:ex1}
}
 \subfigure[Behaviors $c_2, d_2$]{
\includegraphics[width=.9\textwidth]{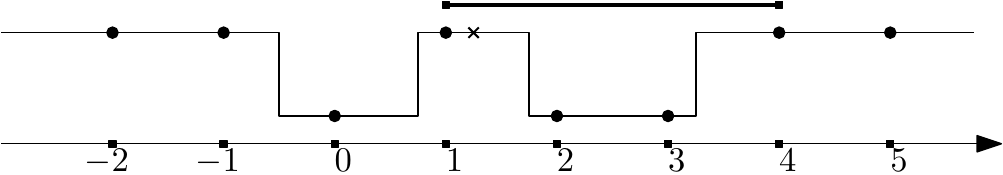}
\label{fig:ex2}
}
\subfigure[Behaviors $c_3, d_3$]{
\includegraphics[width=.9\textwidth]{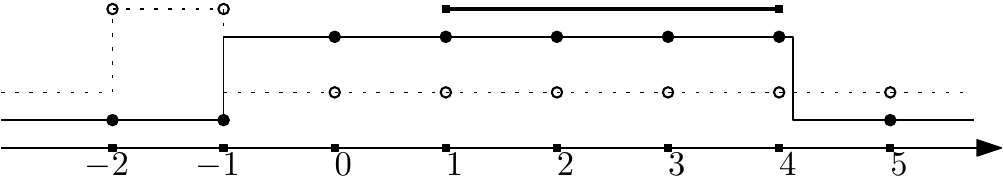}
\label{fig:ex3}
}
\subfigure[Behaviors $c_4, d_4$]{
\includegraphics[width=.9\textwidth]{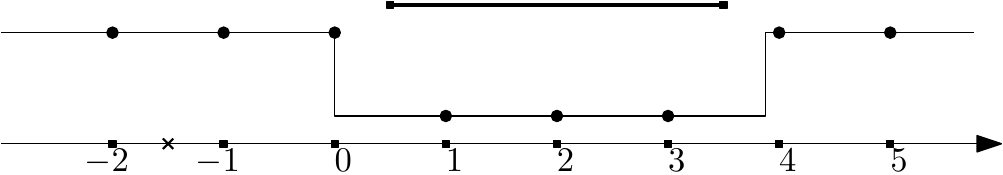}
\label{fig:ex4}
}
\subfigure[Behaviors $c_5, d_5$]{
\includegraphics[width=.9\textwidth]{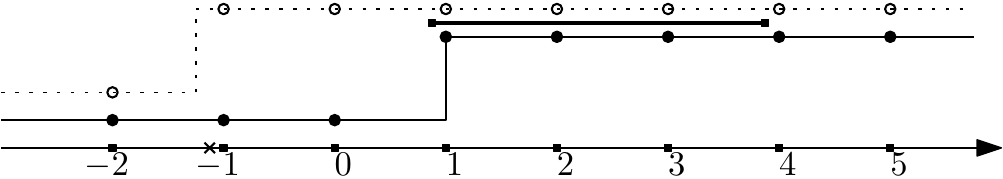}
\label{fig:ex5}
}
\subfigure[Behaviors $c_6, d_6$]{
\includegraphics[width=.9\textwidth]{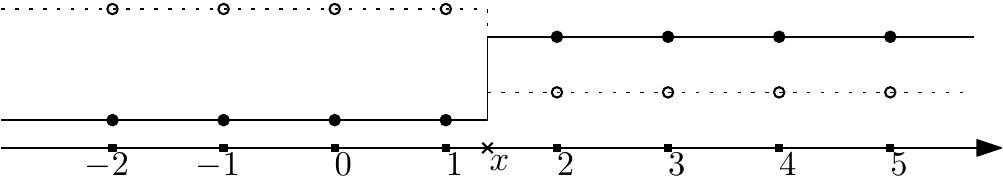}
\label{fig:ex6}
}
\caption{In all the pictures, the behavior of $\pp$ is pictured by solid lines (in dense time) and discs (in sampled discrete time); the behavior of $\qq$ is pictured by dotted lines (in dense time) and circles (in sampled discrete time); the higher value in any behavior corresponds to a $\logictrue$ truth value; for $i = 1, \ldots, 6$, $c_i$ denotes the dense-time behavior and $d_i$ its discrete-time sampling.}
\label{fig:all-examples}
\end{figure}

Before delving into the technical details of sampling invariance for generic MTL formulas, this sub-section illustrates the fundamental ideas that underlie the results of the paper.
The presentation is deliberately partly informal and based on examples, with the goal of stimulating the intuition that substantiates the choice of adaptations (in Section \ref{sec:canon-adapt}) and the rationale of the technical proofs (in Section \ref{th:sinv-flat}).
In all the examples of this sub-section, we assume a sampling period $\delta = 1$.

The first example demonstrates the need for non-Berkeley behaviors with the same $\delta$ as the chosen sampling period.
Consider formula $\diamondMTL{[2,5]}{\pp}$ and the behavior for $\pp$ in Figure \ref{fig:ex1}.
$\diamondMTL{[2,5]}{\pp}$ holds everywhere in dense time, but $\pp$ keeps on switching truth value in such a way that it is false at every sampled instant.
If the sampling period is not commensurate to the ``speed'' of the dense-time behavior there is always the possibility of similarly twisted behaviors which prevent achieving c.u.s.\ even for very simple formulas.
This justifies using the same $\delta$ for the non-Berkeley behaviors $\bsrd$ and the sampling function $\varsigma_{\delta, z}$.

If we assume such a constraint on the behaviors considered, c.u.s.\ is straightforward for ``existential'' --- that is ``eventually'' --- formulas.
Consider again formula $\diamondMTL{[2,5]}{\pp}$ and the behavior for $\pp$ in Figure \ref{fig:ex2}.
It should be clear that $c_2 \models \diamondMTL{[2,5]}{\pp}$ because $\pp$ holds at least once in any closed interval of length $3$.
It follows that the same holds for the discrete-time sampling $d_2$ of $c_2$.
In fact, consider any interval $I$ of size $3$ with integer endpoints and an instant within $I$ where $\pp$ holds.
Non-Berkeleyness entails that $\pp$ holds until the next sampling instant, since the previous sampling instant, or both.
Hence, it reaches a sampling instant that fits the interval $I$ over discrete time, which satisfies formula $\diamondMTL{[2,5]}{\pp}$ over discrete time.
This can be generalized to show that no change in the time interval is required for existential formulas when passing from dense- to discrete-time interpretations --- except for scaling the units according to the sampling period.
As a concrete example in Figure \ref{fig:ex2}, evaluate $\diamondMTL{[2,5]}{\pp}$ at $-1$, which references the dense-time interval $[1,4]$.
Consider the instant between $1$ and $2$ marked with a cross where $\pp$ holds; $\pp$ also holds since $1$, which is a sampling instant that belongs to the discrete-time interval $[1,4]$.

A similar reasoning works for ``universal'' --- that is ``always'' --- formulas, such as $\qq \Rightarrow \boxMTL{[2,5]}{\pp}$.
Behavior $c_3$ in Figure \ref{fig:ex3} is such that $c_3 \models \qq \Rightarrow \boxMTL{[2,5]}{\pp}$; in particular $\pp$ has to hold over the dense-time interval $[1, 4]$.
Over discrete-time, sampled values of $\pp$ have to hold over the \emph{discrete}-time interval with the same endpoints, which is obviously the case.
Again, this generalizes to universal formulas, which do not require changes in the time intervals when adapting them from dense- to discrete-time interpretations.

Things are more convoluted for c.u.i.s., which mandates changing the size of the intervals according to the type of formula --- existential or universal.
Let us consider again the existential formula $\diamondMTL{[2,5]}{\pp}$; it holds everywhere in the discrete-time behavior $d_4$ in Figure \ref{fig:ex4}.
If, however, the same formula is interpreted over the dense-time behavior $c_4$, of which $d_4$ is a sampling, it does not hold everywhere.
In particular, it holds at $-2$ and $-1$ but it does not hold in the open interval $(-2, -1)$: see the cross mark and the corresponding interval of size $3$ starting between $0$ and $1$.
The problem here is that non-Berkeleyness is a constraint on speed, not synchronization: the two samplings of $\pp$ at $0$ and $4$ record the value of the dense-time behavior $c_4$ respectively right before $0$ and right after $4$, hence leaving it unconstrained in the open interval $(0, 4)$ of size larger than $3$.
The ``interval of uncertainty'' is never larger than one sampling period on each side, hence we suggest to introduce an $\integers$-to-$\reals$ adaptation that grows intervals in existential formulas by this amount, thus accommodating the uncertainty in the worst case.
In the example, the adapted formula is $\diamondMTL{[1, 6]}{\pp}$ which clearly holds everywhere over $c_4$.

The dual reasoning suggests the adaptation for universal formulas such as $\qq \Rightarrow \boxMTL{[2,5]}{\pp}$.
In Figure \ref{fig:ex5}, the formula holds everywhere over discrete time. 
Over dense time, however, $\qq$ holds shortly before $-1$ (see cross mark) but $\pp$ does not hold everywhere in the corresponding interval of size $3$ starting shortly before $1$.
Again, a weaker formula holds over dense time, obtained by shrinking intervals in universal formulas by one sampling period on each side; the $\integers$-to-$\reals$ adaptation has to implement such a modification.
In the example, the adapted formula is $\qq \Rightarrow \boxMTL{[3,4]}{\pp}$ which clearly holds everywhere over $c_5$.

In order to rigorously extend the informal reasoning so far to arbitrary flat MTL formulas, we have to combine ``eventually'' and ``always'' formulas with the binary \emph{until} and \emph{release} modalities.
Let us demonstrate the intuition behind handling the former which turns out to be more intricate.
Consider a qualitative formula $\untilMTL{}{\qq, \pp}$ and the behavior in Figure \ref{fig:ex6}; let $x$ denote the time instant between $1$ and $2$ marked with a cross and assume that $\pp$ holds, in particular, precisely at $x$.
Then, the \emph{until} formula $\untilMTL{}{\qq, \pp}$ holds continuously over the interval $(-\infty, 1]$ in dense time (and beyond up to $x$).
Correspondingly, the same formula holds over the discrete interval $(-\infty, 1]$ in discrete time.
This suggests that \emph{until} formulas are c.u.s., as we will demonstrate formally in the rest of the paper.

Closure under inverse sampling is, again, more problematic.
Consider the same formula $\untilMTL{}{\qq, \pp}$ and the same discrete-time behavior $d_6$; we have seen that the \emph{until} formula holds over the discrete interval $(-\infty, 1]$ in discrete time.
Take a slightly different dense-time behavior, one where $\pp$ is false and $q$ is true at $x$ and everything else is as in $c_6$; let us name $c'_6$ this modified behavior.
Obviously, $d_6$ is a sampling of $c'_6$ as well as $c_6$.
However, $\untilMTL{}{\qq, \pp}$ does not hold anywhere in $(-\infty, 1]$ over $c'_6$ because $\pp$ becomes true left-continuously at $x$, which is incompatible with the dense-time semantics of \emph{until}.
In this case, the $\integers$-to-$\reals$ adaptation will have to replace the second argument $\pp$ of the \emph{until} formula with the weaker $\becfMTL{\pp}$ which holds at $x$ in $c'_6$ (as well as in $c_6$).
Alternatively, no adaptation is needed if we consider the stronger \emph{matching} variant of \emph{until} $\untilMMTL{\ }{\qq, \pp}$, where $\pp$ and $\qq$ would have to hold together at $1$ or $2$ in discrete time.

The following sub-sections present rigorous proofs of s.i.\ of MTL formulas that build upon the intuition behind the examples in the present sub-section.

\subsection{Sampling Invariance for MTL} \label{sec:sampl-invar-mtl}
This section provides a proof of the following fundamental result: there exist two suitable regular adaptations $\adapt{}, \unadapt{}$ such that \flatMTL{} is s.i.\ for the canonical sampling $\samp{}$.
In addition, the adaptations can be proved to introduce \emph{minimal} changes in the intervals of the adapted formulas, in the sense of Theorems \ref{th:optimal-canonical-1} and \ref{th:optimal-canonical-2} below.

\subsubsection{Canonical Adaptations}  \label{sec:canon-adapt}
Consider $\reals$-to-$\integers$ adaptation $\adapt{}$, parametric with respect to positive real parameter $\delta$, defined inductively as follows.
\begin{displaymath}
  \begin{array}{lcl}
  \adapt{\pi}  &  \triangleq  & \   \pi  \\

  \adapt{\untilMTL{\langle l, u \rangle}{\phi_1, \phi_2}}  &  \triangleq  &
        \  \untilMTL{[\lfloor l/\delta \rfloor, \lceil u/\delta \rceil ]}{\adapt{\phi_1}, \adapt{\phi_2}}  \\

  \adapt{\sinceMTL{\langle l, u \rangle}{\phi_1, \phi_2}}  &  \triangleq  &
        \  \sinceMTL{[\lfloor l/\delta \rfloor, \lceil u/\delta \rceil ]}{\adapt{\phi_1}, \adapt{\phi_2}}  \\

  \adapt{\relMTL{\langle l, u \rangle}{\phi_1, \phi_2}}  &  \triangleq  &
        \  \relMTL{\langle l', u' \rangle}{\adapt{\phi_1}, \adapt{\phi_2}} \\
	&  &  \text{where } l' = \begin{cases}
			                   \lfloor l/\delta \rfloor  &  \text{if } \langle \text{ is } (  \\
			                   \lceil l/\delta \rceil  &  \text{if } \langle \text{ is } [  \\
		                      \end{cases}  \\
	&  &  \text{and } u' = \begin{cases}
			                   \lceil u/\delta \rceil  &  \text{if } \rangle \text{ is } )  \\
			                   \lfloor u/\delta \rfloor  &  \text{if } \rangle \text{ is } ]  \\
		                      \end{cases}  \\

  \adapt{\redMTL{\langle l, u \rangle}{\phi_1, \phi_2}}  &  \triangleq  &
        \  \redMTL{\langle l', u' \rangle}{\adapt{\phi_1}, \adapt{\phi_2}} \\
	&  &  \text{where } l' = \begin{cases}
			                   \lfloor l/\delta \rfloor  &  \text{if } \langle \text{ is } (  \\
			                   \lceil l/\delta \rceil  &  \text{if } \langle \text{ is } [  \\
		                      \end{cases}  \\
	&  &  \text{and } u' = \begin{cases}
			                   \lceil u/\delta \rceil  &  \text{if } \rangle \text{ is } )  \\
			                   \lfloor u/\delta \rfloor  &  \text{if } \rangle \text{ is } ]  \\
		                      \end{cases}  \\

  \adapt{\phi_1 \wedge \phi_2}  &  \triangleq  & \   \adapt{\phi_1} \wedge \adapt{\phi_2} \\

  \adapt{\phi_1 \vee \phi_2}  &  \triangleq  & \  \adapt{\phi_1} \vee \adapt{\phi_2}
  \end{array}
\end{displaymath}

Consider $\integers$-to-$\reals$ adaptation $\unadapt{}$, parametric with respect to positive real parameter $\delta$, defined inductively as follows.\footnote{The restriction to closed intervals is clearly without loss of generality over discrete time.}

\begin{displaymath}
  \begin{array}{lcl}
  \unadapt{\pi}  &  \triangleq  &  \ \pi \\

  \unadapt{\untilMTL{[ l, u ]}{\phi_1, \phi_2}}  &  \triangleq  &
	   \ \untilMTL{( (l-2)\delta, (u+1)\delta )}{\unadapt{\phi_1}, \becfMTL{\unadapt{\phi_2}}} \\

  \unadapt{\sinceMTL{[ l, u ]}{\phi_1, \phi_2}}  &  \triangleq  &
  \ \sinceMTL{( (l-2)\delta, (u+1)\delta )}{}\!\left(\unadapt{\phi_1}, \becpMTL{\unadapt{\phi_2}}\right) \\

  \unadapt{\relMTL{[ l, u ]}{\phi_1, \phi_2}}  &  \triangleq  & 
	   \ \relMTL{[ (l+1)\delta, (u-1)\delta ]}{\unadapt{\phi_1}, \unadapt{\phi_2}}  \\

  \unadapt{\redMTL{[ l, u ]}{\phi_1, \phi_2}}  &  \triangleq  & 
	   \ \redMTL{[ (l+1)\delta, (u-1)\delta ]}{\unadapt{\phi_1}, \unadapt{\phi_2}}  \\

  \unadapt{\phi_1 \wedge \phi_2}  &  \triangleq  &    \unadapt{\phi_1} \wedge \unadapt{\phi_2} \\

  \unadapt{\phi_1 \vee \phi_2}  &  \triangleq  &   \unadapt{\phi_1} \vee \unadapt{\phi_2} \\
  \end{array}
\end{displaymath}

The proof of Theorem \ref{th:sinv-flat} will show that the asymmetry in the adaptation for \emph{until} (and \emph{since}) operators is needed to reconcile the non-matching semantics over discrete and dense time.
Alternatively, one can assume discrete-endpoint \emph{until} (and \emph{since}) operators in their matching variant (see Table \ref{tab:mtl-derived}) which preserves the symmetry in the adaptations.
We include them explicitly in the treatment also because they will be useful for the results of Section \ref{sec:verification}.

\begin{displaymath}
  \begin{array}{lcl}
  \unadapt{\untilMMTL{[ l, u ]}{\phi_1, \phi_2}}  &  \triangleq  &
	   \ \untilMMTL{( (l-1)\delta, (u+1)\delta )}{\unadapt{\phi_1}, \unadapt{\phi_2}} \\

  \unadapt{\sinceMMTL{[ l, u ]}{\phi_1, \phi_2}}  &  \triangleq  &
  \ \sinceMMTL{( (l-1)\delta, (u+1)\delta )}{}\!\left(\unadapt{\phi_1}, \unadapt{\phi_2}\right) \\
  \end{array}
\end{displaymath}

We name $\adapt{}$ and $\unadapt{}$ \emph{canonical} adaptations.

\subsubsection{Flat MTL is Sampling Invariant}
The main result of the paper is now proved.

\begin{theorem}[Sampling invariance of $\flatMTL{}$] \label{th:sinv-flat}
Let $\delta > 0$ be any sampling period and $z$ be any origin.
All flat $\flatMTL{}$ formulas are \emph{sampling invariant} with respect to the canonical adaptations $\adapt{}, \unadapt{}$ and the canonical sampling function $\samp{}$.
\end{theorem}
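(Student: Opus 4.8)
The plan is to prove both halves of sampling invariance (Definition \ref{def:samplinginvariance}) by structural induction on the flat formula $\phi$, carried out in parallel and exploiting that both canonical adaptations distribute over $\wedge$ and $\vee$. Since a \flatMTL{} formula is a Boolean combination of propositional formulas and \emph{temporal atoms} (an until/since/release/trigger applied to propositional arguments), the induction is shallow: it suffices to settle the propositional base case, each temporal atom, and closure under the two connectives. The workhorse throughout is Lemma \ref{lemma:changepoints}: on a behavior in $\bsrd$ every propositional formula is piecewise constant with change points at least $\delta$ apart, so each maximal constancy interval has length $\geq\delta$ and therefore contains a closed subinterval of size $\delta$, which in turn contains at least one sampling instant $z+m\delta$.

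For the \cus{} direction I would prove the pointwise statement that, for $b\in\bsrd$ and every $k\in\integers$, $b(z+k\delta)\mdr\phi$ implies $\samp{b}(k)\mdz\adapt{\phi}$; the required global conclusion $\samp{b}\mdz\adapt{\phi}$ then follows by ranging over all $k$. This formulation composes trivially over $\wedge$ and $\vee$, since both sides are evaluated at the \emph{same} index $k$, so no global-versus-local tension arises here. The propositional base case is immediate from $\samp{b}(k)=b(z+k\delta)$. For a temporal atom I transfer the dense witness to a sampling instant: for the existential side (the $\phi_2$ of \emph{until}/\emph{since}), a dense witness at offset $d\in\langle l,u\rangle$ forces the propositional $\phi_2$, by Lemma \ref{lemma:changepoints}, to hold throughout a closed $\delta$-interval, which meets a sampling instant whose integer offset lands in $[\lfloor l/\delta\rfloor,\lceil u/\delta\rceil]$ --- exactly the rounding performed by $\adapt{}$ --- while the until-condition on $\phi_1$ survives because $\phi_1$ held on a dense superset of the sampled prefix. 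The universal operators \emph{release}/\emph{trigger} are dual, the floor/ceil choices of $\adapt{}$ shrinking the interval so that every sampled obligation is backed by a dense one.

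For the \cuis{} direction the corresponding statement is pointwise at \emph{every} real instant: for $d\in\bsz$ with $d\mdz\phi$ and every $b'\in\unsamp{d}\cap\bsrd$ and every $t\in\reals$, one has $b'(t)\mdr\unadapt{\phi}$ (membership of $b'$ in $\bsrd$ being part of the hypothesis). The delicate point, and the step I expect to be the main obstacle, is that under the global semantics disjunction does not decompose: $d\mdz\phi_1\vee\phi_2$ gives, at each integer, only \emph{one} of the disjuncts, so the naive Boolean induction breaks. I would resolve this by pinning $t$ to a single, uniform sampling instant $z+m\delta$ chosen inside the maximal constancy interval of $b'$ that contains $t$ and within one period of $t$ (both conditions are simultaneously satisfiable by the length-$\geq\delta$ remark above, since one of the two grid points nearest $t$ must lie in that interval). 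Then $b'(t)=b'(z+m\delta)=d(m)$, so $d(m)\mdz\phi$, and it suffices to show, for this one $m$, that each atom $\alpha$ of $\phi$ obeys $d(m)\mdz\alpha\Rightarrow b'(t)\mdr\unadapt{\alpha}$ (with the biconditional for propositional atoms, as $b'(t)$ and $b'(z+m\delta)$ carry the same truth assignment); the identical Boolean combination then lifts $d(m)\mdz\phi$ to $b'(t)\mdr\unadapt{\phi}$.

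The heart of the \cuis{} argument is thus the per-atom transfer, where the enlargements and shrinkings of $\unadapt{}$ must exactly absorb the worst-case misalignment between $t$ and the grid. For an existential atom the offset from $t$ to the discrete witness $z+(m+j)\delta$, with $j\in[l,u]$, lies in $((l-1)\delta,(u+1)\delta)$ because $|t-(z+m\delta)|<\delta$; this is why $\unadapt{}$ widens $[l,u]$ to $((l-2)\delta,(u+1)\delta)$, the extra slack on the lower end together with the replacement of $\phi_2$ by $\becfMTL{\unadapt{\phi_2}}$ reconciling the \emph{non-matching} dense semantics with the discrete one --- precisely the asymmetry flagged after the definition of the adaptations and illustrated by the $\untilMTL{}{\qq,\pp}$ example of Figure \ref{fig:ex6}. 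Universal atoms are dual, shrinking to $[(l+1)\delta,(u-1)\delta]$ so that each dense obligation of \emph{release}/\emph{trigger} is implied by a discrete one. Checking that these constants are tight is exactly what feeds the optimality Theorems \ref{th:optimal-canonical-1}--\ref{th:optimal-canonical-2}, and I expect the careful endpoint bookkeeping for the open/closed interval brackets, and the $j=0$ boundary case where $\becfMTL{}$ is genuinely needed, to be the most error-prone part of the write-up.
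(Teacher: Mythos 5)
Your proposal is correct; the \cus{} half coincides with the paper's own proof (pointwise transfer at sampling instants, structural induction, pushing the dense witness onto the grid via non-Berkeleyness with exactly the floor/ceiling rounding of $\adapt{}$), but your \cuis{} half takes a genuinely different route. The paper strengthens its induction: at every sampling instant it proves, besides $b'(t) \mdr \phi'$, two auxiliary facts --- that $\phi'$ ``shifts'' by $\delta$ to the right/left, or ``turns false'' jointly with some propositional formula at a definite change point --- and then fills each inter-sample gap by contradiction: a failure strictly inside a gap would produce two behavior discontinuities less than $\delta$ apart, against Lemma \ref{lemma:changepoints}. You instead anchor every real $t$ to a grid point $z+m\delta$ in the same maximal constancy interval (this anchor exists, but note it is the closed-$\delta$-subinterval clause of Definition \ref{def:non-berkeley} doing the work, not mere length $\geq \delta$ as your phrasing suggests: an interval trapped strictly between consecutive grid points sits inside an open interval of length $\delta$ and contains no closed $\delta$-subinterval), and prove a per-atom implication $d(m) \mdz \alpha \Rightarrow b'(t) \mdr \unadapt{\alpha}$ that lifts through the monotone NNF structure, neatly sidestepping the disjunction problem you flag. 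Your interval accounting checks out: the direct grid witness lands in $((l-1)\delta,(u+1)\delta)$, and the residual slack down to $(l-2)\delta$ together with $\becfMTL{}$ absorbs the collision case where $\pi_1$ expires at a change point strictly between the last $\pi_1$-sample and the $\pi_2$-witness sample --- non-Berkeleyness forces that change point to coincide with the onset of $\pi_2$, which is the true locus of the $\becfMTL{}$ need (possibly at offset $0$), rather than ``$j=0$'' as you say, since at $j=0$ the anchor's constancy interval transfers $\pi_2$ to $t$ directly. What each approach buys: yours exploits flatness head-on (atoms have propositional arguments, so Lemma \ref{lemma:changepoints} applies to them verbatim) and yields a shallower induction, at the cost of redoing the witness analysis relative to an off-grid $t$; the paper's strengthened-invariant machinery is heavier but localizes all reasoning at sampling instants, a structure it reuses almost unchanged in the optimality argument of Theorem \ref{th:optimal-canonical-1}. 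In a full write-up you should make explicit the interpolation step both of your atom transfers silently use: a propositional formula true at two consecutive sampling instants holds throughout the closed interval between them, since a dip would again force two change points less than $\delta$ apart.
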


\begin{proof}
The proof is split into two parts: first we show that any dense-endpoint flat formula $\phi$ is c.u.s.; then we show that any discrete-endpoint flat formula $\phi$ is c.u.i.s.\footnote{For brevity we omit dealing with past operators, as it can be done from the corresponding future operators with little effort.}

Let us introduce the following abbreviations: for a dense-time instant $r$, let $\prevsamp{r}$ denote the sampling instant $z + \lfloor (r - z)/\delta \rfloor \delta$, which is immediately before or exactly at $r$, and let $\nextsamp{r}$ denote the sampling instant $z + \lceil (r - z)/\delta \rceil \delta$, which is immediately after or exactly at $r$.
Also, $\prevsampdist{r}$ and $\nextsampdist{r}$ denote the distances between $r$ and its previous and next sampling instant, respectively; that is $\prevsampdist{r} = r - \prevsamp{r}$ and $\nextsampdist{r} = \nextsamp{r} - r$.
Obviously $\prevsampdist{r}, \nextsampdist{r} \geq 0$.\footnote{This proof exploits some properties of the floor and ceiling functions. We refer the reader to \cite{GKP94} for a thorough treatment of these functions.}

\paragraph*{(Closure under sampling).}
Let $\phi$ be a generic dense-endpoint flat MTL formula, $b$ a dense-time non-Berkeley behavior in $\worldnb{\phi}{\reals}{\delta}$, and $\phi' = \adapt{\phi}$.
Then, let $b'$ be the sampling $\samp{b}$ of $b$ with the given origin and sampling period.

For a generic sampling instant $t = z + k\delta$, we show that $b(t) \modelstime{\reals} \phi$ implies $b'(k) \modelstime{\integers} \phi'$, by induction on the structure of $\phi$.
This proves that if $b \mdr \phi$ then $\samp{b} \mdz \adapt{\phi}$, for any $b \in \bsrd$; hence any $\flatMTL{}$ dense-endpoint formula is c.u.s.

\begin{itemize}
  \item $\phi = \pi$, $\phi = \phi_1 \wedge \phi_2$, and $\phi = \phi_1 \vee \phi_2$ are straightforward from the definitions.

  \item $\phi = \untilMTL{\langle l, u \rangle}{\pi_1, \pi_2}$.  
	 $\phi'$ is $\untilMTL{[l', u']}{\pi_1, \pi_2}$, with $l' = \lfloor l/\delta \rfloor$ and $u' = \lceil u/\delta \rceil$.

	 Let $d$ be a real in $\langle l, u \rangle$ such that $b(t+d) \mdr \pi_2$ and, for all $e \in [0, d)$, it is $b(t+e) \mdr \pi_1$.
    Since $b$ in non-Berkeley, there exists a $p \in [0, \delta]$ such that for all $f \in [-p, -p+\delta]$ it is $b(t+d+f) \mdr \pi_2$; i.e., $\pi_2$ holds over $I = [t+d-p, t+d-p+\delta]$.
    Some sampling instant must fall within $I$, as $I$ has size $\delta$.

	 In particular, it is either $p \geq \prevsampdist{t+d}$ or $-p + \delta \geq \nextsampdist{t+d}$: otherwise it would be $\delta = p + (-p + \delta) < \prevsampdist{t+d} + \nextsampdist{t+d} = \nextsamp{t+d} - \prevsamp{t+d} = \delta (\lceil (t+d-z)/\delta \rceil - \lfloor (t+d-z)/\delta \rfloor) \leq \delta$, a contradiction (where we exploited the property: $\lceil r \rceil - \lfloor r \rfloor \leq 1$ for any real $r$).
	 So, let $t'$ be the sampling instant:
	 \begin{displaymath}
		t' = \begin{cases}
		     \prevsamp{t+d} &  \text{if $p \geq \prevsampdist{t+d}$}  \\
		     \nextsamp{t+d} &  \text{otherwise}
		     \end{cases}
	 \end{displaymath}	 
	 It is not difficult to check that $(t' - t)/\delta \in [l', u']$.
	 In fact:
	 \begin{itemize}
		  \item if $p \geq \prevsampdist{t+d}$, then $t' - t = \prevsamp{t+d} - t = \delta (\lfloor (k\delta + d)/\delta\rfloor - k) = \delta \lfloor d/\delta \rfloor$.
			 Recall that $d \in \langle l, u \rangle$, and then \emph{a fortiori} $d \in [l, u] \supseteq \langle l, u \rangle$.
			 So $d/\delta \in [ l/\delta, u/\delta ]$, and $(t'-t)/\delta = \lfloor d/\delta \rfloor \in [\lfloor l/\delta \rfloor, \lfloor u/\delta \rfloor] \subseteq  [l', u']$.

		  \item if $p < \prevsampdist{t+d}$, then $t' - t = \nextsamp{t+d} - t = \delta (\lceil (k\delta + d)/\delta \rceil - k) = \delta \lceil d/\delta \rceil$.
			 Recall that $d \in \langle l, u \rangle$, and then \emph{a fortiori} $d \in [l, u] \supseteq \langle l, u \rangle$.
			 So $d/\delta \in [ l/\delta, u/\delta ]$, and $(t'-t)/\delta = \lceil d/\delta \rceil \in [\lceil l/\delta \rceil, \lceil u/\delta \rceil] \subseteq  [l', u']$.
	 \end{itemize}

In all, $b(t') \mdr \pi_2$.
By inductive hypothesis, it follows that for $d' = (t'-t)/\delta$ it is $b'(k+d') \mdz \pi_2$, and $d' \in [l', u']$.

Let us now show that for all integers $e' \in [0, d'-1]$ it is $b'(k+e') \mdz \pi_1$.
Recall that $d' \leq \lceil d/\delta \rceil < d/\delta + 1$, since $\lceil r \rceil < r + 1$ for any real number $r$; hence $\delta (d' - 1) < \delta (d/\delta) = d$.
Since for all $e \in [0, d)$ we have $b(t+e) \mdr \pi_1$, and since $[0, \delta (d' - 1)] \subset [0, d)$, \emph{a fortiori} for all $e \in [0, \delta (d'-1)]$ it is $b(t+e) \mdr \pi_1$.
By inductive hypothesis, it follows that for all integers $e' \in [0, d'-1] = [0, d')$ it is $b'(k+e') \mdz \pi_1$.
We conclude that $b'(k) \mdz \phi'$.

       \item $\phi = \relMTL{\langle l, u \rangle}{\pi_1, \pi_2}$.
	 $\phi'$ is $\relMTL{\langle l', u' \rangle}{\pi_1, \pi_2}$, where $l', u'$ depend on whether $I = \langle l, u \rangle$ is closed, open, or half-open.

        Let $d'$ be a generic integer in $\langle l', u' \rangle$; we show that $b'(k+d') \mdz \pi_2$ or there exists a $e' \in [0, d')$ such that $b'(k+e') \mdz \pi_1$.
	First we show that $\langle l', u' \rangle \subseteq \langle l/\delta, u/\delta \rangle$.
	In fact, consider the four possible cases for interval $I' = \langle l', u' \rangle$.
		\begin{itemize}
		  \item $I = [l, u]$, so $I' = [l', u']$, where $l' = \lceil l/\delta \rceil$ and $u' = \lfloor u/\delta \rfloor$.
		    Thus, $[l', u'] \subseteq [l/\delta, u/\delta]$, as $\lfloor r \rfloor \leq r$ and $\lceil r \rceil \geq r$ for any real $r$.

		  \item $I = [l, u)$, so $I' = [l', u')$, where $l' = \lceil l/\delta \rceil$ and $u' = \lceil u/\delta \rceil$.
		      Thus, $[l', u') \subseteq [l/\delta, u/\delta)$, as $[l', u') = [\lceil l/\delta \rceil, \lceil u/\delta \rceil - 1] \subseteq [l/\delta, u/\delta)$, noting that $\lceil r \rceil \geq r$, and that $\lceil r \rceil - 1 < r$, for any real $r$.

		  \item $I = (l, u]$, so $I' = (l', u']$, where $l' = \lfloor l/\delta \rfloor$ and $u' = \lfloor u/\delta \rfloor$.
			  Thus, $(l', u'] \subseteq (l/\delta, u/\delta]$, as $(l', u'] = [\lfloor l/\delta \rfloor + 1, \lfloor u/\delta \rfloor] \subseteq (l/\delta, u/\delta]$, noting that $\lfloor r \rfloor \leq r$, and that $\lfloor r \rfloor + 1 > r$, for any real $r$.

		  \item $I = (l, u)$, so $I' = (l', u')$, where $l' = \lfloor l/\delta \rfloor$ and $u' = \lceil u/\delta \rceil$.
			 Thus, $(l', u') \subseteq (l/\delta, u/\delta)$, as $(l', u') = [\lfloor l/\delta \rfloor + 1, \lceil u/\delta \rceil - 1] \subset (l/\delta, u/\delta)$, noting that $\lfloor r \rfloor + 1 > r$, and that $\lceil r \rceil - 1 < r$, for any real $r$.
		\end{itemize}

		In all, $b(t+\delta d') \mdr \pi_2$ or there exists a $e \in [0, \delta d')$ such that $b(t+e) \mdr \pi_1$.

	   If the former is the case, $b'(k+d') \mdz \pi_2$ holds by inductive hypothesis, which fulfills the goal.
	   If the latter is the case, we have $b(t) \mdr \diamondMTL{[0, \delta d')}{\pi_1} \equiv \untilMTL{[0, \delta d')}{\logictrue{}, \pi_1}$, which entails $b'(k) \mdz \diamondMTL{[0, d')}{\pi_1}$. 
	   Therefore, there exists a $e' \in [0, d')$ such that $b'(k+e') \mdz \pi_1$, as required.
\end{itemize}

\paragraph*{(Closure under inverse sampling).}
Let us first introduce the following terminology; for any dense-endpoint formula $\psi$:
\begin{itemize}
  \item if $b(t) \mdr \boxMTL{< \delta}{\psi}$ (resp.~$b(t) \mdr \boxPMTL{< \delta}{\psi}$), $\psi$ ``shifts to the right (s.t.r.) at $t$'' (resp.~``shifts to the left (s.t.l.) at $t$'');
  \item if $b(t) \mdr \untilMTL{= c}{}\!\left(\psi, \becfMTL{\neg \psi}\right)$ (resp.~$b(t) \mdr \sinceMTL{= c}{}\!\left(\psi, \becpMTL{\neg \psi}\right)$) for some $c \in (0, \delta)$, or $b(t) \mdr \psi \wedge \nowonMTL{\neg \psi}$ (resp.~$b(t) \mdr \psi \wedge \uptonowMTL{\neg \psi}$) and $c = 0$, $\psi$ ``turns false in the future (t.f.f.) at $\tff{t}{c}$'' (resp.~``turned false in the past (t.f.p.) at $\tfp{t}{c}$'').
\end{itemize}

Let $\phi$ be a generic discrete-endpoint flat MTL formula, $b$ a discrete-time behavior in $\world{\phi}{\integers}$, and $\phi' = \unadapt{\phi}$.
Then, let $b'$ be a dense-time non-Berkeley behavior in $\bsrd$ such that $\samp{b'} = b$ with the given origin and sampling period.

For a generic sampling instant $t = z + k\delta$, in the remainder we show that: (1) $b'(t) \mdr \phi'$; (2) $\phi'$ s.t.r.\ at $t$, or there exist $c \in [0,\delta)$ and $\varpi \in \PL{}$ such that $\phi'$ and $\varpi$ both t.f.f.\ at $\tff{t}{c}$, or $\phi'$ is false at $t+\delta$; and (3) either $\phi'$ s.t.l.\ at $t$ or there exist $c \in [0,\delta)$ and $\varpi \in \PL$ such that $\phi'$ and $\varpi$ both t.f.p.\ at $\tfp{t}{c}$.

From these three facts we can prove that $\phi$ is c.u.i.s.\ by showing that $b'(t) \mdr \phi'$ for all $t \in \reals$.
First, (1) shows this fact for all $t = z + k\delta$ for some integer $k$.
Then, let $t_n = t + \delta$ and show that $\phi'$ holds over the generic $\delta$-length closed real interval $[t, t_n]$.
If $\phi'$ s.t.r.\ at $t$ or it s.t.l.\ at $t_n$, we are done.
If $\phi'$ is false at $t+\delta = z + (k+1)\delta$ we have a contradiction which also closes the proof.
Otherwise, from (2) and (3) we assume that: (a) $\phi'$ t.f.f.\ at $\tff{t}{c_p}$ for some $c_p \in [0,\delta)$ with some $\varpi_p \in \PL$; and (b) $\phi'$ t.f.p.\ at $\tfp{t_n}{c_n}$ for some $c_n \in -[0,\delta)$ with some $\varpi_n \in \PL$.
Note that $|(t_n + c_n) - (t + c_p)| = |\delta + c_n - c_p| \leq \delta$; non-Berkeleyness of $b' \in \bsrd$ entails that either the two change points $t+c_p$ and $t_n+c_n$ coincide or $c_n = c_p = 0$.
In both cases Lemma \ref{lemma:changepoints} implies a contradiction which closes the whole proof.
We remark that the proofs go through also for intervals of temporal operators with negative endpoints, possibly with minimal adjustments that we do not discuss explicitly for the sake of brevity.

Finally, we prove (1), (2), and (3) by induction on the structure of $\phi$.
  \begin{itemize}
	 \item $\phi = \pi$. 
		(1) From the definition of $\samp{}$, it follows that $b'(t) = b(k)$.
		
		(2) Consider Lemma \ref{lemma:changepoints} at $t$: there exist $c_n \geq t$ such that $\pi$ t.f.f.\ at $\tff{t}{c_n}$ or it holds indefinitely in the future.
		If the latter is the case, $\pi$ obviously s.t.r.
      (3) is proved similarly as (2), with respect to the past.

    \item $\phi = \phi_1 \wedge \phi_2$ and $\phi = \phi_1 \vee \phi_2$ are straightforward from the definitions.

	 \item $\phi = \untilMTL{[l, u]}{\pi_1, \pi_2}$.
           $\phi'$ is $\untilMTL{( l', u' )}{\pi_1, \becfMTL{\pi_2}}$, with $l' = (l-2)\delta$ and $u' = (u+1)\delta$.

		(1) Let us start by proving $b'(t) \mdr \psi$ with $\psi = \untilMTL{[(l-1)\delta, u\delta]}{\pi_1, \pi_2}$.
		  This implies $b'(t) \mdr \phi'$, as $[(l-1)\delta, u\delta] \subset (l', u')$ hence $\psi$ is stronger than $\phi'$.
        Proving a stronger formula will be necessary in steps (2) and (3). \\
        Let $d \in [l, u]$ be the integer time instant such that $b(k+d) \mdz \pi_2$, which exists by hypothesis.
      The case $d = 0$ is trivial, hence let us consider $d > 0$.
      Still by hypothesis, for all integers $e \in [0, d) = [0,d-1]$ it is $b(k+e) \mdz \pi_1$.
      By inductive hypothesis, for all real $\delta$-multiples $e' \in [0, d\delta)$ it is $b'(t+e') \mdr \pi_1$.
      If $b'(t+d\delta) \mdr \pi_1$ as well, let $d' = d\delta$; otherwise $\pi_1$ t.f.f.\ at some $\tff{t'}{0}$ with $t+(d-1)\delta \leq t' \leq t+d\delta$ and let $d' = t'-t \geq 0$.
      Notice that $d' \in [(l-1)\delta, u\delta]$.
      Correspondingly, $\pi_1$ holds over $[0, d') \oplus t$.
      In addition, a little reasoning should convince us that Lemma \ref{lemma:changepoints} for $\pi_2$ at $t+d\delta$ --- also considering the fact that $\pi_1$ t.f.f.\ at $\tff{t'}{0}$ unless it holds at $t+d\delta$ --- implies that $\pi_2$ must hold over $(d', d\delta] \oplus t$; hence $b'(t+d') \mdr \becfMTL{\pi_2}$.

      (2) Let $s$ be any value in $(0, \delta)$.
      Let $c = d' - s$: notice that $c \in (l', u')$ because $c > d' - \delta \geq (l-1)\delta - \delta = l'$ and $c < d' \leq u\delta < u'$.
      Since $t + s + c = t + d'$ we have already shown that $b' (t+s+c) \mdr \becfMTL{\pi_2}$.
      Moreover, $[s, s+c) \subset [0, d')$ thus $b'(t+s+f) \mdr \pi_1$ holds \emph{a fortiori} for all $f \in [0, c)$.
      All this proves $\phi'$ s.t.r.

      (3) Let $f$ be any value in $-(0, \delta)$.
      For $d'' = d' - f$ we have $d'' \in [(l-1)\delta, (u+1)\delta)$ and $b'(t+f+d'') \mdr \becfMTL{\pi_2}$.
      Also, by inductive hypothesis either $\pi_1$ t.f.p.\ at $\tfp{t}{c}$ for some $c \in [0, \delta)$ or $\pi_1$ s.t.l.
      In the latter case, $\phi'$ s.t.l.\ as well; in the former case, $\phi'$ t.f.p. at $\tfp{t}{c}$ as well.

		\item $\phi = \relMTL{[l, u]}{\pi_1, \pi_2}$.
		  $\phi'$ is $\relMTL{[l', u']}{\pi_1, \pi_2}$, with $l' = (l+1)\delta$ and $u' = (u-1)\delta$.

		  (1) Let us start by proving $b'(t) \mdr \psi$ with $\psi = \relMTL{[l\delta, u\delta]}{\pi_1, \pi_2}$.
		  This implies $b'(t) \mdr \phi'$, as $[l\delta, u\delta] \supset [l', u']$ hence $\psi$ is stronger than $\phi'$.
        Proving a stronger formula will be necessary in steps (2) and (3). \\
		  Let $d'$ be any real value in $[l\delta, u\delta]$; we prove that $b'(t+d') \mdr \pi_2$ or $b'(t+e') \mdr \pi_1$ for some $e' \in [0, d')$.
		  We discuss two cases.
		  \begin{itemize}
		  \item If $t+d'$ is a sampling instant, $d = d'/\delta$ is an integer, and $d \in [l, u]$.
			 Also, by hypothesis, $b(k+d) \mdz \pi_2$ or $b(k+e) \mdz \pi_1$ for some integer $e \in [0, d-1]$.
			 In the former case, $b'(t+d') \mdr \pi_2$ follows by inductive hypothesis.
			 Otherwise, $b'(t+e') \mdr \pi_1$ for $e' = e\delta$ and $e' \in [0, d' - \delta] \subset [0, d')$, also by inductive hypothesis.

		  \item If $t+d'$ is not a sampling instant, let $p' = d' - \prevsampdist{t+d'}$ and $n' = d' + \nextsampdist{t+d'}$; these are both integer multiples of $\delta$.
			 Notice that $p' > d' - \delta \geq l\delta - \delta = (l-1)\delta$, and $n' < d' + \delta \leq u\delta + \delta = (u+1)\delta$.
			 Therefore, the two integers $p = p'/\delta$ and $n = n'/\delta$ are such that $p, n \in [l, u]$.
          Hence, from the hypothesis $b(k) \mdr \phi$, one of the following two cases holds.
			 \begin{itemize}
				\item $b(k+p) \mdz \pi_2$ and $b(k+n) \mdz \pi_2$, with $n = p+1$.
              By inductive hypothesis, $b'(t+p') \mdr \pi_2$ and $b'(t+n') \mdr \pi_2$ follow.
              Since $b' \in \bsrd$ is non-Berkeley by hypothesis, $\pi_2$ holds over the whole real interval $[t+p', t+n'] = [t+p', t+p'+\delta]$ as well.
              In particular, $b'(t+d') \mdr \pi_2$ for $d' \in [p', p'+\delta]$.

				\item $b(k+e) \mdz \pi_1$ for some integer $e \in [0, p-1]$ or $e \in [0, n-1]$.
              From $(p-1)\delta \leq (n-1)\delta < (d' + \delta) - \delta = d'$, it follows that $e' = \delta e \in [0, d')$.
              $b'(t+e') \mdr \pi_1$ holds by inductive hypothesis.
			 \end{itemize}
		  \end{itemize}
        In all, $b'(t) \mdr \psi$ is established.

		  (2) Let $f$ be any value in $(0, \delta)$ and $d''$ be any real value in $[l', u']$.
		  Since $d'' + f \in [l\delta, u\delta]$, we already proved that $b'(t + d'' + f) \mdr \pi_2$ or $b'(t+c) \mdr \pi_1$ for some $c \in [0, d'' + f)$. \\
        If, for all $d''$, the \emph{stronger} fact that $b'((t + f) + d'') \mdr \pi_2$ or $b'(t+c) \mdr \pi_1$ for some $c \in [f, d'' + f) \subset [0, d'' + f)$ holds, then we have proved that $\phi'$ s.t.r.\ at $t$ --- because $t+c = t+f+(c-f)$ and $c-f \in [0, d'')$. \\
        Otherwise, there is some $d''$ such that: (a) $b'((t+f) + d'') \mdr \neg \pi_2$; (b) $b'(t+c') = b'((t+f)+(c'-f)) \mdr \neg \pi_1$ for all $c' \in [f, d''+f)$; and (c) $b'(t+c) \mdr \pi_1$ for some $c \in [0, f)$.
        Let $v$ be the smallest instant in $[0,f)$ such that $\becfMTL{\neg \pi_1}$ holds at $t+v$; this exists because $b'$ is non-Zeno.
        Lemma \ref{lemma:changepoints} entails that $\pi_1$ holds over interval $[0, v) \oplus t$, and $\pi_1$ t.f.f.\ at $\tff{t}{v}$.
        Hence, it can be seen that $\phi'$ t.f.f.\ at $\tff{t}{v}$ as well.

		  (3) Let $s$ be any value in $-(0, \delta)$ and $d$ be any real value in $[l', u']$.
		  Since $s+d \in [l\delta, u\delta]$, we have already shown that $b'(t+(s+d)) \mdr \pi_2$ or $b'(t+e') \mdr \pi_1$ for some $e' \in [0, s+d)$.
        In both cases it follows that $\phi'$ s.t.l.\ at $t$, in particular as $e'' = e' - s$ with $e'' \in [-s, d) \subset [0, d)$ and $t+e' = (t+s)+e''$.

      \item $\phi = \untilMMTL{[l, u]}{\pi_1, \pi_2}$. \\
        Proof is all similar to the case of the ``standard'' \emph{until} with the simplification that matchingness allows us to establish the stronger $b'(t) \mdr \untilMMTL{[l\delta, u\delta]}{\pi_1, \pi_2}$ in part (1).
        \qedhere
      \end{itemize}
\end{proof}

\subsubsection{Canonical Adaptations are Optimal}
Let us provide some justification for the particular choice of canonical adaptations.
In principle, more complex transformations could be devised such that Theorem \ref{sec:sampl-invar-mtl} still holds.
However, we aimed at introducing adaptations that change the structure of the formulas as little as possible, such that the transformed formulas are ``essentially the same'' as the original formulas, except for some adjustments required to bridge the gaps in terms of time units and granularity (see Example \ref{ex:densevsdiscrete}).

In a nutshell, adaptations should preserve the propositional and modal structure of a formula as much as possible.
To formalize this intuition we introduce the notion of regularity.
\begin{definition}[Regularity of adaptations.]
Let $\phi_1, \phi_2$ any pair of MTL formulas and $\genop{}{}$ a modality.
An adaptation $\upsilon$ is:
\begin{itemize}
\item \emph{Compositional} if it satisfies $\upsilon[\phi_1 \wr \phi_2] \equiv \upsilon[\phi_1] \wr \upsilon[\phi_2]$ for any $\wr \in \{\wedge, \vee\}$.
\item \emph{Propositional-preserving} if $\upsilon[\pp] \equiv \pp$ for any $\pp \in \alphabet$.
\item $\genop{}{}$\emph{-modality-preserving} if, for any interval $I$, $\upsilon[\genop{I}{\phi_1, \phi_2}] \equiv \genop{I'}{\upsilon[\phi_1], \upsilon[\phi_2]}$.
\end{itemize}
An adaptation is $\genop{}{}$\emph{-regular} if it is compositional, propositional-preserving, and $\genop{}{}$-modality-preserving.
An adaptation is \emph{regular} when it is $\genop{}{}$\emph{-regular} for every modality $\genop{}{} \in \{\untilMTL{}{}, \sinceMTL{}{}, \relMTL{}{}, \redMTL{}{}\}$.
\end{definition}

Canonical adaptations $\adapt{}, \unadapt{}$ are regular for all modalities, with the exception of $\unadapt{}$ which is not $\untilMTL{}{}$-modality-preserving for the non-matching variant of the \emph{until} modality.
A $\untilMTL{}{}$-modality-preserving $\integers$-to-$\reals$ adaptation, however, would not achieve sampling invariance: as noted in Section \ref{sec:canon-adapt}, the matching semantics is the most natural choice to bridge the discrete- and dense-time semantics.
Furthermore, canonical adaptations are the ``best'' among all possible regular sam\-pling-invariant adaptations, in the sense that the adapted intervals are as constraining as possible.
This should be intuitively understandable already from the proof of Theorem \ref{th:sinv-flat}, which would not stand if we introduced any relaxation in adapted interval bounds.
More formally we have the following.
\begin{theorem}[Optimality of $\adapt{}$] \label{th:optimal-canonical-1}
Let $\upsilon^\reals$ be a regular $\reals$-to-$\integers$ adaptation such that \emph{any} flat dense-endpoint $\phi \in \flatMTL{}$ is c.u.s.\ with respect to it and $\samp{}$.
Then, $d \mdz \adapt{\phi}$ implies $d \mdz \upsilon^{\reals}[\phi]$ for any behavior $d \in \bsz$.
\end{theorem}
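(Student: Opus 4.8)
The plan is to prove the optimality of $\adapt{}$ by a structural induction on the flat formula $\phi$, showing at each modality that the canonical adaptation produces intervals that are at least as tight as any competing regular adaptation $\upsilon^\reals$ that still achieves c.u.s. Concretely, I would fix an arbitrary regular c.u.s.\ adaptation $\upsilon^\reals$ and, for each temporal operator, compare the adapted interval $I'_{\text{can}}$ produced by $\adapt{}$ against the interval $I'_{\upsilon}$ produced by $\upsilon^\reals$. Because both adaptations are regular (hence modality- and proposition-preserving and compositional), the two adapted formulas have \emph{identical} propositional and modal skeletons and differ \emph{only} in the numeric interval endpoints at each modal node. This reduces the whole theorem to a local, endpoint-by-endpoint comparison: the claim $d \mdz \adapt{\phi} \Rightarrow d \mdz \upsilon^\reals[\phi]$ will follow if I can show that at each modality the canonical interval is contained in (for existential operators) or contains (for universal operators) the competing interval in the appropriate direction.

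The heart of the argument is to establish that the canonical endpoints are forced. For an \emph{until}/\emph{since} node, $\adapt{}$ uses $[\lfloor l/\delta\rfloor, \lceil u/\delta\rceil]$; I would argue that any c.u.s.\ adaptation must use a lower endpoint \emph{at least} $\lfloor l/\delta\rfloor$ and an upper endpoint \emph{at most} $\lceil u/\delta\rceil$, since otherwise one can exhibit a witnessing non-Berkeley behavior $b \in \bsrd$ satisfying $\phi$ over $\reals$ whose sampling $\samp{b}$ falsifies $\upsilon^\reals[\phi]$ over $\integers$, contradicting c.u.s. The examples in Section~\ref{sec:sampl-invar-mtl} (Figures~\ref{fig:ex2} and~\ref{fig:ex3}) already supply the shape of these witnesses: a behavior in which $\pi_2$ holds precisely in a $\delta$-window straddling a sampling instant forces the existential bound outward to the ceiling, while a behavior in which the universal obligation just barely covers a boundary sampling point forces the universal bound inward to the floor. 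For \emph{release}/\emph{trigger} the directions reverse, matching the floor/ceiling case split in the definition of $\adapt{}$, which is why the endpoint rounding there depends on the open/closed nature of $\langle,\rangle$.

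Having shown each canonical endpoint is extremal among c.u.s.\ choices, I would close by assembling these local containments into the global implication. Since $\adapt{}$ is existentially tightest (smallest satisfying interval) and universally tightest (largest shrinking) precisely in the direction that makes $\adapt{\phi}$ \emph{logically strongest}, the discrete-time models of $\adapt{\phi}$ form a subset of those of $\upsilon^\reals[\phi]$; monotonicity of the MTL semantics in the interval bounds (an existential operator with a larger interval, or a universal operator with a smaller interval, is entailed by the version with the canonical interval) then gives $d \mdz \adapt{\phi} \Rightarrow d \mdz \upsilon^\reals[\phi]$ directly, and the compositional cases ($\wedge, \vee$) propagate this through by the induction hypothesis.

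\textbf{The main obstacle} I anticipate is the witness-construction step for the lower bounds: showing that $\upsilon^\reals$ \emph{cannot} use a strictly tighter endpoint than $\adapt{}$ requires, for each rounding rule and each bracket type, a non-Berkeley dense-time behavior that is simultaneously (i) a model of $\phi$ over $\reals$, (ii) genuinely non-Berkeley for the \emph{same} $\delta$, and (iii) such that its canonical sampling pins the discrete interval exactly at the floor or ceiling. Constructing these witnesses so that they respect non-Berkeleyness while still extremizing the relevant endpoint — particularly handling the four open/closed bracket combinations of the \emph{release} case, where the floor/ceiling choice interacts with strictness — is the delicate part, and it is exactly the point where the asymmetries baked into the definition of $\adapt{}$ must be shown to be not merely sufficient but necessary.
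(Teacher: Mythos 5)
Your overall skeleton — use regularity to reduce to a single modal node, establish that the canonical interval is extremal among c.u.s.\ choices, then conclude by monotonicity of the semantics in the interval bounds — is essentially the paper's strategy (the paper runs it as a proof by contradiction, localizing the disagreement to one modality by induction on the shared propositional structure). But there are two problems, one of which is a genuine missing idea. First, a consistency slip: in your ``heart of the argument'' paragraph you assert that any c.u.s.\ adaptation must use a lower endpoint \emph{at least} $\lfloor l/\delta \rfloor$ and an upper endpoint \emph{at most} $\lceil u/\delta \rceil$ for \emph{until}, i.e., a competitor interval $K \subseteq J = [\lfloor l/\delta \rfloor, \lceil u/\delta \rceil]$. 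That claim is both unprovable and the wrong direction: enlarging an existential interval only \emph{weakens} the adapted formula, so an adaptation using, say, $[\lfloor l/\delta \rfloor - 7, \lceil u/\delta \rceil + 7]$ is trivially still c.u.s.\ and no witness refutes it. What c.u.s.\ actually forces is $J \subseteq K$: if some integer $n \in J$ lies outside $K$, a behavior whose only sampled witness sits at discrete distance $n$ breaks closure under sampling. Your first and third paragraphs state the correct containments (and your described witnesses, which push bounds \emph{outward}, support them), so this is repairable — but as written the inequalities you propose to prove would yield the converse implication $d \mdz \upsilon^\reals[\phi] \Rightarrow d \mdz \adapt{\phi}$, not the theorem.

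The substantive gap is in the witness construction, and it is not the obstacle you flagged (non-Berkeleyness across bracket cases) but the \emph{global} satisfiability semantics. Recall that c.u.s.\ only constrains $\upsilon^\reals$ on behaviors $b \in \worldnb{\phi}{\reals}{\delta}$, i.e., behaviors satisfying $\phi$ at \emph{every} real instant. A behavior in which $\pi_2$ holds precisely in a $\delta$-window pinning the sampled witness at a boundary will in general falsify the \emph{until} formula at instants after the window, so it is not a global model of $\phi$ and imposes nothing on $\upsilon^\reals$; your sketch as stated produces no contradiction. The paper's proof resolves exactly this point: it replaces the arguments by fresh letters $\yy, \zz$ (so the modality is decoupled from the original propositions), builds a step-like behavior $c$ for $\varphi = \untilMTL{I}{\yy,\zz}$ in which $\zz$ becomes true right after the critical sampling instant, and then — crucially — adds one more fresh letter $\xx$ that holds exactly where $\varphi$ fails, taking $\zeta = \xx \vee \varphi$. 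The resulting $c'$ is a global model of $\zeta$, is non-Berkeley for $\delta$, and regularity gives $\upsilon^\reals[\xx] = \xx$, so $\xx$ being false at the critical instant preserves $\samp{c'} \not\mdz \upsilon^\reals[\zeta]$ while Theorem \ref{th:sinv-flat} supplies $\samp{c'} \mdz \adapt{\zeta}$, contradicting c.u.s. This fresh-proposition patching (the same device that makes flattening work under global satisfiability) is the idea your plan is missing; without it, every extremal witness you construct fails condition (i) of your own list, regardless of how carefully you handle the open/closed bracket cases.
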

\begin{proof}[Proof]
The proof relies on techniques very similar to those of Theorem \ref{th:sinv-flat}, hence only a proof sketch is provided.

The proof goes by contradiction: let $\upsilon^{\reals}$ be a $\reals$-to-$\integers$ regular adaptation such that there exist $\phi \in \flatMTL$ and $d \in \bsz$ with $d \mdz \unadapt{\phi}$ but $d \not\mdz \upsilon^{\reals}[\phi]$.
Then, we build $c' \in \bsrd$ and $\zeta \in \flatMTL$ such that $c' \mdr \zeta$, $\samp{c'} \mdz \adapt{\zeta}$, but $\samp{c'} \not\mdz \upsilon^\reals[\zeta]$; hence $\zeta$ is not c.u.s.\ with respect to $\upsilon^{\reals}$ and $\samp{}$.

Let $k \in \integers$ be such that $d(k) \not\mdz \upsilon^{\reals}[\phi]$, while recall that $d(k) \mdz \adapt{\phi}$.
The propositional structure of $\adapt{\phi}$ and $\upsilon^{\reals}[\phi]$ is the same, since both adaptations are regular.
Then, by induction on the same propositional structure of $\adapt{\phi}$ and $\upsilon^{\reals}[\phi]$, one can show that there exists a \emph{modality} $\genop{}{} \in \{ \untilMTL{}{}, \relMTL{}{}, \sinceMTL{}{}, \redMTL{}{} \}$ such that $d(k) \mdz \adapt{\genop{I}{\pi_1, \pi_2}}$ and $d(k) \not\mdz \upsilon^{\reals}[\genop{I}{\pi_1, \pi_2}]$ for some $\pi_1, \pi_2 \in \PL$.
Let us write $\genop{J}{\beta_1, \beta_2}$ for $\adapt{\genop{I}{\pi_1, \pi_2}}$, and $\genop{K}{\gamma_1, \gamma_2}$ for $\upsilon^\reals[\genop{I}{\pi_1, \pi_2}]$.
The proof goes on by case discussion on the modality $\genop{}{}$; for brevity we just show the case $\genop{}{} = \untilMTL{}{}$, but the remaining cases can be handled all similarly.

Let $i$ be one of $1, 2$.
From the definition of $\adapt{}$ and the regularity of $\upsilon^\reals$, it is $\beta_i \equiv \gamma_i \equiv \pi_i$.
In all, there exists a $u \in J$ s.t.\ $d(k+u) \mdz \pi_2$ and $d(h) \mdz \pi_1$ for all $h \in [0,u) \oplus k$.
Since we are assuming that $d(k) \not\mdz \untilMTL{K}{\gamma_1, \gamma_2}$, it must be
$u \not\in K$, so either $K \subseteq (-\infty,u-1]$ or $K \subseteq [u+1, +\infty)$.
The remainder assumes $K \subseteq (-\infty,u-1]$ and $u > 1$; the other cases can be handled along the same lines and are omitted for brevity.

The next step builds a new formula $\varphi \triangleq \untilMTL{I}{\yy, \zz}$ with fresh propositional letters $\yy, \zz$; and a new discrete-time behavior $e$ over $\{\yy, \zz\}$ defined as follows: $\zz \in e(j)$ iff $j \geq k+u$, and $\yy \in e(j)$ for all $j \in \integers$.
It follows that $e(k) \mdz \untilMTL{J}{\yy, \zz}$ but $e(k) \not\mdz \untilMTL{K}{\yy, \zz}$ because $\max K < u$.
Also notice that $\adapt{\varphi} = \untilMTL{J}{\yy, \zz}$ and $\upsilon^{\reals}[\varphi] = \untilMTL{K}{\yy, \zz}$.
Take a $c$ built as follows: $\zz \in c(t)$ iff $t > z + (k+u-1)\delta$ and $\yy \in c(t)$ for all $t \in \reals$.
It should be clear that $c \in \bsrd$, $c \in \unsamp{e}$, and $c(z+k\delta) \mdr \untilMTL{I}{\yy, \zz} = \varphi$, 
because $\zz$ holds to the right of $z + (k+u-1)\delta$ but is false before and at it (over $z + \delta( k \oplus K)$).
In addition, one can see that $c(t) \mdr \varphi$ holds for all $t \geq z + k\delta$.

The last step is as follows: let us build a new non-Berkeley dense-time behavior $c' \in \bsrd$ over propositions in $\{\yy, \zz\} \cup \{\xx\}$, i.e., the same propositions as $c$ plus a fresh one denoted by $\xx$.
$c'|_{\alphabet}$ is identical to $c$, whereas $\xx \in c'(t)$ iff $c(t) \not\mdr \varphi$; hence in particular $c'(z+k\delta) \mdr \varphi \wedge \neg \xx$.
Notice that such $c'$ is non-Berkeley for $\delta$.
Finally, consider formula $\zeta \in \flatMTL$ defined as $\xx \vee \varphi$. 
Clearly, $c' \mdr \zeta$ is the case by construction; hence $\samp{c'} \mdz \adapt{\zeta}$ follows from Theorem \ref{th:sinv-flat}.
Also, $\samp{c'}(k) \not\mdz \upsilon^\reals[\varphi]$ as the truth of $\varphi$ does not depend on $\xx$; and $\samp{c'}(k) \not\mdz \upsilon^\reals [\xx]$ as the regularity of $\upsilon^\reals$ implies $\upsilon^\reals [\xx] = \xx$ and $c'(z+k\delta) \not\mdr \xx$.
In all we have $c' \mdr \zeta$, $\samp{c'} \mdz \adapt{\zeta}$, and $\samp{c'} \not\mdz \upsilon^\reals[\zeta]$.
Hence, c.u.s.\ does not hold for formula $\zeta$ with respect to adaptation $\upsilon$ and $\samp{}$, which is the desired contradiction.
\end{proof}

With a very similar approach the following theorem about $\unadapt{}$ adaptation can be proved.
\begin{theorem}[Optimality of regular $\unadapt{}$] \label{th:optimal-canonical-2}
Let $\upsilon^\integers$ be a regular $\integers$-to-$\reals$ adaptation for all modalities $\untilMMTL{\ }{}, \sinceMMTL{\ }{}, \relMTL{}{}, \redMTL{}{}$ such that \emph{any} discrete-endpoint $\phi \in \flatMTL{}$ using only modalities in $\{\untilMMTL{\ }{}, \sinceMMTL{\ }{}, \relMTL{}{}, \redMTL{}{}\}$ is c.u.i.s.\ with respect to it and $\samp{}$.
Then, $c \mdr \unadapt{\phi}$ implies $c \mdr \upsilon^{\integers}[\phi]$ for any behavior $c \in \bsrd$.
\end{theorem}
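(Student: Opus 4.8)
The plan is to mirror, with the dense and discrete domains interchanged, the argument of Theorem~\ref{th:optimal-canonical-1}, replacing closure under sampling by closure under inverse sampling throughout. I would argue by contradiction: assume $\upsilon^\integers$ is a regular $\integers$-to-$\reals$ adaptation that achieves c.u.i.s.\ for the modalities $\untilMMTL{\ }{}, \sinceMMTL{\ }{}, \relMTL{}{}, \redMTL{}{}$, yet there exist a discrete-endpoint $\phi \in \flatMTL{}$ and a behavior $c \in \bsrd$ with $c \mdr \unadapt{\phi}$ but $c \not\mdr \upsilon^\integers[\phi]$. Since both adaptations are regular, $\unadapt{\phi}$ and $\upsilon^\integers[\phi]$ share the same propositional and modal structure and differ only in the intervals attached to their modalities. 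From $c \not\mdr \upsilon^\integers[\phi]$ pick a dense instant $t_0$ with $c(t_0) \not\mdr \upsilon^\integers[\phi]$, while $c(t_0) \mdr \unadapt{\phi}$ holds since $c \mdr \unadapt{\phi}$ globally. Exactly as in Theorem~\ref{th:optimal-canonical-1}, an induction on the common structure then isolates a single occurrence of a modality $\genop{}{} \in \{\untilMMTL{\ }{}, \sinceMMTL{\ }{}, \relMTL{}{}, \redMTL{}{}\}$ with $c(t_0) \mdr \unadapt{\genop{I}{\pi_1, \pi_2}}$ but $c(t_0) \not\mdr \upsilon^\integers[\genop{I}{\pi_1, \pi_2}]$ for some $\pi_1, \pi_2 \in \PL$; writing $\genop{J}{\pi_1, \pi_2}$ for the canonical image and $\genop{K}{\pi_1, \pi_2}$ for the image under $\upsilon^\integers$ (the arguments coincide by propositional-preservation), the whole discrepancy lies in $J$ versus $K$. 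Note that this is where the restriction to the matching variants is essential: only for them is $\unadapt{}$ modality-preserving, so that $\genop{J}{\pi_1,\pi_2}$ has the required form.

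Next I would turn the interval discrepancy into a concrete counterexample. For the existential modalities $\untilMMTL{\ }{}, \sinceMMTL{\ }{}$, satisfaction of $\genop{J}{\pi_1,\pi_2}$ together with failure of $\genop{K}{\pi_1,\pi_2}$ at $c(t_0)$ forces a matching witness at a distance $d_0 \in J \setminus K$ (otherwise $J \subseteq K$ and $\genop{J}{}\Rightarrow\genop{K}{}$); for the universal modalities $\relMTL{}{}, \redMTL{}{}$ one symmetrically gets a violated obligation at some $d_0 \in K \setminus J$. I would carry out the matching-until case explicitly. Introducing fresh letters $\yy, \zz$ and the discrete formula $\varphi \triangleq \untilMMTL{I}{\yy, \zz}$, so that $\unadapt{\varphi} = \untilMMTL{J}{\yy, \zz}$ and $\upsilon^\integers[\varphi] = \untilMMTL{K}{\yy, \zz}$, I build a discrete behavior $e \in \bsz$ with a single match (say $\zz \in e(j)$ together with $\yy$ exactly at the matching index, $\yy$ holding on the whole prefix) realizing the extreme discrete distance of $J$, so that $e(k) \mdz \varphi$ at the relevant index $k$. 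The delicate ingredient is then an adversarial non-Berkeley inverse sampling $c' \in \unsamp{e} \cap \bsrd$ that lets $\yy$ (or $\zz$) switch only part-way through a sampling interval, so that from a dense instant $t_0$ — in general \emph{not} a sampling instant — the unique matching distance in $c'$ lands in $J \setminus K$, exploiting the full one-period uncertainty window on the offending side.

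To supply the \emph{global} discrete model demanded by c.u.i.s., I would dualize the $\xx$-trick of Theorem~\ref{th:optimal-canonical-1} to the discrete side: add a fresh proposition $\xx$, set $\xx \in e(j)$ exactly at those $j$ with $e(j) \not\mdz \varphi$, and put $\zeta \triangleq \xx \vee \varphi$, so that $e \mdz \zeta$ holds everywhere while $\xx \not\in e(k)$. Arranging $c'$ so that $\xx$ is false in a neighbourhood of $t_0$ (compatible with the fixed sample values of $e$ and with non-Berkeleyness), one gets $c'(t_0) \not\mdr \xx$ and $c'(t_0) \not\mdr \upsilon^\integers[\varphi]$; since $\upsilon^\integers$ is regular, $\upsilon^\integers[\zeta] \equiv \xx \vee \upsilon^\integers[\varphi]$, whence $c'(t_0) \not\mdr \upsilon^\integers[\zeta]$. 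But c.u.i.s.\ of $\upsilon^\integers$ applied to the global model $e \mdz \zeta$ and to $c' \in \unsamp{e} \cap \bsrd$ forces $c' \mdr \upsilon^\integers[\zeta]$ at every instant, which is the desired contradiction. As a bonus, Theorem~\ref{th:sinv-flat} applied to $e \mdz \zeta$ independently yields $c' \mdr \unadapt{\zeta}$, hence $c'(t_0) \mdr \untilMMTL{J}{\yy,\zz}$, confirming that the gap is genuinely between $\unadapt{}$ and $\upsilon^\integers$ and sparing a by-hand verification of the canonical side.

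I expect the main obstacle to be the construction of the adversarial inverse sampling $c'$: one must place the change instants of $\yy$ and $\zz$ so as to use the \emph{entire} one-period uncertainty window on the correct side while (i) remaining a genuine inverse sampling of $e$, (ii) staying non-Berkeley for $\delta$ even after inserting $\xx$ with $\xx$ false at $t_0$, and (iii) correctly handling the open/closed endpoints of $J$ and $K$ in combination with the matching (and, for the past operators, mirror-image) semantics. This bookkeeping is precisely that of Lemma~\ref{lemma:changepoints} and of the closure-under-inverse-sampling part of Theorem~\ref{th:sinv-flat}; the remaining cases $\sinceMMTL{\ }{}, \relMTL{}{}, \redMTL{}{}$ follow along the same lines and would be omitted for brevity.
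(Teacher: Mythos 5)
Your proposal is correct and is essentially the paper's own argument: the paper offers no separate proof of Theorem~\ref{th:optimal-canonical-2}, stating only that it follows ``with a very similar approach'' to Theorem~\ref{th:optimal-canonical-1}, and your mirrored proof by contradiction --- using regularity to isolate a single modality with an interval discrepancy $J$ vs.~$K$, extracting a witness distance in $J \setminus K$ (resp.~a violated obligation in $K \setminus J$ for the universal modalities), building a fresh-letter formula with an adversarial discrete model and a non-Berkeley inverse sampling, and dualizing the $\xx$-disjunct trick to the discrete side to obtain the global model that c.u.i.s.\ requires --- is precisely that mirroring. You also correctly flag the only genuinely new bookkeeping relative to Theorem~\ref{th:optimal-canonical-1} (the failing instant $t_0$ need not be a sampling instant, and the counterexample must exploit the full one-period uncertainty window of the inverse sampling), which is the part the paper leaves implicit.
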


\subsection{Generalizations} \label{sec:generalizations}
Theorem \ref{th:sinv-flat} proved that \flatMTL{} is sampling invariant.
We claimed previously that \flatMTL{} is an MTL fragment of significant expressiveness; the specification examples in \cite{FPR08-FM08,FPR08-ICFEM08} demonstrate this in practice.
Nevertheless, we are still interested in investigating to what extent Theorem \ref{th:sinv-flat} can be generalized to larger classes of MTL formulas.
More precisely, given that Theorems \ref{th:optimal-canonical-1} and \ref{th:optimal-canonical-2} showed that canonical adaptations are optimal, we look for larger MTL fragments that are still sampling invariant with respect to \adapt{} and \unadapt{}.
Thus, henceforth sampling invariance will always implicitly refer to sampling invariance with respect to \adapt{} and \unadapt{}.

Let us start by illustrating the rather apparent fact that, for any sampling period $\delta$, there exist MTL formulas that are not s.i.\ with respect to $\delta$.
\begin{example}[A formula not c.u.s.] \label{ex:noncus}
For an arbitrary sampling period $\delta$, let us consider formula $\psi_\delta = \Som{\boxMTL{\leq \delta}{\pp}}$ and show that it is not c.u.s.\ with respect to $\delta$.
Consider any $c \in \bsr_\delta$ such that $\pp \in c(t)$ iff $t \in V$ for some interval $V$ such that $\delta < |V| < 2\delta$; clearly, $c \mdr \psi_\delta$.
However, for any $z$ such that $\delta + z + \delta\lceil (\inf V - z)/\delta \rceil > \sup V$, $\pp$ holds at one unique sampling instant over $\samp{c}$ (see Figure \ref{fig:noncus}).
Hence, $\samp{c} \not\mdz \adapt{\psi_\delta}$, where $\adapt{\psi_\delta}$ corresponds to $\Som{\boxMTL{\leq 1}{\pp}}$, because $\boxMTL{\leq 1}{\pp}$ requires $\pp$ to hold over \emph{two} adjacent time instants.
From the fact that the choice of origin $z$ is arbitrary in the definition of sampling invariance (Definition \ref{def:samplinginvariance}), it follows that $\psi_\delta$ is not c.u.s.
\begin{figure}[!h]
  \centering
  \includegraphics{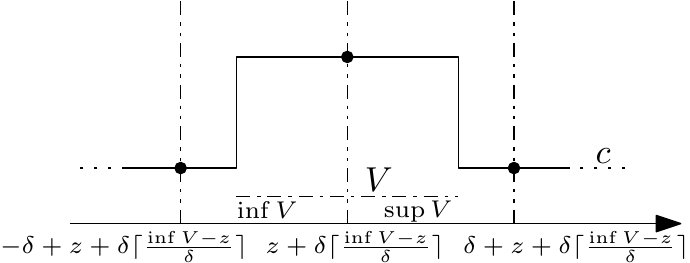}
  \caption{Behavior $c$ and its sampling $\samp{c}$.}
  \label{fig:noncus}
\end{figure}
\end{example}

\subsubsection{Shiftable Formulas}
Examples \ref{ex:noncus} and \ref{ex:nonflattable} suggest a straightforward criterion to identify non-flat MTL formulas that are c.u.s.: if non-Berkeleyness can be ``lifted'' from propositional letters to the truth value of some nested sub-formula $\lambda$, then the nesting formula containing $\lambda$ as a sub-formula can be flattened to one that is equi-satisfiable over non-Berkeley behaviors and does not introduce additional constraints.
To formalize this notion, we introduce the following.\footnote{This notion is very similar to the notion of stability introduced in \cite{Rab03}.}
\begin{definition}[$\epsilon$-shiftability] \label{def:shiftability}
Formula $\phi$ is $\epsilon$-shiftable, for some positive real $\epsilon$, iff $b_\phi \in \bsr_\epsilon$ holds for all $b \in \bsr_\epsilon$.
If $\phi$ is $\epsilon$-shiftable for \emph{any} $\epsilon$, it is called \emph{shiftable}.
\end{definition}

Shiftability provides a straightforward condition to determine larger MTL subsets that are c.u.s.\ and c.u.i.s., as the following theorem shows.\footnote{Recall the definition of \flatMTLplus{\Upsilon} at the end of Section \ref{sec:mtl-syntax}.}
\begin{theorem} \label{th:shiftable-to-si}
Let $\psi$ be a shiftable formula.
\begin{enumerate}
\item \label{part0} $\flatMTLplus{\{\psi\}}$ and $\flatMTL$ are equi-satisfiable over $\bsr_\delta$ for any $\delta$.
\item \label{part1} If $\psi$ and $\neg \psi$ are c.u.s., and $\neg \adapt{\psi} \equiv \adapt{\neg \psi}$ for all $\delta$, then all $\flatMTLplus{\{\psi\}}$ formulas are c.u.s.
\item \label{part2} If, for all $\psi' \in \unadaptinv{\psi}$ (where $\unadaptinv{}$ is the preimage of $\unadapt{}$), $\psi'$ and $\neg \psi'$ are c.u.i.s., and $\neg \unadapt{\psi'} \equiv \unadapt{\neg \psi'}$ for all $\delta$, then all $\flatMTLplus{\unadaptinv{\psi}}$ formulas are c.u.i.s.
\end{enumerate}
\end{theorem}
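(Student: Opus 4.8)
My plan rests on a single observation: because $\psi$ is shiftable, over any non-Berkeley behavior its truth value behaves exactly like that of an ordinary propositional letter, so the machinery of Theorem~\ref{th:sinv-flat} applies once $\psi$ is replaced by a fresh letter. For part~\ref{part0} the plan is a standard flattening. Given $\phi \in \flatMTLplus{\{\psi\}}$, I introduce a fresh letter $\aaa \notin \alphabet$, write $\phi[\aaa/\psi]$ for the $\flatMTL$ formula obtained by replacing every modality argument $\psi$ by $\aaa$, and consider $\phi[\aaa/\psi] \wedge (\aaa \Leftrightarrow \psi)$ (under the global semantics the conjunction already forces the equivalence everywhere). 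Since $\flatMTL \subseteq \flatMTLplus{\{\psi\}}$, one direction is immediate; for the other I show that $\phi$ and its flattening are equi-satisfiable over $\bsrd$. The backward implication is trivial, as any $\bsrd$-model of the flattening makes $\aaa$ and $\psi$ coincide and restricts to a $\bsrd$-model of $\phi$. The forward implication is where shiftability is indispensable: from $b \mdr \phi$ with $b \in \bsrd$ I form the derived behavior $b_\psi$ and rename it, obtaining $b^\ast = (b_\psi)^{\psi \setminus \aaa}$; shiftability guarantees $b_\psi \in \bsrd$, so $b^\ast \in \bsrd$, and by construction $\aaa$ tracks $\psi$, whence $b^\ast$ satisfies the flattening. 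This is precisely the step that collapses in Example~\ref{ex:nonflattable} for a non-shiftable sub-formula. The flattening lies in $\flatMTL$ whenever $\psi$ is itself flat, which is the case in the intended applications.

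For part~\ref{part1} I would reduce to Theorem~\ref{th:sinv-flat} rather than repeat its induction. The crux is a pointwise correspondence at sampling instants $t = z + k\delta$: for every $b \in \bsrd$, $b(t) \mdr \psi$ iff $\samp{b}(k) \mdz \adapt{\psi}$. Its forward direction is the pointwise form of \cus\ for $\psi$ (which, as in the proof of Theorem~\ref{th:sinv-flat}, is what one actually establishes at sampling instants); its converse follows by applying the pointwise \cus\ of $\neg\psi$, rewriting $\adapt{\neg\psi}$ as $\neg\adapt{\psi}$ via the hypothesis $\neg\adapt{\psi} \equiv \adapt{\neg\psi}$, and contraposing. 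With this correspondence in hand, given $\phi \in \flatMTLplus{\{\psi\}}$ and $b \in \worldnb{\phi}{\reals}{\delta}$, I pass to $b_\psi \in \bsrd$ (shiftability), rename $\psi$ to $\aaa$, and regard $\phi[\aaa/\psi]$ as a plain $\flatMTL$ formula. Theorem~\ref{th:sinv-flat} yields $\samp{(b_\psi)^{\psi\setminus\aaa}} \mdz \adapt{\phi[\aaa/\psi]}$, and compositionality and propositional-preservation of $\adapt{}$ give $\adapt{\phi[\aaa/\psi]} = (\adapt{\phi})[\aaa/\adapt{\psi}]$. The correspondence shows that $\aaa$ and $\adapt{\psi}$ carry the same truth value at every discrete instant of the sampling, so substituting $\adapt{\psi}$ back for $\aaa$ preserves satisfaction and produces $\samp{b} \mdz \adapt{\phi}$, that is, $\phi$ is \cus.

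Part~\ref{part2} is the exact dual. Fixing a discrete model $b \mdz \phi$ with $\phi \in \flatMTLplus{\unadaptinv{\psi}}$ and an arbitrary $b' \in \unsamp{b} \cap \bsrd$, and taking $\psi' \in \unadaptinv{\psi}$ (so $\unadapt{\psi'} = \psi$), the dual correspondence reads $b(k) \mdz \psi'$ iff $b'(z+k\delta) \mdr \unadapt{\psi'}$, obtained from the pointwise \cuis\ of $\psi'$ and of $\neg\psi'$ together with $\neg\unadapt{\psi'} \equiv \unadapt{\neg\psi'}$. This is exactly what makes the lift $(b'_\psi)^{\psi\setminus\aaa}$ — non-Berkeley by shiftability of $\psi = \unadapt{\psi'}$ — a genuine element of $\unsamp{(b_{\psi'})^{\psi'\setminus\aaa}} \cap \bsrd$, so that Theorem~\ref{th:sinv-flat} applied to the flat formula $\phi[\aaa/\psi']$ gives $\unadapt{\phi[\aaa/\psi']} = (\unadapt{\phi})[\aaa/\psi]$ over the lift; substituting $\psi$ back for $\aaa$ then yields $b' \mdr \unadapt{\phi}$. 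When several distinct members of $\unadaptinv{\psi}$ occur as arguments I would use one fresh letter per formula.

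The main obstacle in all three parts is establishing these pointwise biconditionals, and the hypotheses are tailored exactly for that purpose: \cus/\cuis\ of both polarities pins the truth of $\psi$ (resp.\ $\psi'$) under sampling, while commutation of the adaptation with negation ($\neg\adapt{\psi} \equiv \adapt{\neg\psi}$, resp.\ $\neg\unadapt{\psi'} \equiv \unadapt{\neg\psi'}$) turns the two one-sided implications into an equivalence. The remaining ingredient — the non-Berkeleyness required to treat $\psi$ (resp.\ $\psi'$) as an atom and invoke Theorem~\ref{th:sinv-flat} — is supplied in each case by shiftability through the behaviors $b_\psi$ and $b'_\psi$. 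I expect the delicate point to be the careful bookkeeping of the substitutions $\phi[\aaa/\psi]$ and their interaction with $\adapt{}$ and $\unadapt{}$, rather than any genuinely new idea beyond those already used in Theorem~\ref{th:sinv-flat}.
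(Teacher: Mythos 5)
Your proposal is correct and takes essentially the same route as the paper's proof: flatten $\phi$ via a fresh letter $\aaa$ tracking $\psi$, use shiftability to keep the derived behavior $b_\psi^{\psi\setminus\aaa}$ in $\bsrd$, invoke Theorem~\ref{th:sinv-flat} on the flat skeleton, and use \cus{}/\cuis{} of both polarities together with $\neg\adapt{\psi} \equiv \adapt{\neg\psi}$ (resp.\ its dual) to transfer the tracking equivalence across the sampling and substitute $\adapt{\psi}$ (resp.\ $\psi$) back for $\aaa$. Your explicit pointwise biconditional at sampling instants is exactly what the paper uses implicitly when it derives $b' \mdz \aaa \Leftrightarrow \adapt{\psi}$ from $\adapt{\aaa \Leftrightarrow \psi} = \aaa \Leftrightarrow \adapt{\psi}$, and your spelled-out third part (including one fresh letter per element of $\unadaptinv{\psi}$) is just an elaboration of the paper's ``all similar'' remark.
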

\begin{proof}
(\ref{part0}).
Every $\flatMTLplus{\{\psi\}}$ formula $\phi$ can be flattened into a $\flatMTL$ formula $\overline{\phi}$ by introducing an auxiliary propositional letter $\aaa$ that replaces every occurrence of $\psi$ and is declared to be logically equivalent to $\psi$ itself.
Since $\psi$ is shiftable, $b \in \bsr_\delta$ implies $b_{\psi}^{\psi \setminus \aaa} \in \bsr_\delta$; also, $b_{\psi}^{\psi \setminus \aaa} \in \bsr_\delta$ implies $b \in \bsr_\delta$ because $b$ has no more transition points than $b_{\psi}^{\psi \setminus \aaa}$.
Hence $\phi$ and $\overline{\phi}$ are equi-satisfiable.

(\ref{part1}).
Let $\phi$ be any formula in $\flatMTLplus{\{\psi\}}$, and consider a behavior $b \in \bst_\delta$ such that $b \mdr \phi$.
Let $\overline{\phi}$ denote the \emph{\flatMTL{}} formula obtained by replacing every occurrence of $\psi$ in $\phi$ by a fresh proposition $\aaa \in \alphabet$, and let $\phi^\circ$ be $\overline{\phi} \wedge \left( \aaa \Leftrightarrow \psi \right)$.
Clearly, $b_{\psi}^{\psi\setminus\aaa} \mdr \phi^\circ$, and $b_{\psi}^{\psi\setminus\aaa} \in \bst_\delta$ as we showed in (\ref{part0}).
Since $\overline{\phi}$ is flat, it is c.u.s.\ from Theorem \ref{th:sinv-flat}; hence $b' \mdz \adapt{\overline{\phi}}$ where $b' = \samp{b_{\psi}^{\psi\setminus\aaa}}$.
Notice that: $\adapt{\aaa \Leftrightarrow \psi} = \adapt{\aaa \wedge \psi \vee \neg \aaa \wedge \neg \psi}$ can be written as $\aaa \wedge \adapt{\psi} \vee \neg \aaa \wedge \neg \adapt{\psi}$.
From the c.u.s.\ of both $\psi$ and $\neg \psi$ and the fact that $\neg \adapt{\psi} \equiv \adapt{\neg \psi}$ we have $\adapt{\aaa \Leftrightarrow \psi} = \aaa \Leftrightarrow \adapt{\psi}$ and $b' \mdz \aaa \Leftrightarrow \adapt{\psi}$.
Let $\phi'$ be obtained from $\adapt{\overline{\phi}}$ by substituting every occurrence of $\aaa$ with $\adapt{\psi}$.
Hence, $b' \mdz \phi'$, which proves that $\phi$ is c.u.s.

(\ref{part2}).
All similar to (\ref{part1}), by noticing that $\unadapt{\psi'} = \psi$ for all $\psi' \in \unadaptinv{\psi}$ by definition of preimage.
\end{proof}

\subsubsection{LTL is Nestable}
Theorem \ref{th:shiftable-to-si} is applicable to a significant class of MTL formulas, namely qualitative formulas.
Indeed, LTL formulas are shiftable.\footnote{Note that non-strictness of LTL operators is necessary to have shiftability.}
\begin{lemma} \label{lem:ltl-shiftable}
All LTL formulas are shiftable.
\end{lemma}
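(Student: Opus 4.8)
We must show that every LTL formula $\psi$ is shiftable, i.e.\ that for every positive real $\epsilon$ and every $b \in \bsr_\epsilon$, the derived behavior $b_\psi$ lies in $\bsr_\epsilon$ as well.

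Let me understand what needs to happen. $b_\psi$ adds the "proposition" $\psi$ to $b(t)$ exactly when $b(t) \mdr \psi$. So $b_\psi$ has the same transitions as $b$, plus possibly extra transitions where the truth value of $\psi$ flips. To stay in $\bsr_\epsilon$, every maximal constancy interval of $b_\psi$ must contain a closed interval of size $\epsilon$. The key claim is that the truth value of any LTL formula is itself a piecewise-constant function whose constancy intervals are "at least as long" as those of $b$ — i.e., the truth regions of $\psi$ are unions of (adjacent) constancy intervals of $b$, so they inherit the $\epsilon$-spacing.

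**Plan.** I would prove by induction on the structure of $\psi$ that the set $\{t : b(t) \mdr \psi\}$ is a union of maximal constancy intervals of $b$ — more precisely, that $b_\psi$ (viewed over the alphabet $\alphabet \cup \{\psi\}$) is still non-Berkeley for $\epsilon$. Equivalently, the transition instants of $\psi$'s truth value form a subset of $\tau(b)$ (the transition instants of $b$). This last formulation is the cleanest: if $\tau(b_\psi) \subseteq \tau(b)$, then every constancy interval of $b_\psi$ is a union of one or more adjacent constancy intervals of $b$, hence is at least as large as some single constancy interval of $b$, and therefore contains a closed interval of size $\epsilon$ by the hypothesis $b \in \bsr_\epsilon$.

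**Key steps.** First I would handle the base cases: for a propositional letter $\pp$, the truth of $\pp$ changes only at instants in $\tau(b)$ by definition. Boolean combinations ($\wedge,\vee,\neg$) change only where one of their constituents changes, so their transitions stay inside $\tau(b)$. The interesting cases are the temporal modalities in their \emph{qualitative} (LTL) form — $\untilMTL{}{\phi_1,\phi_2}$ and $\relMTL{}{\phi_1,\phi_2}$ with interval $[0,+\infty)$, and their past duals. Here I would argue that over a behavior whose transitions are discrete and $\epsilon$-separated, the truth value of a non-strict qualitative until is constant on each open constancy interval of the underlying behavior: on the interior of a constancy interval $I$ of $b$, the truth of $\untilMTL{}{\phi_1,\phi_2}$ depends only on the (constant) values of $\phi_1,\phi_2$ there and on the "tail" of the behavior from $I$ onward, which does not vary as $t$ ranges over the interior of $I$. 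A careful statement: if $t,t'$ lie in the same open constancy interval of $b$ with $t < t'$, then $b(t) \mdr \untilMTL{}{\phi_1,\phi_2}$ iff $b(t') \mdr \untilMTL{}{\phi_1,\phi_2}$; so the truth value of the until can change only at transition instants of $b$, i.e.\ at points of $\tau(b)$.

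**Main obstacle.** The delicate point — and the place where non-strictness is essential (as the footnote flags) — is verifying that the truth value of the until/release does not flip \emph{within} an open constancy interval, particularly at the \emph{endpoints} of constancy intervals where $b$ may have punctual behavior or one-sided limits. With the non-strict semantics, $\untilMTL{}{\phi_1,\phi_2}$ evaluated at $t$ includes the witness $d=0$, i.e.\ it can be satisfied by $\phi_2$ holding \emph{at} $t$; this aligns the truth value of the modality with the closure behavior of $b$ at transition points and prevents the kind of isolated-point satisfaction that would create a new transition strictly inside an interval. I would therefore spend the bulk of the argument on the until (and dually the release) case, checking both the forward direction (a witness at $t$ yields a witness at a nearby $t'$ in the same interval) and its converse, and confirming that the only instants where the truth value can change coincide with genuine transitions of $b$. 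Since $b \in \bsr_\epsilon$ guarantees $\tau(b)$ is discrete and $\epsilon$-separated, the inductive conclusion $\tau(b_\psi) \subseteq \tau(b)$ then immediately gives $b_\psi \in \bsr_\epsilon$, completing the induction. Because $\epsilon$ was arbitrary, $\psi$ is shiftable.
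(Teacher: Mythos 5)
Your proposal is correct and follows essentially the same route as the paper's proof: a structural induction establishing $\tau(b_\psi) \subseteq \tau(b)$ (so the truth value of each subformula can change only at transition instants of $b$, making constancy intervals of $b_\psi$ unions of adjacent constancy intervals of $b$), with the bulk of the care devoted to the non-strict qualitative \emph{until}. Your observation that non-strictness is what rules out isolated-point satisfaction inside or at the endpoints of constancy intervals matches exactly the role it plays in the paper's argument (flagged there in a footnote).
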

\begin{proof}
Let us consider any non-Berkeley behavior $b \in \bst_\delta$ and any \LTL{} formula $\phi$.
By induction, we prove that $\tau(b_\phi) \subseteq \tau(b)$ which subsumes the lemma.

The base case $\phi = \pp$ is trivial.
The case $\phi = \neg \phi_1$ follows from the inductive hypothesis $\tau(b_{\phi_1}) \subseteq \tau(b)$ because $\tau(b_{\phi_1}) = \tau(b_{\neg \phi_1})$.

Let us consider $\phi = \untilMTL{}{\phi_1, \phi_2}$; we consider $b' = b_{\phi}|_\phi$ and prove that $\tau(b') \subseteq \tau(b)$.
To this end, let us first take any $t$ such that $b'(t) \mdr \phi$; hence $b(d) \mdr \phi_2$ for some $d \geq t$, and $b(u) \mdr \phi_1$ for all $u \in [t, d)$.
The semantics of the qualitative \emph{until} entails that $b'(t') \mdr \phi$ holds for all $t \leq t' \leq d$.
Then, $\phi$ cannot become false after $d$ until $\phi_2$ or $\phi_1$ becomes false; similarly, $\phi$ cannot become false before $t$ unless $\phi_1$ becomes false.
A dual argument shows that the same holds for $\neg \phi$.
This establishes that $\tau(b') \subseteq \tau(b)$.

The last case that has to be considered is $\phi = \phi_1 \wedge \phi_2$.
This is straightforward from the inductive hypothesis on $\phi_1$ and $\phi_2$: $\tau(b_{\phi_1}) \subseteq \tau(b)$ and $\tau(b_{\phi_2}) \subseteq \tau(b)$.
In addition, $\tau(b_{\phi_1 \wedge \phi_2}) \subseteq \tau(b_{\phi_1}) \cup \tau(b_{\phi_2})$ from the semantics of conjunction, hence $\tau(b_\phi) \subseteq \tau(b)$.
It is simple to check that $b_\phi \in \bst_\delta$ as well, because no left- and right- discontinuity can occur in $b_\phi$ as a result of applying conjunction.
\end{proof}

Based on the previous lemma, the following corollary of Theorem \ref{th:shiftable-to-si} shows that any LTL qualitative formula can be nested within $\flatMTL{}$ formulas without losing c.u.s.
\begin{corollary}
All $\flatMTLplus{\LTL}$ formulas are c.u.s.
\end{corollary}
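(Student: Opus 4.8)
The plan is to obtain the corollary as an instance of Theorem~\ref{th:shiftable-to-si}(\ref{part1}) with parameter set $\Upsilon = \LTL$. That theorem requires, for every candidate nested formula $\psi$, three things: that $\psi$ be shiftable, that both $\psi$ and $\neg\psi$ be c.u.s., and that $\neg\adapt{\psi} \equiv \adapt{\neg\psi}$ for every sampling period $\delta$. For \LTL{} formulas the first and third conditions come essentially for free. Shiftability is exactly Lemma~\ref{lem:ltl-shiftable}. For the commuting condition, observe that every interval occurring in an \LTL{} formula is $[0,+\infty)$, and since $\lfloor 0/\delta\rfloor = 0$ and $\lceil +\infty/\delta\rceil = +\infty$ for every $\delta > 0$, the adaptation $\adapt{}$ maps each such interval to itself; hence $\adapt{}$ is the identity on \LTL{}. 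Because \LTL{} is closed under negation (via the until/release and since/trigger dualities), the negation normal form of $\neg\psi$ is again an \LTL{} formula, so both $\neg\adapt{\psi}$ and $\adapt{\neg\psi}$ reduce to $\neg\psi$. The one genuinely substantive hypothesis that remains is the c.u.s.\ of every \LTL{} formula together with its negation.

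I would establish that remaining fact by induction on the \emph{temporal nesting depth} of \LTL{} formulas. The base case is depth $0$, namely propositional formulas $\pi \in \PL$: these are flat and hence c.u.s.\ by Theorem~\ref{th:sinv-flat}. For the inductive step, write $\Upsilon_n$ for the set of \LTL{} formulas of nesting depth at most $n$; crucially $\Upsilon_n$ is closed under Boolean combination, since connectives do not increase nesting depth. Every \LTL{} formula of depth $n{+}1$ can then be read as a $\flatMTLplus{\Upsilon_n}$ formula, because its outermost (qualitative) temporal operators take arguments of depth at most $n$ --- i.e.\ elements of $\Upsilon_n$ --- and these are joined only by Boolean connectives at the top level. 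By the induction hypothesis each element of $\Upsilon_n$, and its NNF negation (again in $\Upsilon_n$), is c.u.s.; each is shiftable and satisfies the commuting condition by the preceding paragraph. Applying Theorem~\ref{th:shiftable-to-si}(\ref{part1}) with the parameter set $\Upsilon_n$ then yields that every $\flatMTLplus{\Upsilon_n}$ formula, and in particular every depth-$(n{+}1)$ \LTL{} formula, is c.u.s.

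The single point where I would go slightly beyond the literal statement is that Theorem~\ref{th:shiftable-to-si}(\ref{part1}) is phrased for one shiftable formula $\psi$, whereas a $\flatMTLplus{\Upsilon_n}$ formula may contain several distinct \LTL{} atoms. Its proof extends verbatim to a finite set of shiftable atoms: replace each distinct atom $\psi_i$ occurring in $\phi$ by its own fresh proposition $\aaa_i$, form $\overline{\phi} \wedge \bigwedge_i (\aaa_i \Leftrightarrow \psi_i)$, and appeal to the combined shiftability (as in the proof of Theorem~\ref{th:shiftable-to-si}(\ref{part0})) to ensure the enriched behavior still lies in $\bst_\delta$; sampling the flat core through Theorem~\ref{th:sinv-flat} and reinstating each $\aaa_i$ through $\aaa_i \Leftrightarrow \adapt{\psi_i}$ then proceeds unchanged. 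Since any formula mentions only finitely many atoms, this finite version suffices. Feeding the induction back, every \LTL{} formula and its negation are c.u.s., so all hypotheses of Theorem~\ref{th:shiftable-to-si}(\ref{part1}) hold with $\Upsilon = \LTL$ and the corollary follows.

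I expect the main obstacle to be keeping the induction honestly non-circular: the c.u.s.\ of a nested \LTL{} formula must be reduced only to that of \emph{strictly shallower} sub-formulas, which is legitimate precisely because Boolean connectives leave the nesting depth unchanged, so the atoms handed to the flat machinery are genuinely simpler. The surrounding bookkeeping --- that NNF negation preserves membership in \LTL{} and in $\Upsilon_n$, and that introducing the auxiliary propositions $\aaa_i$ keeps the behavior non-Berkeley --- is routine once the depth measure is fixed.
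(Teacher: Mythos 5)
Your proof is correct and follows essentially the same route as the paper's: induction on temporal nesting depth, using Lemma~\ref{lem:ltl-shiftable} for shiftability, Theorem~\ref{th:sinv-flat} at the base, and Theorem~\ref{th:shiftable-to-si}(\ref{part1}) at each level of nesting. Your two refinements --- observing that $\adapt{}$ acts as the identity on \LTL{} formulas (making the condition $\neg\adapt{\psi} \equiv \adapt{\neg\psi}$ immediate, where the paper only says ``one can check by pushing negations down''), and explicitly justifying the extension of Theorem~\ref{th:shiftable-to-si}(\ref{part1}) from a single shiftable atom $\psi$ to a finite set of atoms, which the paper uses implicitly when applying it to $\flatMTLplus{\LTL^k}$ --- make explicit two points the paper's proof glosses over, but do not change the argument.
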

\begin{proof}
The proof goes by induction on the nesting depth (i.e., the maximum number of nested modalities) of \LTL{} formulas.
For any integer $k > 0$, let $\LTL^k$ denote the set of all \LTL{} formulas of nesting depth $k$.

The base case is for any flat \LTL{} formula $\psi_1 \in \LTL^1$.
$\psi_1$ is shiftable from Lemma \ref{lem:ltl-shiftable}; $\psi_1$ and $\neg \psi_1$ are both c.u.s.\ from Theorem \ref{th:sinv-flat} (because $\neg \psi_1$ can also be written as a flat formula); one can check that $\neg \adapt{\psi_1} \equiv \adapt{\neg \psi_1}$ by pushing negations down to propositional letters.
So all $\flatMTLplus{\LTL^1}$ formulas are c.u.s.\ from Theorem \ref{th:shiftable-to-si}.

Let now $\psi_k \in \LTL^k$ be any \LTL{} formula of nesting depth $k > 1$.
$\psi_k$ is shiftable from Lemma \ref{lem:ltl-shiftable}; $\psi_k$ and $\neg \psi_k$ are both c.u.s., because they can both be written as $\flatMTLplus{\LTL^{k-1}}$ formulas, all of which are c.u.s.\ by inductive hypothesis; one can also check that $\neg \adapt{\psi_k} \equiv \adapt{\neg \psi_k}$ by pushing negations down to propositional letters and using the inductive hypothesis again.
So all $\flatMTLplus{\LTL^k}$ formulas are c.u.s.\ from Theorem \ref{th:shiftable-to-si}.
\end{proof}

A similar corollary for c.u.i.s.\ cannot be obtained along the same lines, due to the transformation of \emph{until} and its dual \emph{release} under the canonical adaptation $\unadapt{}$.


\section{Verification via Sampling} \label{sec:verification}
The notion of sampling invariance defines rigorously the connection between the non-Berkeley dense-time semantics and the discrete-time semantics of MTL, under the sampling relationship.
On the one hand, this allows the formal description --- by means of temporal logic formulas --- of systems where dense-time and discrete-time components evolve in parallel, and communicate through a sampler.
In addition, the theory of the previous sections can spawn several derived results that facilitate the analysis of real-time systems at the interface between discrete and dense time.
For instance, the notion of sampling can be used to describe system refinements from a ``physical'' dense-time model --- close to a ``real-world'' physical description --- to a more abstract discrete-time model --- which is implementable on digital hardware.

This section investigates another significant application of the notion of sampling and sampling invariance.
Namely, it builds a verification technique for dense-time MTL based on discretization.
The intuition is that, in order to analyze the behaviors induced by a set of dense-endpoint $\flatMTL$ formulas, their discrete-time samplings are analyzed instead.
The results about sampling invariance allow us to move the results of the discrete-time analysis back to the dense-time domain, under some restrictions.

The following Section \ref{sec:over-under-appr} shows how to build discrete-time under- and \oap{}s of any $\flatMTL$ formula.
The \oap{} embodies dis\-crete-time behaviors that are preserved into dense time, whereas the \uap{} represents discrete-time counter-examples that are preserved into dense time.
Together, they allow a partial reduction of dense-time satisfiability for $\flatMTL$ over non-Berkeley behaviors to dense-time MTL satisfiability.
In order to perform system verification --- i.e., checking if a given system satisfies certain putative properties --- the under- and \oap{}s of formulas can be combined to build two instances of the verification problem in the form of two validity checking problems for discrete-endpoint MTL formulas.
This procedure is shown in Section \ref{sec:system-verification}.
Finally, Section \ref{sec:examples-discussion} comments on a few key issues of this verification procedure, in particular its strengths and weaknesses from a mostly practical viewpoint.

\subsection{Under- and Over- Approximations} \label{sec:over-under-appr}
The over- and \uap{} functions $\underap{}, \overap{}$ are mappings from dense-endpoint $\flatMTL$ formulas to discrete-endpoint $\flatMTL$ formulas, parametric with respect to a sampling period $\delta$.
Given a $\flatMTL$ formula $\phi$, $\underap{\phi}$ and $\overap{\phi}$ retain some properties of the discrete-time \emph{samplings} of the dense-time behaviors in $\bsr_\delta$ satisfying $\phi$.
Correspondingly, it is possible to infer the validity of $\phi$ over dense time from the validity of its approximations.
For reasons that will become apparent shortly, $\underap{\phi}$ is named \emph{\uap{}} of $\phi$ and $\overap{\phi}$ \emph{\oap{}}.
Unsurprisingly, $\underap{}, \overap{}$ are closely related to canonical adaptations $\adapt{}, \unadapt{}$; in particular the \oap{} is a sort of inverse of the mapping $\unadapt{}$.
Their precise definition requires the introduction of the notion of granularity.

\subsubsection{Granularity} \label{sec:granularity}
For an MTL formula $\phi$, let $\Ical_{\phi} = \{ r_i / R_i \}_i$ be the set of all non-null, finite interval end-points appearing in $\phi$ and put in their irreducible form.\footnote{Recall that all finite endpoints are rationals (Section \ref{sec:mtl-syntax}).}
The \emph{granularity} $\rho_\phi$ of $\phi$ is defined as the pair:
  $\rho_\phi = (r_\phi, R_\phi) = \left( \gcd_{i} r_i ,  \lcm_{i} R_i \right)$.
Correspondingly, let us consider the set $\Dcal_\phi$ of rationals:\footnote{Recall that $a|b$ denotes that $b$ is an integer multiple of $a$.}
\begin{equation*}
  \Dcal_\phi \quad = \quad \left\{ \left. \frac{d}{D} \right| d | r_\phi \text{ and } R_\phi | D \right\}
\end{equation*}
It can be shown that, for any positive rational $\delta$ and $q \in \Ical_\phi$, $q/\delta$ is an integer iff $\delta \in \Dcal_\phi$; i.e., $\Dcal_\phi$ is the set of sampling periods $\delta$ such that any interval bound in $\phi$ is an integer when divided by $\delta$.
Notice that $\Dcal_\phi$ has a maximum (given by $r_\phi/R_\phi$) but no minimum.
Finally, for a set of formulas $\Phi$, $\Dcal_\Phi$ is defined as $\Dcal_{\widehat{\phi}}$ where $\widehat{\phi} \triangleq \bigwedge_{\varphi \in \Phi} \varphi$.

\subsubsection{Under-Approximation} \label{sec:underap}
The \uap{} function $\underap{}$ maps dense-endpoint MTL formulas to dis\-crete-endpoint MTL formulas such that the non-validity of the latter implies the non-validity of the former, over behaviors in $\bst_\delta$.
More precisely, $\underap{\phi}$ is defined only for MTL formulas such that $\delta$ is in $\Dcal_\phi$, where it coincides with $\adapt{\phi}$. \\
\begin{equation*}
  \begin{array}{lcl}
  \underap{\pi}  &  \triangleq  & \   \pi  \\

  \underap{\untilMTL{\langle l, u \rangle}{\phi_1, \phi_2}}  &  \triangleq  &
        \  \untilMTL{[l/\delta, u/\delta ]}{\underap{\phi_1}, \underap{\phi_2}}  \\

  \underap{\sinceMTL{\langle l, u \rangle}{\phi_1, \phi_2}}  &  \triangleq  &
        \  \sinceMTL{[l/\delta, u/\delta]}{\underap{\phi_1}, \underap{\phi_2}}  \\

  \underap{\relMTL{\langle l, u \rangle}{\phi_1, \phi_2}}  &  \triangleq  &
        \  \relMTL{\langle l/\delta , u/\delta \rangle}{\underap{\phi_1}, \underap{\phi_2}} \\

  \underap{\redMTL{\langle l, u \rangle}{\phi_1, \phi_2}}  &  \triangleq  &
        \  \redMTL{\langle l/\delta , u/\delta \rangle}{\underap{\phi_1}, \underap{\phi_2}} \\

  \underap{\phi_1 \wedge \phi_2}  &  \triangleq  & \   \underap{\phi_1} \wedge \underap{\phi_2} \\

  \underap{\phi_1 \vee \phi_2}  &  \triangleq  & \  \underap{\phi_1} \vee \underap{\phi_2}
  \end{array}
\end{equation*}

The following lemma justifies the name \emph{\uap{}}.
\begin{lemma}[Under-approximation] \label{lem:under}
For any dense-endpoint $\flatMTL{}$ formula $\phi$, $\delta \in \Dcal_\phi$, and $b \in \bsz$: if $b \not\mdz \underap{\phi}$ then for all $b' \in \bsr_\delta$ such that $\samp{b'} = b$ it is $b' \not\mdr \phi$.
\end{lemma}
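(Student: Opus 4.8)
The plan is to recognize this lemma as the contrapositive of the closure-under-sampling (\cus) direction of Theorem~\ref{th:sinv-flat}, combined with the observation that under the hypothesis $\delta \in \Dcal_\phi$ the two mappings $\underap{}$ and $\adapt{}$ coincide syntactically on $\phi$. All the semantic work has already been carried out in Theorem~\ref{th:sinv-flat}; what remains is a purely syntactic identification followed by a logical transposition.

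First I would prove the identity $\underap{\phi} \equiv \adapt{\phi}$ for every dense-endpoint \flatMTL{} formula $\phi$ with $\delta \in \Dcal_\phi$, by a routine induction on the structure of $\phi$. The base case $\phi = \pi$ and the Boolean cases are immediate, since both maps act as the identity on propositional formulas and commute with $\wedge$ and $\vee$. For the modal cases the crucial point is that $\delta \in \Dcal_\phi$ means exactly that $q/\delta \in \integers$ for every finite interval endpoint $q$ occurring in $\phi$; hence $\lfloor l/\delta \rfloor = \lceil l/\delta \rceil = l/\delta$ and $\lfloor u/\delta \rfloor = \lceil u/\delta \rceil = u/\delta$. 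Consequently, for the \emph{until} and \emph{since} modalities the interval $[\lfloor l/\delta \rfloor, \lceil u/\delta \rceil]$ prescribed by $\adapt{}$ collapses to $[l/\delta, u/\delta]$, which is precisely the interval prescribed by $\underap{}$; and for the \emph{release} and \emph{trigger} modalities the case split on the bracket type in the definition of $\adapt{}$ becomes vacuous, since each of the four floor/ceiling choices equals $l/\delta$ or $u/\delta$, so both maps yield the same interval $\langle l/\delta, u/\delta \rangle$. Invoking the inductive hypothesis on $\phi_1$ and $\phi_2$ then closes each modal case.

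With this identity in hand the conclusion is immediate. Let $b' \in \bsr_\delta$ be any non-Berkeley behavior with $\samp{b'} = b$, for the period $\delta$ and (some) origin $z$ implicit in $\samp{}$. By Theorem~\ref{th:sinv-flat} every dense-endpoint \flatMTL{} formula is \cus, so $b' \mdr \phi$ would force $\samp{b'} \mdz \adapt{\phi}$. Transposing, $\samp{b'} \not\mdz \adapt{\phi}$ implies $b' \not\mdr \phi$. Since $\samp{b'} = b$ and, by hypothesis, $b \not\mdz \underap{\phi}$, the syntactic identity gives $b \not\mdz \adapt{\phi}$, whence $b' \not\mdr \phi$, as required.

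The only step requiring genuine care is the syntactic identity $\underap{\phi} \equiv \adapt{\phi}$, and within it the \emph{release}/\emph{trigger} cases, where one must check that the bracket-dependent definition of the adapted bounds $l', u'$ in $\adapt{}$ really does collapse to $l/\delta, u/\delta$ regardless of whether the original interval is open or closed at each end. Once this is verified, the lemma is nothing more than a restatement of the \cus\ half of Theorem~\ref{th:sinv-flat}.
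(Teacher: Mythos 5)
Your proposal is correct and takes essentially the same route as the paper: the paper's proof is precisely the contrapositive of the c.u.s.\ half of Theorem~\ref{th:sinv-flat} combined with the observation that $\adapt{}$ and $\underap{}$ coincide whenever both are defined. The only difference is one of detail --- you carry out by structural induction (including the bracket-type check for \emph{release}/\emph{trigger}) the syntactic coincidence that the paper merely asserts in passing, which is a harmless elaboration rather than a different argument.
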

\begin{proof}
$\phi$ is a dense-endpoint $\flatMTL$ formula, hence it is c.u.s.\ from Theorem \ref{th:sinv-flat}: for any $b \in \bsr_\delta$, if $b \mdr \phi$ then $\samp{b} \mdz \adapt{\phi}$.
By taking the contrapositive, and by noticing that $\adapt{}$ and $\underap{}$ coincide when they are both defined, we have that for any $b \in \bsr_\delta$, if $\samp{b} \not\mdz \underap{\phi}$ then $b \not\mdr \phi$.
\end{proof}

\subsubsection{Over-Approximation} \label{sec:overap}
The \oap{} function $\overap{}$ maps dense-endpoint MTL formulas to dis\-crete-endpoint MTL formulas such that the validity of the latter implies the validity of the former, over behaviors in $\bst_\delta$.
More precisely, $\overap{\phi}$ is defined only for MTL formulas such that $\delta$ is in $\Dcal_\phi$, where it is a pseudo-inverse of $\unadapt{}$. \\
\begin{equation*}
  \begin{array}{lcl}
  \overap{\pi}  &  \triangleq  & \   \pi  \\

  \overap{\untilMTL{\langle l, u \rangle}{\phi_1, \phi_2}}
                            &  \triangleq  & \  \untilMMTL{[l/\delta + 1, u/\delta - 1]}{\overap{\phi_1}, \overap{\phi_2}} \\

  \overap{\sinceMTL{\langle l, u \rangle}{\phi_1, \phi_2}}
                            &  \triangleq  & \  \sinceMMTL{[l/\delta + 1, u/\delta - 1]}{\overap{\phi_1}, \overap{\phi_2}} \\

  \overap{\relMTL{\langle l, u \rangle}{\phi_1, \phi_2}}
                            &  \triangleq  & \  \relMTL{[l/\delta - 1, u/\delta + 1]}{\overap{\phi_1}, \overap{\phi_2}} \\

  \overap{\redMTL{\langle l, u \rangle}{\phi_1, \phi_2}}
                            &  \triangleq  & \  \redMTL{[l/\delta - 1, u/\delta + 1]}{\overap{\phi_1}, \overap{\phi_2}} \\

  \overap{\phi_1 \wedge \phi_2}  &  \triangleq  & \   \overap{\phi_1} \wedge \overap{\phi_2}  \\

  \overap{\phi_1 \vee \phi_2}  &  \triangleq  & \   \overap{\phi_1} \vee \overap{\phi_2}
  \end{array}
\end{equation*}

The following lemma justifies the name \emph{\oap{}}.
\begin{lemma}[Over-approximation] \label{lem:over}
For any dense-endpoint $\flatMTL{}$ formula $\phi$, $\delta \in \Dcal_\phi$, and $b \in \bsz$: if $b \mdz \overap{\phi}$ then for all $b' \in \bsr_\delta$ such that $\samp{b'} = b$ it is $b' \mdr \phi$.
\end{lemma}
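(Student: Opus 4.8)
The plan is to mirror the proof of Lemma~\ref{lem:under}, but to invoke closure under \emph{inverse} sampling (c.u.i.s.) rather than closure under sampling, exploiting that $\overap{}$ is designed precisely so that $\unadapt{\overap{\cdot}}$ semantically recovers a formula implying the original dense-endpoint formula. Concretely, since $\delta \in \Dcal_\phi$ guarantees that every bound $l/\delta, u/\delta$ is an integer, $\overap{\phi}$ is a genuine discrete-endpoint $\flatMTL{}$ formula using only the matching-until/since and standard release/trigger modalities covered by Theorem~\ref{th:sinv-flat}. Hence $\overap{\phi}$ is c.u.i.s., so from the hypothesis $b \mdz \overap{\phi}$ we obtain $b' \mdr \unadapt{\overap{\phi}}$ for every $b' \in \bsrd$ with $\samp{b'} = b$. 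It then remains to prove the purely semantic implication $\unadapt{\overap{\phi}} \Rightarrow \phi$, valid over all dense-time behaviors, from which $b' \mdr \phi$ follows immediately.

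The core is therefore an induction on the structure of $\phi$ establishing that $b'(t) \mdr \unadapt{\overap{\phi}}$ implies $b'(t) \mdr \phi$ at every instant $t$. The propositional, $\wedge$, and $\vee$ cases are immediate, since both $\overap{}$ and $\unadapt{}$ fix propositional letters and commute with the Boolean connectives. For $\phi = \untilMTL{\langle l, u\rangle}{\phi_1, \phi_2}$ I would first compute the composition: $\overap{}$ produces $\untilMMTL{[l/\delta+1, u/\delta-1]}{\overap{\phi_1}, \overap{\phi_2}}$, and applying $\unadapt{}$ to this \emph{matching} until rescales the bounds back, yielding $\unadapt{\overap{\phi}} = \untilMMTL{(l, u)}{\unadapt{\overap{\phi_1}}, \unadapt{\overap{\phi_2}}}$. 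Now a matching until is stronger than the corresponding non-matching until, and the open interval satisfies $(l,u) \subseteq \langle l, u\rangle$; combined with the inductive hypotheses $\unadapt{\overap{\phi_i}} \Rightarrow \phi_i$ and the monotonicity of until in both arguments, any witness $d \in (l,u)$ for the matching formula is also a witness $d \in \langle l, u\rangle$ for $\untilMTL{\langle l, u\rangle}{\phi_1, \phi_2}$.

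Dually, for $\phi = \relMTL{\langle l, u\rangle}{\phi_1, \phi_2}$ the adaptations compose to $\unadapt{\overap{\phi}} = \relMTL{[l, u]}{\unadapt{\overap{\phi_1}}, \unadapt{\overap{\phi_2}}}$; since release is universal and $\langle l, u\rangle \subseteq [l, u]$, holding over the larger closed interval forces holding over $\langle l, u\rangle$, while monotonicity of release together with the inductive hypotheses upgrades the arguments from $\unadapt{\overap{\phi_i}}$ to $\phi_i$. The past operators $\sinceMTL{}{}$ and $\redMTL{}{}$ are handled symmetrically and can be omitted for brevity, as in Theorem~\ref{th:sinv-flat}. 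The step I expect to be the main obstacle is the until case: one must verify that the matching variant together with the open interval \emph{exactly} compensates for the asymmetric, $\becfMTL{}$-free rescaling, so that no extra non-Berkeleyness argument is needed to pass from $\unadapt{\overap{\phi}}$ to $\phi$ (in contrast to the c.u.i.s.\ direction of Theorem~\ref{th:sinv-flat}, where both $\becfMTL{}$ and non-Berkeleyness were essential). Once this bookkeeping is checked, everything else reduces to the elementary interval-containment and monotonicity facts above.
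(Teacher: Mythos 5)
Your proposal is correct and follows essentially the same route as the paper's proof: invoke the c.u.i.s.\ direction of Theorem~\ref{th:sinv-flat} (using the matching-variant clauses of $\unadapt{}$, which is exactly why $\overap{}$ emits $\untilMMTL{\ }{}$/$\sinceMMTL{\ }{}$) to get $b' \mdr \unadapt{\overap{\phi}}$, then establish the dense-time validity of $\unadapt{\overap{\phi}} \Rightarrow \phi$ by composing the adaptations and using interval containment plus monotonicity. Your explicit computations --- $\unadapt{\overap{\cdot}}$ giving $\untilMMTL{(l,u)}{\cdot,\cdot}$ for \emph{until} and $\relMTL{[l,u]}{\cdot,\cdot}$ for \emph{release}, with no non-Berkeleyness needed for this purely semantic step --- correctly fill in the details the paper summarizes as ``one can check,'' including its observation that $\unadapt{} \circ \overap{}$ is the identity on closed-interval \emph{release} and strictly strengthening elsewhere.
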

\begin{proof}
If $\phi$ is a dense-endpoint $\flatMTL$ formula, then $\overap{\phi}$ is a discrete-endpoint $\flatMTL$ formula.
Hence the latter is c.u.i.s.\ from Theorem \ref{th:sinv-flat}: for any $b \in \bsz$, if $b \mdz \overap{\phi}$ then $b' \mdr \unadapt{\overap{\phi}}$ holds for all $b' \in \bsr_\delta$ such that $b = \samp{b'}$.

One can check that the dense-time validity of the formula $\unadapt{\overap{\phi}} \Rightarrow \phi$ is guaranteed by the definitions of $\unadapt{}$ and $\overap{}$.
In particular, $\unadapt{} \circ \overap{}$ is an identity for \emph{release} (and \emph{trigger}) operators with closed intervals. 
On the other hand, $\unadapt{} \circ \overap{}$ yields stronger formulas for \emph{release} (and \emph{trigger}) operators with open intervals and for \emph{until} (and \emph{since}) operators.
The latter holds also from the fact that $\unadapt{\untilMMTL{[l,u]}{\pi_1, \pi_2}}$ is $\untilMMTL{((l-1)\delta, (u+1)\delta)}{\pi_1, \pi_2}$.
It is easy to check that these properties of basic operators can be lifted to whole formulas by application of straightforward propositional identities on the negation normal form in which MTL formulas are expressed.
In all, $b' \mdr \unadapt{\overap{\phi}}$ implies $b' \mdr \phi$.
\end{proof}

\subsection{MTL Verification} \label{sec:system-verification}
In the formal timed setting, verification consists in checking whether all behaviors generated by a system model (usually called \emph{specification}) satisfy some given putative property (usually called \emph{requirements}) \cite{HM96}.
Assume that both the specification and the requirements are formalized as MTL formulas $\system{}$ and $\prop$, respectively.
Verification of $\system$ against $\prop$ is equivalent to checking the validity of the dense-endpoint MTL formula $\verif = \Alw{\system} \Rightarrow \Alw{\prop}$.
If $\verif$ is valid, any behavior of the system also respects the requirements; i.e., we have checked that $\world{\system}{\reals} \subseteq \world{\prop}{\reals}$.
On the contrary, if $\verif$ is not valid, there exists at least one behavior of the system that violates the requirements; i.e., $\world{\system}{\reals} \cap \world{\neg \prop}{\reals}$ is not empty so $\world{\system}{\reals} \not\subseteq \world{\prop}{\reals}$.

In this section, we describe a verification algorithm that is applicable to specifications and requirements in $\flatMTL$ over non-Berkeley dense-time behaviors.

The algorithm is based on the following.
\begin{proposition}[Model approximations] \label{prop:approx}
For any $\flatMTL$ formulas $\phi_1, \phi_2$, and for any $\delta \in \Dcal_{\{ \phi_1, \phi_2 \}}$:
 \begin{enumerate}
	 \item \label{it1} if $\Alw{{\underap{\phi_1}}} \Rightarrow \Alw{\overap{\phi_2}}$ is $\integers$-valid,
		    then $\Alw{\phi_1} \Rightarrow \Alw{\phi_2}$ is $\reals^\delta$-valid;
	 \item \label{it2} if $\Alw{\overap{\phi_1}} \Rightarrow \Alw{\underap{\phi_2}}$ is not $\integers$-valid,
		    then $\Alw{\phi_1} \Rightarrow \Alw{\phi_2}$ is not $\reals^\delta$-valid.
 \end{enumerate}
\end{proposition}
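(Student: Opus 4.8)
The plan is to unwind the $\Alw$ operators so that both validity statements become inclusions between model sets, and then to push these inclusions through the two approximation lemmas. Under the global satisfiability semantics, $b \models \Alw{\phi}$ holds exactly when $\phi$ holds at every instant, i.e.\ when $b \models \phi$; hence $\Alw{\psi_1} \Rightarrow \Alw{\psi_2}$ is valid iff every model of $\psi_1$ is a model of $\psi_2$. In particular, part~(\ref{it1}) amounts to deriving $\worldnb{\phi_1}{\reals}{\delta} \subseteq \worldnb{\phi_2}{\reals}{\delta}$ from the discrete inclusion $\world{\underap{\phi_1}}{\integers} \subseteq \world{\overap{\phi_2}}{\integers}$, and part~(\ref{it2}) is the transfer of a failure of the discrete inclusion to a failure of the dense one. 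Throughout I would use that $\delta \in \Dcal_{\{\phi_1,\phi_2\}}$ forces $\delta \in \Dcal_{\phi_1}$ and $\delta \in \Dcal_{\phi_2}$ (every interval bound of either conjunct becomes an integer when divided by $\delta$), so that $\underap{}$ coincides with $\adapt{}$ on both formulas and Lemmas~\ref{lem:under} and~\ref{lem:over} apply.

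For part~(\ref{it1}) I would take an arbitrary $b' \in \bsrd$ with $b' \mdr \phi_1$ and chase it to the discrete side and back. Since $\phi_1$ is a dense-endpoint $\flatMTL{}$ formula it is c.u.s.\ by Theorem~\ref{th:sinv-flat}, so $\samp{b'} \mdz \adapt{\phi_1} = \underap{\phi_1}$. The assumed $\integers$-validity then yields $\samp{b'} \mdz \overap{\phi_2}$, and Lemma~\ref{lem:over}, applied to $\phi_2$ with $b'$ as a preimage of $\samp{b'}$, gives $b' \mdr \phi_2$. As $b'$ was an arbitrary non-Berkeley model of $\phi_1$, this establishes the dense inclusion, i.e.\ the $\reals^\delta$-validity of $\Alw{\phi_1}\Rightarrow\Alw{\phi_2}$.

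For part~(\ref{it2}) the natural plan is dual. Non-$\integers$-validity furnishes a discrete witness $d \in \bsz$ with $d \mdz \overap{\phi_1}$ and $d \not\mdz \underap{\phi_2}$. If I can produce some non-Berkeley preimage $b' \in \unsamp{d} \cap \bsrd$, then Lemma~\ref{lem:over} gives $b' \mdr \phi_1$, while Lemma~\ref{lem:under} (which states that $d \not\mdz \underap{\phi_2}$ forces \emph{every} non-Berkeley preimage of $d$ to falsify $\phi_2$) gives $b' \not\mdr \phi_2$; hence $b'$ witnesses the non-$\reals^\delta$-validity of $\Alw{\phi_1} \Rightarrow \Alw{\phi_2}$.

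The hard part, and the step I would scrutinise most, is precisely the existence of that non-Berkeley realization. The obvious right-continuous step extension $b'(t) = d(k)$ for $t \in [z+k\delta, z+(k+1)\delta)$ is \emph{not} in general non-Berkeley: whenever $d$ takes an isolated value (some $k$ with $d(k)$ different from both $d(k-1)$ and $d(k+1)$) the corresponding constancy interval has length exactly $\delta$ but is half-open, so it contains no closed interval of size $\delta$; and for a behavior that alternates at every step no non-Berkeley preimage exists at all, so $\unsamp{d}\cap\bsrd$ can even be empty. The plan must therefore either (i) shift the transition instants outward to spread each isolated value over an open window wider than $\delta$ — feasible precisely when the offending values are sufficiently sparse — or, when this fails, (ii) bypass $d$ and build a dense counterexample directly, exploiting that the failure of $\underap{\phi_2}$ localises to a bounded window around some instant and that $\overap{\phi_1}$ uses only \emph{shrunk} (hence robust) intervals, so that $\phi_1$ can be realised densely while $\phi_2$ is violated. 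Producing a clean, uniform realization argument that also covers the pathological fast-changing witnesses is where the real work of part~(\ref{it2}) lies; the remainder is bookkeeping with the two approximation lemmas.
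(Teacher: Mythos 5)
Your part~(\ref{it1}) is correct and coincides with the paper's own argument (the paper runs it in the contrapositive, quantifying over discrete behaviors and invoking totality of $\samp{}$; you run it forward from an arbitrary dense model of $\phi_1$ --- same content, same use of Theorem~\ref{th:sinv-flat} and Lemmas~\ref{lem:under} and~\ref{lem:over}). The genuine gap is in part~(\ref{it2}), precisely at the step you flagged but did not close. The paper dismisses~(\ref{it2}) ``by duality'', which tacitly requires that $\unsamp{d} \cap \bsrd$ be nonempty for \emph{every} $d \in \bsz$; you correctly identified this as the crux, but your assessment of it is wrong. Your claim that a discrete behavior alternating at every step admits no non-Berkeley preimage is false, and it derails both of your fallback plans: strategy~(i) is asserted to work only for ``sufficiently sparse'' witnesses, and strategy~(ii) --- bypassing $d$ --- would sever the only available link to the failure of $\phi_2$, since Lemma~\ref{lem:under} speaks exactly about the preimages of $d$ and nothing else; there is no mechanism in the paper for producing a dense counterexample to $\phi_2$ that is not a preimage of the discrete witness.

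In fact a non-Berkeley preimage exists for \emph{every} $d \in \bsz$, by staggering the transition instants with strictly decreasing offsets. Pick any strictly decreasing sequence $(o_k)_{k \in \integers}$ with $o_k \in (0, \delta)$ for all $k$ (e.g., $o_k = \delta/(1+e^k)$), set $t_k = z + k\delta - o_k$, and let $b'$ take the constant value $d(k)$ on $[t_k, t_{k+1})$. Then $t_{k+1} - t_k = \delta + (o_k - o_{k+1}) > \delta$, so every piece $[t_k, t_{k+1})$ --- and hence every maximal constancy interval of $b'$, which is a union of consecutive pieces --- contains the \emph{closed} interval $[t_k, t_k + \delta]$; thus $b' \in \bsrd$ (and $b'$ is non-Zeno, its transition points being more than $\delta$ apart). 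Moreover $o_k > 0$ and $o_{k+1} < \delta$ give $t_k \leq z + k\delta < t_{k+1}$, so $\samp{b'} = d$. Your observation about the naive extension was right --- with constant offset the pieces are half-open of length exactly $\delta$ and contain no closed $\delta$-interval --- but making the offsets strictly decrease repairs this uniformly, with no sparsity hypothesis, even for your fully alternating $d$ (your implicit assumption was that each constancy interval of a preimage of an alternating behavior has length exactly $\delta$; it only needs length greater than $\delta$ while still containing a single sampling instant, which the stagger achieves). With this existence fact supplied, your part~(\ref{it2}) skeleton closes exactly as you wrote it: Lemma~\ref{lem:over} yields $b' \mdr \phi_1$, Lemma~\ref{lem:under} yields $b' \not\mdr \phi_2$, and $b'$ witnesses non-$\reals^\delta$-validity.
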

\begin{proof}
(\ref{it1}).
Let $\delta \in \Dcal_{\{\phi_1, \phi_2\}}$.
Assume that $\Alw{{\underap{\phi_1}}} \Rightarrow \Alw{\overap{\phi_2}}$ is $\integers$-valid.
That is, for all $b \in \bsz$ it is $b \not \mdz \underap{\phi_1}$ or $b \mdz \overap{\phi_2}$.
From Lemmas \ref{lem:over} and \ref{lem:under}, this implies that for all $b \in \bsz$, for all $b' \in \bsrd$ such that $\samp{b'} = b$, it is either $b' \not \mdr \phi_1$ or $b' \mdr \phi_2$.
Since $\samp{}$ is total, for any $b' \in \bsrd$ there exists a $b \in \bsz$ such that $\samp{b'} = b$.
We conclude that for all $b' \in \bsrd$, either $b' \not \mdr \phi_1$ or $b' \mdr \phi_2$; i.e., $\Alw{\phi_1} \Rightarrow \Alw{\phi_2}$ is $\reals^\delta$-valid.

Proof of (\ref{it2}) is obtainable from the proof of (\ref{it1}) by duality.
\end{proof}

\subsubsection{Verification Algorithm}
Proposition \ref{prop:approx} suggests to introduce the following notation.
Given a set of formulas $\Sys = \{ \system^i \}_i$ such that $\system = \bigwedge_i \Alw{\system^i}$ represents a formal model of the system, and a formula $\prop$ that represents a formal statement of the requirements, let us define the discrete-endpoint formulas:
\begin{gather*}
  \overmodel{\phi} \quad \triangleq \quad \bigwedge_i \Alw{\underap{\system^i}} \ \Rightarrow \ \Alw{\overap{\prop}} \\
  \undermodel{\phi} \quad \triangleq \quad \bigwedge_i \Alw{\overap{\system^i}} \ \Rightarrow \ \Alw{\underap{\prop}}
\end{gather*}
Let us call $\overmodel{\phi}$ and $\undermodel{\phi}$ \emph{over-model} and \emph{under-model} of the system, respectively, (in analogy with Lemmas \ref{lem:over} and \ref{lem:under}) because the former preserves validity and the latter non-validity.

A verification algorithm for systems and properties specified as dense-endpoint $\flatMTL$ formulas can be formalized as follows, where $\proc{Z-valid?}$ is a validity-checking procedure for discrete-endpoint MTL formulas.
\begin{codebox}
\Procname{$\proc{\flatMTL{}-verify}(\delta: \reals_{> 0}, \id{\Sys = \{\system^i \}_i}, \id{\prop}: \flatMTL): \{\logictrue, \logicfalse, \const{fail}\}$}
\li \kw{assume} $\delta \in \Dcal_{\Sys \cup \{\prop\}}$
\li $\overmodel{\phi} \gets \bigwedge_i \Alw{\underap{\system^i}} \Rightarrow \Alw{\overap{\prop}}$
\li $\undermodel{\phi} \gets \bigwedge_i \Alw{\overap{\system^i}} \Rightarrow \Alw{\underap{\prop}}$
\li \If $\proc{Z-valid?}(\overmodel{\phi})$ \>\>\>\>\>\>\>\>\Comment $\overmodel{\phi}$ valid over discrete time?
\li \Then \Return $\logictrue$              \>\>\>\>\>\>\Comment verification over $\bst_\delta$ successful
\li \Else \If $\neg \proc{Z-valid?}(\undermodel{\phi})$ \>\>\>\>\>\>\Comment $\undermodel{\phi}$ not valid over discrete time?
\li       \Then \Return $\logicfalse$                   \>\>\>\>\Comment verification over $\bst_\delta$ not successful
\li       \Else \Return $\const{fail}$ \End \End        \>\>\>\>\Comment cannot conclude any verification result
\end{codebox}

The correctness of the algorithm follows directly from Proposition \ref{prop:approx}, keeping in mind that $b \mdt \Alw{\psi_1} \wedge \Alw{\psi_2}$ iff $b \mdt \Alw{\psi_1}$ and $b \mdt \Alw{\psi_2}$.

\subsubsection{Incompleteness}
A verification algorithm is \emph{complete} if, for any input, it terminates with a conclusive result about whether the given requirements $\prop$ are indeed a property of the system $\system$ or not.

The verification algorithm for $\flatMTL$ we provided above is \emph{incomplete}, as it can fail to provide a conclusive answer about whether $\prop$ is indeed a property of all behaviors of the system $\system$.
The incompleteness is two-fold.
First, the algorithm does not consider \emph{all} dense-time behaviors $\bsr$, but only those in $\bsrd$, i.e., ``slow'' with respect to some chosen sampling period $\delta$.
Hence, it may be that $\prop$ does not hold for some ``real'' behavior of the system which is ``fast'', i.e., for some behavior in $\world{\system}{\reals} \setminus \worldnb{\system}{\reals}{\delta}$.
Second, the under- and over-model $\undermodel{\phi}, \overmodel{\phi}$ are in general non-equivalent discrete-endpoint formulas.
Hence, it is possible that $\overmodel{\phi}$ is not valid and $\undermodel{\phi}$ is valid; if this is the case no conclusion about the verification of the system can be drawn.

Since the algorithm is parametric with respect to $\delta$, smaller values of $\delta$ can be tried in order to avoid the incompleteness hurdle.
Changing the value of $\delta$ affects the verification problem in two ways: more (``faster'') behaviors are considered for verification, and new under- and over-models are generated that represent a ``finer-grain'' discretization of the original problem.
These two aspects interact in subtle ways because they change the verification problem from two opposite sides.
By combining them, one may expect to achieve at least the following partial notion of completeness: if $\prop$ is a property of $\system$ over behaviors in $\bsrd$ for \emph{some} choice of $\delta$, then there exists a suitable choice of $\delta$ such that $\proc{\flatMTL{}-verify}(\delta, \id{\system}, \id{\prop})$ returns $\logictrue$; and conversely when $\prop$ is not a property of $\system$.
Unfortunately, the following example shows that even this weaker notion of completeness is not achieved by the algorithm.

\begin{example}[Incompleteness of the algorithm] \label{ex:incompleteness}
Consider a simple set of behaviors completely described by formulas in Table \ref{tab:system}.
It should be clear that all behaviors $b \in \world{\Alw{\system^1} \wedge \Alw{\system^2}}{\reals}$ of the system are such that $\pp$ holds on some interval $V = \langle t, +\infty)$ and $\neg \pp$ holds on the complement interval $\reals \setminus V$ (which is unbounded to the left).
Hence, any such $b$ satisfies property $\prop$ and is in $\bsrd$ for any $\delta$.
\begin{table}[!h]
\begin{center}
\begin{tabular}{c c c}
  $\system^1$  &$\triangleq$&  $\Som{\pp} \wedge \Som{\neg \pp}$   \\
  $\system^2$  &$\triangleq$&  $\pp \Rightarrow \boxMTL{}{\pp}$   \\
  $\prop$      &$\triangleq$&  $\pp \Rightarrow \diamondMTL{=1}{\pp}$
\end{tabular}
\caption{$\Sys$ and $\prop$.}
\label{tab:system}
\end{center}
\end{table}

Table \ref{tab:approximations} shows the over- and under-models of this system for any $\delta \in \linebreak \Dcal_{\Sys \cup \{\prop\}} = \{ 1/k \mid k \in \naturals_{> 0} \}$, after some simplifications (in particular $\overap{\system^2} \linebreak = \neg \pp \vee \boxMTL{[-1, +\infty]}{\pp}$ is equivalent to the formula in Table \ref{tab:approximations} under the global satisfiability semantics).
It is simple to check that, for any value of $\delta$, the over-model $\overmodel{\phi}$ is not valid because $\Alw{\overap{\prop}}$ contradicts $\Alw{\underap{\system^1}}$.
Also for any value of $\delta$ the under-model $\undermodel{\phi}$ is vacuously valid because $\Alw{\overap{\system^1}}$ is inconsistent with $\Alw{\overap{\system^2}}$.
In all, we cannot verify our system with our algorithm, no matter what value of sampling period we choose.
\begin{table}[htb]
\begin{center}
\begin{tabular}{ll}
$\underap{\system^1} = \Som{\pp} \wedge \Som{\neg \pp}$  &    $\overap{\system^1} = \Som{\pp} \wedge \Som{\neg \pp}$ \\
$\underap{\system^2} = \pp \Rightarrow \boxMTL{}{\pp}$   &  $\overap{\system^2} = \Alw{\pp} \vee \Alw{\neg \pp}$ \\
$\underap{\prop} = \pp \Rightarrow \diamondMTL{= k}{\pp}$  &  $\overap{\prop} = \neg \pp$
\end{tabular}
\caption{Under- and over-models of $\Sys, \prop$ for $\delta = 1/k$.}
\label{tab:approximations}
\end{center}
\end{table}
\end{example}

In spite of its incompleteness, in the next section we discuss why the verification algorithm can still provide practically very useful results.

\subsection{Discussion} \label{sec:examples-discussion}
In related work, we proved that MTL is fully decidable over dense-time non-Berkeley behaviors $\bsrd$ for any $\delta$ \cite{FR08-FORMATS08}, with the same worst-case complexity as discrete-time MTL; hence an \emph{incomplete} decision procedure may seem impractical.
In this section we demonstrate that this is not the case, and we discuss how the impact of incompleteness can be limited in practice with the application of a few good practices.

First of all, the decision procedure for MTL over $\bsrd$ --- the only one currently available \cite{FR08-FORMATS08} --- relies on a rather exotic decision procedure, which translates MTL to a family of uncommon decidable real-time temporal logics introduced by Hirshfeld and Rabinovich \cite{HR04}.
The decision procedures for such logics have never been implemented, and seem quite complex in practice.
More generally, the practical high complexity of deciding temporal logics over dense-time domains is witnessed not only by theoretical results, but also by the current scarcity of state-of-the-art tools that implement such decision procedures.
Even the well-known real-time temporal logic MITL, whose decidability over dense time is known since the seminal work of Alur, Feder, and Henzinger \cite{AFH96}, still lacks an implementation, despite the recent efforts towards simplifying its decision procedure \cite{HR05,MNP06}.

Compare this unsatisfactory picture to the vastly different scenario of (real-time) temporal logics over discrete time, where a significant number of off-the-shelf efficient verification tools are available (e.g., \cite{PMS07,BMPSS07,PSSM03,NuSMV2,Alaska} just to mention a few for LTL/MTL).
This suggests that a dense-time verification procedure based on discretization is very appealing from a practical viewpoint, because it can be implemented easily and it can rely on solid and scalable implementations.
In fact, in related work \cite{FPR08-FM08,FPR08-ICFEM08,BFPR09-SEFM09} we presented the straightforward implementation of the verification procedure described in this section, and we demonstrated its practical efficiency with a few non-trivial verification examples.

The same examples also show that the flat fragment of MTL retains (under the global satisfiability semantics) a significant expressive power, suitable to formalize typical behaviors of real-time systems.
For example, it is possible to describe runs of arbitrary timed automata or bounded time Petri nets over non-Berkeley behaviors.
The formalization in flat MTL of these complex abstract machines is far from straightforward and requires a careful analysis to avoid inconsistencies.
However, the experience of \cite{FPR08-FM08,FPR08-ICFEM08,BFPR09-SEFM09} can be leveraged and extended to similar systems described by means of the notions of state and transition.

Even the \emph{incompleteness} of our verification algorithm turns out not to be too large a handicap in practice.
More precisely, the fact that equivalent dense-endpoint formulas can yield nonequivalent discrete-time under- or over-approximations can be turned into an advantage: with some additional effort in writing the dense-time model of our system, we can often express it in a form whose over- and under-models are unaffected by incompleteness.
This effort can in general be non-trivial, but it can give very good practical results nonetheless.
The following example provides a few in-the-small demonstrations of our claims, whereas more complex cases have been introduced elsewhere \cite{FPR08-ICFEM08,BFPR09-SEFM09}.

\begin{example}
Let us go back to Example \ref{ex:incompleteness} and change formula $\system^2$ into $\systembis^2 \triangleq \pp \Rightarrow \boxMTL{\geq \delta}{\pp}$, according to the chosen sampling period $\delta$.
A little reasoning should convince us that $\Alw{\system^2}$ is equivalent to $\Alw{\systembis^2}$ over behaviors in $\bsrd$: if $\pp$ holds at some time $t$ as well as over the left-closed interval $t \oplus [\delta, +\infty)$, it cannot be false anywhere in $(t, t+\delta)$ because this would violate the hypothesis of non-Berkeleyness for the given $\delta$.
Let us take our system model to be $\Sys = \{\system^1, \systembis^2\}$, and let us build its over-model $\overmodel{\Sys}$.
Notice that $\overap{\systembis^2}$ can be computed as $\pp \Rightarrow \boxMTL{}{\pp}$; unlike $\overap{\system^2}$, this is an accurate discrete-time rendition of the dense-time model.
It is now possible to prove that $\overmodel{\phi}$ is $\integers$-valid for any $\delta = 1/k$, which verifies our system over dense time.

Let us now turn our attention to property $\prop$ in Example \ref{ex:incompleteness}.
It should be apparent that its over-approximation $\overap{\prop} = \neg \pp$ is very unsatisfactory, and it is unlikely to yield valuable results when used in an under-model.
Consider however formula $\prop' \triangleq \pp \Rightarrow \boxMTL{=1}{\pp}$; $\prop'$ is trivially equivalent to $\prop$.
However, its over-approximation is the much more reasonable $\pp \Rightarrow \boxMTL{[k-1, k+1]}{\pp}$ which is non-trivially satisfiable for any $k > 1$.
\end{example}


\section{Related Work} \label{sec:related}
The relationship between dense and discrete real-time semantics has been investigated by many authors.
In this section we mention the approaches that are closest to ours, and we detail the most significant differences and relative merits.

The seminal paper by Henzinger, Manna, and Pnueli \cite{HMP92} is both the first and the best-known work dealing with the theme of dense vs.~discrete real-time through the notion of \emph{digitization}.
Given the significance of this notion, Section \ref{sec:comp-with-digit} is devoted to a detailed summary of it, as well as to a comparison with sampling invariance.
Section \ref{sec:other-related} succinctly describes other related work about the relation between dense and discrete time models for real-time formalisms.
Finally, briefly widening the scope beyond real-time notations, the results of this paper seem to bear a connection with the classical theory of digital sampling (e.g., \cite{sampling-book}).
Section \ref{sec:sampling-theorem} sketches a partly formal analysis of this alleged link.

\subsection{Comparison with Digitization} \label{sec:comp-with-digit}
Similarly to the notions of \emph{sampling} and \emph{sampling invariance} --- introduced in Section \ref{sec:sampling} --- the notions of \emph{digitization} and \emph{digitizability} \cite{HMP92} link dense- and discrete-time real-time semantics.
The main purpose of digitization is to provide a means to reduce the verification problem from the richer dense-time semantics to the simpler discrete-time one.
This section recalls the formal definition of digitization and digitizability and compares them against the notions of sampling, sampling invariance, and discrete-time approximations introduced in this paper.

There are two fundamental high-level differences between the frameworks of digitization and sampling; bridging them is necessary to carry out a formal comparison of the notions.
First, our framework considers dense- and discrete-time \emph{behaviors} as semantic structures, whereas digitization is defined for dense- and discrete-valued \emph{timed words}.
A timed word is a discrete sequence of timestamped events, such that every event is assumed to occur at the absolute time value of its timestamp.
Second, sampling invariance is a syntactic notion (i.e., it is a property that applies to \emph{formulas}), whereas digitizability is a semantic notion (i.e., it is a property that applies to \emph{sets of timed words}).
Let us introduce formally these ideas and the precise notions of digitization and digitizability.

\begin{definition}[MTL timed word semantics]
An (infinite) \emph{timed word} over $\alphabet$ is an $\omega$-sequence $(\sigma_0, t_0)(\sigma_1, t_1)\cdots(\sigma_i, t_i)\cdots$ in $(\alphabet \times \timedomain)^\omega$, such that the sequence of timestamps $t_i$ is weakly monotonic and diverging.
According to whether $\timedomain$ is a dense (typically $\reals_{\geq 0}$) or discrete (typically $\naturals$) set, the timed words are named \emph{dense-} or \emph{discrete-valued}.

MTL \emph{semantics} over timed words is defined as expected: given a timed word $\rho$, a position $i \in \naturals$, and an MTL formula $\phi$, we write $\rho, i \models \phi$ iff $\rho$ satisfies $\phi$ at position $i$.
The definition of the modalities is: $\rho, i \models \untilMTL{I}{\phi_1, \phi_2}$ iff there exists $j \geq i$ such that $t_j \in t_i \oplus I$, $\rho, j \models \phi_2$, and $\rho, k \models \phi_1$ for all $i \leq k < j$; and $\rho, i \models \relMTL{I}{\phi_1, \phi_2}$ iff for all $j \geq i$ such that $t_j \in t_i \oplus I$, it is $\rho, j \models \phi_2$ or $\rho, k \models \phi_1$ for some $i \leq k < j$.
Then, $\rho \models \phi$ iff $\rho, i \models \phi$ for all $i \in \naturals$.\footnote{The digitization paper \cite{HMP92} assumed an initial satisfiability semantics, but we adopt a global satisfiability semantics to allow a uniform comparison with sampling invariance (see Section \ref{sec:mtl-semantics}); it should be clear that this is without loss of generality.}
Given a formula $\phi$, $\worldw{\phi}{\timedomain}$ denotes the set $\{ \rho \mid \rho \models \phi \}$ of $\timedomain$-valued timed words that satisfy $\phi$.
\end{definition}

\begin{definition}[Digitization and digitizability]
Given a timed word $\rho = \{ (\sigma_i, t_i) \mid i \in \naturals \}$ and a fractional value $0 \leq \epsilon < 1$, the \emph{$\epsilon$-digitization} of $\rho$ is defined as the discrete-valued timed word $[\rho]_\epsilon = \{ (\sigma_i, [t_i]_\epsilon) \mid i \in \naturals \}$, where $[t]_\epsilon$ is $\lfloor t \rfloor$ if $t \leq \lfloor t \rfloor + \epsilon$, and $\lceil t \rceil$ otherwise.
The \emph{digitization} of a set of timed words $\Pi$ is the set $[\Pi]$ of discrete-valued timed words defined as $\{ [\rho]_\epsilon \mid \rho \in \Pi \text{ and } 0 \leq \epsilon < 1 \}$, i.e., the set of all possible digitizations of words in $\Pi$.

A set of timed words $\Pi$ is: (1) \emph{closed under digitization} (c.u.d.) iff $\rho \in \Pi$ implies $[\{\rho\}] \subseteq \Pi$; (2) \emph{closed under inverse digitization} (c.u.i.d.) iff $[\{\rho\}] \subseteq \Pi$ implies $\rho \in \Pi$; (3) \emph{digitizable} iff it is c.u.d.\ and c.u.i.d.
Correspondingly, an MTL formula $\phi$ is c.u.d., c.u.i.d., or digitizable, iff $\worldw{\phi}{\reals_{\geq 0}}$ is.
\end{definition}

For digitizable properties, discrete-time verification completely captures dense-time verification; more precisely, if a system specification is closed under digitization, and the requirements are closed under inverse digitization, the problem of determining if the specification meets the requirements is perfectly reducible to the discrete-time case.
However, it is difficult to characterize a significant syntactic subset of MTL formulas that are digitizable, and in fact only a few examples are given in \cite{HMP92}.
Moreover, digitization exploits weakly-monotonic timed word to ensure that no dense-time event is lost when digitizing a dense-valued timed word; this is why no notion similar to non-Berkeleyness is introduced.

The following example shows that digitizability and sampling invariance define \emph{incomparable classes} of MTL formulas, i.e., there exist sampling invariant non-digitizable formulas, as well as digitizable non sampling-invariant formulas.
This demonstrates that the two notions have different angles, and it suggests that techniques for discrete-time verification of dense-time MTL formulas based on these two orthogonal notions may each have its own complementary strengths and weaknesses.

\begin{example}
For $h \in \naturals_{> 0}$, let $\nondig_h$ be the $\flatMTL$ formula $\pp \Rightarrow \diamondMTL{< h}{\qq}$.
Theorem \ref{th:sinv-flat} proves that $\nondig_h$ is s.i.
Let us show that $\nondig_h$ is instead not c.u.d., hence neither digitizable.
Take any timed word $\sigma = \cdots (\pp, k)(\qq, k + h - 1 + \mu)(\qq, k+h+\mu)  \cdots$ with $k \in \naturals$, $0 < \mu < 1$, and such that $\pp$ does not occur anywhere else.
Any $\epsilon$-digitization of $\sigma$ for $\epsilon < \mu$ has the form $[\sigma]_\epsilon = \cdots (\pp, k)(\qq, k + h)(\qq, k + h + 1)  \cdots$.
Hence $\nondig_h$ is not c.u.d.\ because $\sigma \models \nondig_h$ but $[\sigma]_\epsilon \not\models \nondig_h$ for any such $\epsilon$.

For $h \in \naturals_{> 0}$, let $\nonsi_h$ be the MTL formula $\Som{\pp \wedge \nowonMTL{\neg \pp}} \wedge \psi_h$, where $\psi_h$ has been defined in Example \ref{ex:noncus}.
It is not difficult to show that $\Som{\pp \wedge \nowonMTL{\neg \pp}}$ is unsatisfiable in the timed word semantics, hence $\nonsi_h$ is trivially digitizable.
Let us show that $\nonsi_h$ is instead not c.u.s., hence neither s.i.
Take the same behavior $c \in \bsr_h$ of Example \ref{ex:noncus}, where we further assume that $V$ is a right-closed interval (see Figure \ref{fig:noncus}).
$c \mdr \nonsi_h$ because Example \ref{ex:noncus} showed that $c \mdr \psi_h$ and $\pp \wedge \nowonMTL{\neg \pp}$ holds at the right end-point of $V$.
However, Example \ref{ex:noncus} also proved that $\sigma_{h,z}[c] \not\mdz \eta_h^\reals \left[\psi_h\right]$, so $\sigma_{h,z}[c] \not\mdz \eta_h^\reals \left[\nonsi_h\right]$ as well.
Hence, $\nonsi_h$ is not c.u.s.
\end{example}

\subsection{Other Work on the Relations between Dense and Discrete Time} \label{sec:other-related}
The introduction of the notion of \emph{digitization} has spawned much derivative work, where the notion is applied to various formalisms.
Several authors considered digitization for automata-based real-time formalisms, especially timed automata \cite{BER94,Bos99,MP95,BMT99,BLN03,OW03,CLT07}.
Others studied how the decidability and complexity of standard verification problems for \emph{timed automata} (esp.~reachability) change when moving from a dense- to a discrete-time semantics, such as in \cite{GPV94,KP05}.
Asarin, Maler, and Pnueli \cite{AMP98} investigated instead to what extent \emph{qualitative} behavior of digital circuits (which can in turn be modeled as timed automata \cite{MP95}) is preserved in a sampled discrete-time semantics.
The focus of all these works is to determine to what extent the computationally simpler discrete-time semantics can be substituted for the dense-time semantics for automated verification.

The notion of digitization has been applied also to descriptive notations, such as real-time temporal logics and process algebras.
In the latter category, Ouaknine studies digitization for \emph{timed CSP} \cite{Oua02}; his main contribution is the proof that all CSP are closed under inverse digitization, hence they can be model-checked over dense time by considering just their discrete-time semantics.

Among temporal logics, the digitization of \emph{duration calculus} (DC) and its variants has been studied in several works.
Van Hung and Giang consider standard duration calculus and a slight generalization of digitization called \emph{sampling} \cite{HG96}.
Their work is focused on providing inference rules that allow one to infer the validity of dense-time formulas from the validity of sampled discrete-time formulas and \emph{vice versa}.
Another similarity with our approach is that they consider $\delta$-stability: a constraint similar to non-Berkeleyness that relates the ``speed'' of signals and the sampling period $\delta$.
Unlike non-Berkeleyness $\delta$-stability is asymmetric, in that whenever a proposition switches to true it must hold its truth value for more than $\delta$ time units, but it is not required to do so when it switches to false.

Pandya et al.~also have applied the notion of digitization to DC, with the aim of developing efficient dense-time verification techniques based on discretization.
Their overall approach consists of two parts, and it has been shown to be applicable to MTL as well \cite{Pandya-personal}.
In the first part \cite{CP03}, the notion of digitization has been applied to IDL (Interval Duration Logic) a DC variant whose formulas are interpreted over timed words.
Given that a syntactic characterization of closure under inverse digitization for IDL formulas is hard to achieve, a new notion of \emph{strong closure under inverse digitization} (SCID) is introduced.
SCID eases the problem because it is straightforward to determine if an IDL formula is SCID, and SCID entails closure under inverse digitization in the standard sense.
For formulas that are not SCID, approximations of formulas are introduced.
In the second part \cite{PNL07}, the richer semantics of DC (based on behaviors) is reduced to the timed word semantics of IDL through two approximation mappings $\alpha^+$ and $\alpha^-$.
$\alpha^+$ and $\alpha^-$ play a role similar to our over- and under- approximations $\overap{},\underap{}$, in that $\alpha^+$ preserves non-validity and $\alpha^-$ preserves validity from the sampled to the dense-time semantics.
Unsurprisingly, the resulting verification technique is incomplete, as DC is undecidable over dense time.

De Alfaro and Manna considered the problem of discretization for the predicate temporal logic TL \cite{dAM95}.
Their results are based on the semantic notion of \emph{finite variability}: informally, a formula $\phi$ is finitely variable if, for any timed word, one can find a refined ``ground'' timed word such that any subformula of $\phi$ has a constant truth value within any interval of the refined word.
For finitely variable formulas over ground traces, the satisfaction relation of a formula $\phi$ in the dense-time semantics corresponds to that of $\Omega(\phi)$ in the discrete-time semantics (where $\Omega$ is a given translation function).
Some sufficient syntactic conditions for a formula to achieve the finite variability requirement are introduced; based on these, a methodology for dense-time verification through refinement to discrete time is proposed.

Fainekos and Pappas \cite{FP07,FP07-unpub} present a technique for testing specifications written in MITL (an MTL subset) against continuous-time signals by analyzing only discrete samplings of the signals.
Their technique shares underlying motivations and ideas with ours, although the two approaches have complementary scopes: our results bridge the gap between the dense-time non-Berkeley semantics and the discrete-time semantics for MTL, whereas Fainekos and Pappas discover concrete and practical conditions under which the continuous-time behavior of a dynamical system can be analyzed by means of its discrete-time observations.

\subsection{The Sampling Theorem} \label{sec:sampling-theorem}
The sampling theorem \cite{sampling-book} states sufficient conditions for which no information loss occurs in the digital sampling of a continuous-time signal.
A continuous-time \emph{signal} $s$ is a mapping $s: \reals \rightarrow D$ where $D$ is some --- usually dense --- codomain.
$B_s$ denotes the \emph{bandwidth} of $s$, that is its highest frequency in $s$.\footnote{The highest frequency is defined as the largest nonzero value for which the Fourier transform $F[s]$ of $s$ is non-zero.}
Using the notation of Section \ref{sec:sampling}, the \emph{sampling} of $s$ with sampling period $\delta$ is the discrete-time signal $\sigma_{\delta, 0}[s]$.
The sampling theorem states that $s$ can be perfectly reconstructed from $\sigma_{\delta, 0}[s]$ for any $\delta < 1/(2 B_s)$.

A number of similarities between this fundamental theorem of signal theory and the results of this paper are apparent.
In particular, the requirement on the relation between bandwidth and sampling period is reminiscent of the non-Berkeleyness requirement, so that the results of this paper might seem a consequence of the sampling theorem.
Our dense-time behaviors $\bsr$ can indeed be modeled as continuous-time signals over range $[0,2^{|\alphabet|}]$.
However all of them have \emph{infinite bandwidth} because of the discontinuities corresponding to transition points, regardless of whether they are non-Berkeley or not.
Hence the sampling theorem cannot strictly be applied to Boolean-valued signals.
Nonetheless, a connection between the theory of sampling and the theory of this paper exists, as we demonstrate in the following.

\begin{example}
Consider a simple unary alphabet $\{\pp\}$ and a single behavior $b$ such that $\pp$ holds over $\reals_{> 0}$ and does not hold over $\reals_{< 0}$ (we disregard the value of $\pp$ exactly at $0$).
$b$ corresponds to the signal $s:\reals \rightarrow [0,1]$ defined as $s(t) = H(t)$ where $H$ denotes the usual (Heaviside) unit step function (see Figure \ref{fig:samp-th}).
\begin{figure}[!h]
  \centering
  \includegraphics[scale=.6]{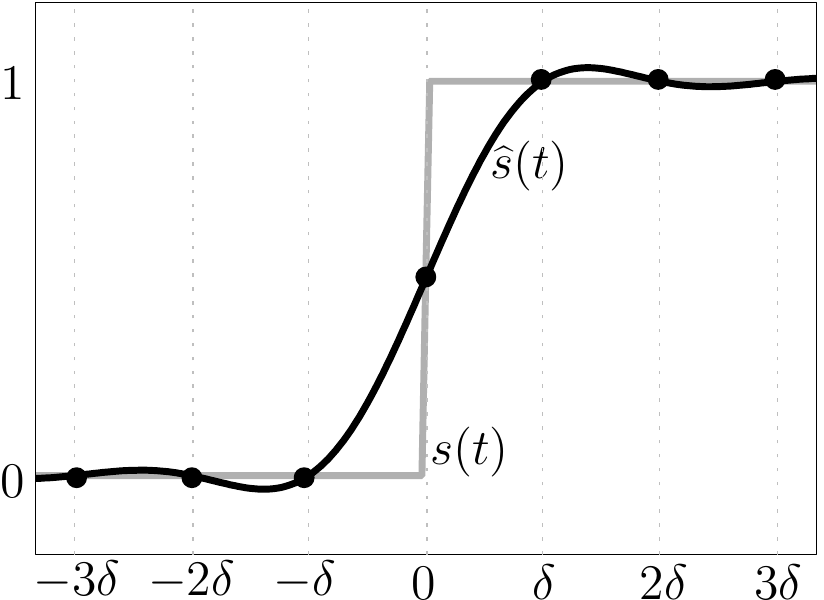}
  \caption{Signals $s(t)$ (in gray) and $\widehat{s}(t)$ (in black).}
  \label{fig:samp-th}
\end{figure}
$b$ can also be described perfectly by the MTL formula $\beta = \boxPMTL{> 0}{\neg \pp} \wedge \boxMTL{> 0}{\pp}$ evaluated at the origin.
The discrete-time MTL formula $\beta' = \adapt{\beta} = \boxPMTL{\geq 1}{\neg \pp} \wedge \boxMTL{\geq 1}{\pp}$ characterizes discrete-time samplings of $b$ according to our theory.
$\beta'$ can be seen as describing some dense-time behaviors in $\bsrd$ through their samplings: all behaviors such that $\pp$ holds over $\reals_{\geq \delta}$ and it does not hold over $\reals_{\leq -\delta}$.
Hence, the sampling has introduced an information loss in the formula about where exactly $\pp$ switches within $(-\delta, \delta)$.
If we try to reconstruct $s$ from its digital sampling according to the classical theory, we notice that we introduce a similar information loss.
In fact, let $\widehat{s}: \reals \rightarrow [0,1]$ be the continuous-time reconstruction of $\sigma_{\delta, 0}[s]$ built with the Whittaker-Shannon interpolation formula, i.e., $\widehat{s}(t) = \sum_{k \in \integers} \sigma_{\delta, 0}[s](k) \sinc((t - k\delta)/\delta)$.
As it can be seen in Figure \ref{fig:samp-th}, $\widehat{s}$ coincides almost perfectly with $s$ over $\reals_{\leq -\delta} \cup \reals_{\geq \delta}$ (the residual errors are only due to numerical approximations), whereas it deviates significantly within $(-\delta, \delta)$ due to the information loss introduced with sampling (it passes right through the origin only as a result of symmetry).
In this sense information loss for Boolean-valued signals are similar in our theory for MTL and in classical sampling theory for signals.
\end{example}


\section{Conclusion} \label{sec:conclusion}
In this paper, we presented an approach to relate dense-time MTL formulas to some discrete-time counterparts (and \emph{vice versa}).
We exploited the resulting relationship to define a technique for the verification through discretization of systems described as dense-time MTL formulas.
The verification technique is inherently incomplete, though in practice it has yielded promising results \cite{FPR08-FM08,FPR08-ICFEM08,BFPR09-SEFM09}.

In the future, we plan to apply the notion of sampling presented in this paper to the synthesis of software components of real-time systems from continuous-time specifications.
We will also further investigate the properties of the verification technique presented in Section \ref{sec:verification}, in particular to better characterize, and possibly reduce, the scope of its incompleteness.

\paragraph{Acknowledgements.}
We thank the anonymous reviewers of the ACM Transactions on Computational Logic for their detailed comments.

\newcommand{\etalchar}[1]{$^{#1}$}


\end{document}